\documentclass[12pt]{article}
\usepackage[T1]{fontenc}
\usepackage[utf8]{inputenc}
\usepackage{xcolor}
\usepackage{colortbl}
\usepackage{array}
\usepackage{booktabs}
\usepackage{mathtools}
\usepackage{enumitem}
\usepackage{multirow}
\usepackage{amsmath}
\usepackage{amsthm}
\usepackage{amssymb}
\usepackage{geometry}
\usepackage{setspace}
\usepackage[authoryear]{natbib}
\usepackage[bookmarks=false,
 breaklinks=false,pdfborder={0 0 1},backref=false,colorlinks=true]
 {hyperref}
\hypersetup{
 linkcolor=blue,citecolor=blue,urlcolor=blue}

\makeatletter

\providecommand{\tabularnewline}{\\}

\usepackage{color}
\usepackage{amsthm}
\usepackage{graphicx}

\providecommand{\tabularnewline}{\\}

\usepackage{bm}
\usepackage{threeparttable}

\newtheorem{assumption}{Assumption}

\newcommand{\E}{\mathbb{E}}
\newcommand{\Prob}{\mathbb{P}}
\newcommand{\R}{\mathbb{R}}
\newcommand{\ind}{\mathbf{1}}

\providecommand{\corollaryname}{Corollary}
\providecommand{\definitionname}{Definition}
\providecommand{\lemmaname}{Lemma}
\providecommand{\propositionname}{Proposition}
\providecommand{\remarkname}{Remark}
\providecommand{\theoremname}{Theorem}

\makeatother

\theoremstyle{remark}
\newtheorem{rem}{\protect\remarkname}
\theoremstyle{definition}
\newtheorem{defn}{\protect\definitionname}
\theoremstyle{plain}
\newtheorem{lem}{\protect\lemmaname}
\newtheorem{thm}{\protect\theoremname}
\newtheorem{prop}{\protect\propositionname}
\newtheorem{cor}{\protect\corollaryname}
\providecommand{\corollaryname}{Corollary}
\providecommand{\definitionname}{Definition}
\providecommand{\lemmaname}{Lemma}
\providecommand{\propositionname}{Proposition}
\providecommand{\remarkname}{Remark}
\providecommand{\theoremname}{Theorem}

\begin{document}
\global\long\def\P{\mathbb{P}}%

\title{Single-Network Finite-Sample Inference\\
 in Strategic Network Formation Models\thanks{We thank Giovanni Compiani, Jiaying Gu, and Lixiong Li for helpful
comments and suggestions.}}
\author{Wayne Yuan Gao\thanks{Department of Economics, University of Pennsylvania. Email: \texttt{waynegao@upenn.edu}}
\and Ming Li\thanks{Department of Economics and Risk Management Institute, National University
of Singapore. Email: \texttt{mli@nus.edu.sg}} }
\date{\today}
\maketitle
\begin{abstract}
\noindent We develop a finite-sample valid inference procedure for
strategic network formation models in which linking decisions depend
on endogenous network statistics (say, the number of common friends).
Only a single network is required to be observed, and we restrict
neither its density, nor the dependence structure induced by strategic
interaction, nor the equilibrium selection mechanism. We exploit a
\emph{bounding-by-$c$} technique to construct a set of sandwich inequalities
that are valid realization by realization, with the middle term involving
only the i.i.d. pairwise error. We then average the sandwich inequalities
over cells of exogenous covariates, and obtain identifying restrictions
under a nonstandard \emph{pathwise} \emph{limit} formulation. For
inference, we construct test statistics whose finite-sample uncertainty
can be controlled by statistics of the exogenous covariates and errors
alone, whose conditional distributions are exactly simulable in both
semiparametric and parametric settings. Our proposed inference procedure
is also computationally tractable, with \emph{no need} to solve, simulate,
or enumerate equilibrium network structures. In simulations, our procedure
easily scales to networks of size $10,000$, and yields confidence
sets that certifies the sign of the strategic coefficient. In two
empirical applications (with network size about 300$\sim$9500), we
find statistical evidence for positive link interdependence at 95\%
confidence level.

\bigskip{}
 \noindent\textbf{Keywords:} finite-sample inference, network formation,
strategic interaction, single network, network dependence, partial
identification, equilibrium multiplicity 
\end{abstract}

\newpage{}

\section{Introduction}\label{sec:intro}

The formation of social and economic networks, such as friendships,
information sharing, trade relationships, R\&D alliances, and interbank
lending, is shaped not only by the characteristics of the agents involved
but also by the agents' strategic considerations over other network
links. A link between two agents might be more attractive when they
share common friends or when it connects to well-positioned partners,
for various reasons related to information, enforcement, and coordination.
Estimating structural models that take this interdependence seriously
is an important problem in the econometrics of network formation \citep{de2020econometric}.

Taking that interdependence seriously, however, creates at least three
layers of challenges that compound each other. The first is the\emph{
equilibrium} itself. We adopt pairwise stability \citep{jackson1996strategic}
as the solution concept, which is the leading equilibrium notion in
the economics and econometrics literature on strategic network formation.
The mapping from primitives to realized pairwise stable networks is
intractable to characterize: the outcome space is combinatorially
vast,\footnote{For a standard illustration: there are $2^{435}$ possible undirected
unweighted networks on $30$ agents.} pairwise stability generically admits many equilibria at once \citep{de2020econometric},
and iterative procedures are not generally guaranteed to reach one
even when it exists \citep{jackson2002evolution}. There is thus no
tractable reduced form to invert. The second is the\emph{ dependence}
this induces: each dyad's endogenous statistic is a function of the
realized network, so every link can implicitly depend on every shock,
rendering the dependence structure global and vastly complicated.
On a\emph{ single} large network, the network dependence issue renders
standard statistics tool such as Law of Large Numbers, Central Limit
Theorem, bootstrap and subsampling highly nontrivial, whose validity
often relies on complicated and hard-to-verify weak dependence conditions
that equilibrium interdependence may destroy. The third is\emph{ computation}:
procedures built on stability conditions must simulate shocks, search
over completions of the observed network, and enumerate subnetwork
configurations up to isomorphism. The endogenous covariate aggravates
all three at once, since it must be recomputed at every candidate
parameter and every simulated network, placing an equilibrium solver\emph{
inside} the inference loop rather than beside it.

This paper, to the best of our knowledge, is the first in the network
econometrics literature to provide a \emph{finite-sample valid inference}
method on the structural parameters in a strategic network formation
model under a \emph{single network} setting, together with the identification
analysis that underlies it. The model we consider is the canonical
one, where a link between agents $i$ and $j$ forms if and only if
the latent surplus exceeds a threshold: 
\[
Y_{ij}=\ind\left\{ Z_{ij}'\beta_{0}+X_{ij}'\gamma_{0}\geq\varepsilon_{ij}\right\} ,\quad\text{for any }ij
\]
where $Z_{ij}$ collects exogenous observed dyadic covariates built
from the two agents' characteristics, $\varepsilon_{ij}$ is an idiosyncratic
shock independent of them, and $X_{ij}:=\phi_{ij}\left(Y,Z\right)$
collects \emph{endogenous} observed network statistics (e.g. the number
of common friends of $i$ and $j$, the number of 2nd-order friends),
which can be determined jointly by the equilibrium network $Y$ (as
well as the exogenous covariates $Z$). We assume that the observed
network $Y$ is pairwise stable under transferable utilities, or in
other words, is a solution to the model equation above, while we leave
entirely unrestricted the equilibrium selection mechanism. The model
above corresponds to the transferable utility setting of \citet{sheng2020structural}
and, with nonnegative externalities, of \citet{miyauchi2016structural}.
We note that our strategy is not confined to the transferable utility
setting: Remark~\ref{rem:ntu} explains how our key result extends
to non-transferable utility settings. The structural parameter is
$\theta_{0}=(\beta_{0},\gamma_{0})$, with $\gamma_{0}$ being the
key object of interest: it measures the strength of the strategic
interdependence in link formation, and the null $\gamma_{0}=0$ is
the hypothesis that there is none.\footnote{That null has itself been the object of a dedicated testing literature.
\citet{grahampelican2020testing,pelican2022optimal} construct exact
tests of no strategic interaction by conditioning on the sufficient
statistics the model admits under the null, where it collapses to
an exponential-family dyadic model. Such tests are sharp against the
no-interdependence null, but they do not invert into confidence sets
for $\gamma_{0}$. Our confidence sets deliver the no-interdependence
test as a by-product.}

The key device that enables us to handle the intractable network endogeneity
is a technique we call \emph{bounding by $c$}.\footnote{This technique was proposed and adopted in \citet{gao2026identification},
who studies sharp identification in dynamic binary choice models.
That setting is very different from the strategic network formation
model considered here. \citet{gao2026identification} incorporates
individual-level fixed effects in a panel data setting without cross-sectional
strategic interdependence, while our paper considers a one-shot strategic
network formation game without fixed effects.} A formed link certifies that its idiosyncratic shock lies below its
latent index, $\varepsilon_{ij}\leq Z_{ij}'\beta_{0}+X_{ij}'\gamma_{0}$,
and an absent link certifies the reverse. The certificate is not directly
usable, because the index contains the endogenous $X_{ij}$. However,
on the \emph{observable} (random) event that the index does not exceed
a deterministic scanning threshold, $Z_{ij}'\beta_{0}+X_{ij}'\gamma_{0}\leq c$,
transitivity of the inequalities implies $\varepsilon_{ij}\leq c$,
an event involving only the exogenous shock and the number $c$. Averaging
such ``bounding-by-$c$'' certificates within a cell of exogenous
characteristics therefore sandwiches the empirical distribution of
that cell's exogenous shocks between two \emph{observable} and \emph{endogenous}
envelopes, \emph{at every $n$ and in every realization}, before any
expectation or limit is taken. Importantly, the \emph{realization-wise}
nature of the inequalities implies their validity at every possible
equilibrium regardless of the equilibrium selection mechanism, with
the conditional distribution of the endogenous covariates $X_{ij}$
never modeled, estimated, or simulated. Identification and inference
are then two ways of using the same realization-wise \emph{bounding-by-$c$}
inequalities obtained: taking limits along the realized network sequence
gives identifying restrictions, while characterizing the finite-sample
uncertainty of an empirical average of the exogenous shocks $\epsilon$
and exogenous covariates $Z$ yields the valid non-asymptotic inference.

Specifically, for \emph{identification}, we take finite-sample averages
of the \emph{bounding-by-$c$} inequalities, and consider the limit
restrictions arising from a realized sequence of networks as $n\to\infty$.
Since we do not impose any sparsity and dependence restrictions, joint
probabilities of observable events that involve the endogenous covariates
$X$ \emph{are not guaranteed to converge} in the single large-network
limit. A key novelty of our identification result lies in that the
validity of our identifying restriction only rests on the large-$n$
convergence of an exogenous statistic involving $\varepsilon$ and
$Z$ only, and thus does not require convergence of probabilities
that involve the endogenous $X$. Consequently, our large-network
identifying restrictions are stated in terms of \emph{pathwise limit
frequencies}: the limit superior of the lower envelope and the limit
inferior of the upper envelope, each of which involves the endogenous
$X$. This departs from the standard identification template based
on (conditional) expectations or probabilities, in a way we think
is of independent theoretical interest. No observable quantity is
required to settle down to a deterministic population limit, and along
a single network sequence the observable frequencies may fluctuate
indefinitely and their limit envelopes may remain genuinely random,
capturing lack of determination from equilibrium selection or other
forms of strong global dependence.

For \emph{finite-sample inference}, the paper's central contribution,
we exploit use the same sandwich inequalities with $n$ held fixed.
Again, due to the realization-wise nature of our \emph{bounding-by-$c$}
inequalities, at the true $\theta_{0}$, the observable criterion
is bounded, realization by realization, by a dominance statistic defined
on the exogenous shocks $\varepsilon$  and the exogenous covariates
$Z$ alone. Importantly, this statistic involves no endogenous covariate
$X$ at all, and thus its distribution is not contaminated by the
interdependence generated by the equilibrium network. We provide both
semiparametric and parametric inference procedures: our key idea,
in either case, is to demonstrate the finite-sample uncertainty of
our constructed test statistics can be bounded by a control statistics,
whose \emph{exact finite-sample distribution}, and thus the corresponding
\emph{critical values}, can be simulated. We also show how to get
sharper tests (and critical values) through constrained thresholding,
studentization, and more sophisciated tail weighting schemes that
dampen the effect of restrictions with low effective sample sizes
and high uncertainty.

We demonstrate in a simulation study the performance of our single-network
confidence sets, with network sizses ranging from $100$ to $10,000$.
Our parametric confidence sets certifies the sign of the strategic
coefficient $\gamma_{0}$ in every replication from $n=400$ onward.
We also show how two semiparametric versions (one with and one without
a symmetry assumption on $F_{\varepsilon}$) of our inference procedures
still produce nontrivial sign-revealing confidence sets at reasonable
sample sizes. 

We also apply our inference procedure to two empirical data sets:
one on high-school contact networks ($n=327$) and another on Twitch
gamer networks ($n$ up to $9,498$). In both settings, we find strictly
positive (projected) confidence intervals for the strategic coefficients
at $95\%$-level.

\ 

Our paper contributes mostly directly to the econometric literature
on strategic network formation, with the following two highlights:

The first highlight is\emph{ pathwise identification} in a single
large network setting, where the formulation is as novel as the restrictions.
Identification analysis normally starts from a deterministic feature
of the observable distribution: a population probability, moment,
or probability limit that the data can recover and the model restricts.
Here no such objects need exist, and thus the valid \emph{identified
set} can still be \emph{random even in the limit}.\footnote{The randomness at issue here is not that of a confidence set. A confidence
set is random in any finite sample simply because it is a function
of the data; that randomness is standard, well understood, and disappears
in the usual asymptotics as the set settles down around a population
object. What is unusual here is that randomness may\emph{ survive}
the large-$n$ limit: absent weak-dependence conditions the observable
envelopes need not converge along the realized network sequence, so
the limiting identified set is itself a random object rather than
a feature of the population distribution.} What becomes deterministic in our setting is the limit of an exogenous
statistic, which encodes the exogeneity assumption in the model as
the identifying leverage. We call the resulting notion\emph{ pathwise
identification}, and we know of no precedent for it in the network
formation literature.

The second highlight is \emph{finite-sample inference}, which we regard
as the central contribution of this paper to the network econometrics
literature: to our knowledge these are the first computationally and
theoretically tractable, finite-sample valid confidence sets for the
structural parameters of a strategic network formation game observed
as a single large network under unknown equilibrium selection, with
no restrictions on density or on the dependence the game induces,
and with no equilibrium simulation. The tractability is structural
rather than incidental. Testing a candidate parameter requires no
equilibrium solve, no search over completions of the observed network,
and---because our restrictions are indexed by dyads rather than by
subnetwork configurations---no graph isomorphism matching. What the
criterion needs are empirical averages of indicator variables over
the dyads falling in cells of exogenous characteristics, which is
counting. The computational cost therefore roughly scales with the
number of dyads, $n(n-1)/2=O(n^{2})$, rather than with the size $2^{n(n-1)/2}$
of the graph space. The practical consequence is that network sizes
out of reach for simulation-based methods are routine here: the simulation
study runs to $n=10{,}000$ and the empirical application to $n=9{,}498$,
each returning a certified confidence set in manageable time.

\subsubsection*{Literature Review}

The closest papers in the strategic network formation literature are
\citet*{de2018identifying} and \citet{menzel2026strategic}. \citet*{de2018identifying}
study partial identification of preferences in a\emph{ single large
network} formed under pairwise stability, with non-transferable utility
and complete information. They impose two restrictions to make the
payoff-relevant environment finite: utility depends on the network
only out to a bounded distance, and all agents have a bounded number
of links. Each agent is then classified by her\emph{ network type}
( defined jointly over the graph of her neighborhood and the covariates
of the nodes in it), and parameters are constrained by the observed
distribution of network types. Their computation is carried out through
a series of quadratic programs in a continuum-of-agents formulation
rather than at a finite $n$, and matching network types requires
deciding joint covariate-and-graph isomorphism, which is computationally
hard in large finite networks. We give up sharpness and the local
combinatorics, using only the threshold-crossing structure of a single
dyad. In exchange we impose no bound on degree, none on interaction
depth, and no restriction on density. \citet{menzel2026strategic},
who studies a similar model in a single large network, also relies
on dyad-level link frequencies to deliver identification and estimation.
The first difference is that, while endogenous covariates are allowed
in \citet{menzel2026strategic}, they must take a restrictive ``node-incident''
form that rules out ``fundamentally dyadic'' network statistics
such as common friends. The methodology is also very different from
ours: he derives large-network asymptotic approximations, obtaining
a Poisson likelihood limiting model under a Type I upper tail condition
on the idiosyncratic error and unique edge response condition as well
as equilibrium selection. The results are also asymptotic approximations,
where ours are exact at the observed finite $n$.

Some related work consider similar strategic network formation models,
but focus on the many-network setting for identification and inference.
\citet{sheng2020structural} studies the same model with a complementary
strategy in a many network settting, who derive identifying restrictions
by checking pairwise stablility conditions on subnetwork structures
and resolve multiple equilibria using the 0-1 bounds a la \citet*{tamer2003incomplete}.
Her approach also requires a network isomorphism algorithm that selects
matches on specific subnetwork configurations where the bounds are
tractable to compute, and the inference theory is built for many independent
(and typically small) networks, with asymptotics established in the
number of independent networks. \citet{gualdani2021identification}
also works with many observed networks, in a complete-information
game with directed links and a spillover in the number of agents linking
to the same target. Her device is a decomposition of the formation
game into\emph{ local games} that are in equilibrium if and only if
the whole game is, which cuts the number of moment inequalities characterizing
the sharp identified set and makes the integrals entering them computable. 

There are also other routes that resolve the equilibrium multiplicity
issue rather than bracketing it. \citet{miyauchi2016structural} exploits
nonnegative externalities and the resulting lattice structure to bracket
moments between extremal equilibria, computed by simulation. \citet{mele2017dense}
and \citet{badev2021nash} model equilibrium selection directly, treating
the observed network as a draw from the stationary distribution of
a myopic adjustment process, so that estimation on a single network
satisfies mixing requirements. A separate branch weakens the informational
environment instead: with private information, realized links are
conditionally independent given types, which severs the global dependence
and permits two-step asymptotically normal estimation on one network
\citep{leung2015twostep,ridder2025twostep,comola2026estimating,wu2024twostep}.
Hence, the model, methodology, and results obtained are quite different
from ours.

In a companion paper, we further combine bounding-by-$c$ with subnetwork
differencing to accommodate strategic interdependence \emph{and} unobserved
individual fixed effects \citep*{gao2026tractable}, but the inclusion
and differencing of fixed effects imply that no node-level covariates
can be included.

\paragraph{\ }

Our paper also relates to the larger network formation literature.
Complementary to the strategic network formation literature where
our paper belongs, the dyadic network formation literature studies
models with agent unobserved heterogeneity but \emph{no} link interdependence:
see, e.g., \citet{graham2017econometric,jochmans2017semiparametric,dzemski2019empirical,gao2020,zeleneev2026identification},
as well as work that focuses on the nontransferable-utility case \citep{gao2023logical,li2026bagging,marshall2026utility}.
That conditional independence is exactly what makes those models tractable,
and it is exactly what link interdependence invalidates: our whole
difficulty is that $X_{ij}$ is a function of the entire realized
network.

There have also been theoretical advances on \emph{inference in a
single network setting}, which mostly focuses on deriving valid asymptotic
results under suitable conditions on network dependence: $\psi$-dependence
decaying in network distance \citep{kojevnikov2021limit}, approximate
neighborhood interference \citep{leung2022causal}, exchangeable-array
structure \citep{davezies2021empirical,graham2020network}, or a density
regime that bounds the strength of interdependence \citep{leung2019treatment,menzel2026strategic,chandrasekhar2025network}.
The most recent advance that delivers a central limit theorem under
a network formation model with explicit strategic interdependence
is \citet{leung2026normal}, which requires branching-process subcriticality
of the spillovers \emph{and} a sufficiently decentralized selection
mechanism. We impose no counterpart to any of these and provide finite-sample
inference results instead of asymptotic ones. That said, the corresponding
cost is that we obtain coverage rather than an asymptotic distribution,
and hence conservatism. The two approaches are therefore complementary
rather than competing. Where the conditions of \citet{leung2026normal}
do hold, their central limit theorem could in principle be used to
calibrate critical values to the actual sampling distribution of our
dyad-level statistics, buying sharpness at the informative margin
in exchange for the assumption-free validity we insist on here. We
leave that combination to future work.

More generally, our paper contributes to the literature on\emph{ incomplete
models and partially identifying inequalities.} Our restrictions are
moment inequalities, and their population treatment is the standard
one for incomplete models: bracket outcome probabilities between some-
and every-equilibrium envelopes \citep{tamer2003incomplete,beresteanu2011sharp,galichon2011set}
or model selection directly \citep{bajari2010identification}, as
surveyed by \citet{aradillas2020econometrics}. What differs here
is that we never evaluate nor invert the equilibrium correspondence,
so we make no claim to sharpness, which is plausibly intractable in
a single large network.\footnote{See more discussion about sharpness after Theorem \ref{thm:single_pop}}
The inference we build is not the asymptotic inequality inference
of \citet{andrews2013inference} and \citet{chernozhukov2013intersection},
which would require exactly the dependence theory we avoid. Methodologically,
the ``bounding-by-$c$'' device descends from the partial stationarity
approach of \citet{gao2026identification} for nonlinear dynamic panel
models, and the aggregation of inequalities by logical and monotonicity
operations relates to \citet*{gao2023logical}.

Lastly, our finite-sample inference procedure relates to finite-sample
randomization and Monte Carlo tests \citep{besag1989generalized}.
What has kept this tradition out of strategic network formation is
that the null distribution of any statistic touching $X_{ij}$ depends
on the intractable equilibrium map and the unknown selection mechanism,
so it is neither simulable nor known by symmetry. This is solved by
the bounding-by-$c$ technique, after which the finite-sample inference
validity follows with technical ease. In econometrics, the closest
work is \citet{rosen2025finite}, who obtains finite-sample conditional
inference with simulable critical values. However, their model setup
is the semiparametric binary choice model underlying the maximum score
estimator \citep{manski1975maximum,manski1985semiparametric,manski1987semiparametric},
which is a cross-sectionally i.i.d. single-agent model, with no strategic
equilibrium, multiplicity or dependence issues. In network econometrics,
exact inference mostly takes the form of\emph{ sharp-null} tests,
e.g., \citet{grahampelican2020testing,pelican2022optimal}, the two-sample
test of \citet{auerbach2022testing}, and design-based causal inference
with network interference \citep{athey2018exact,puelz2022graph}.
A recent paper by \citet{kasy2026causal} also obtains exact finite-sample
permutation tests and conservative confidence intervals a network
formation setting. However, \citet*{kasy2026causal} focuses on a
causal inference framework with known randomization of the initial
network, and requires observation of the network structure over a
panel of at least two periods; in addition their object of interest
is on causal effects induced by an experiment rather than structural
network formation parameters. Also related are \citet{kaido2025universal}
and \citet{li2026finite}, who obtain finite-sample valid, selection-robust
confidence sets for incomplete models using \emph{many independent}
observations with parametric latent variables \citep*[see also][]{epstein2016robust}.
A very recent paper from the statistics literature, \citet{yanchenko2026universal},
provides finite-sample model selection test on statistical network
formation models (such as random graph and stochastic block/communty
models) that includes neither exogenous or endogenous covariates.
Our strategic network formation setting is thus out of the scope of
all these papers.

\ 

The remainder of the paper is organized as follows. Section~\ref{sec:model}
presents the model setup, the key realization-wise bounding-by-$c$
inequalities, and the identification analysis. Section~\ref{sec:single_inf}
develops the finite-sample inference under both parametric and semiparametric
settings. Section~\ref{sec:sim} reports the simulation study and
Section~\ref{sec:application} the empirical application. All proofs
are available in Appendix~\ref{app:proofs}.


\section{Model and Identification}\label{sec:model}

\subsection{Model Setup and Assumptions}\label{subsec:setup}

Consider a set of $n$ agents indexed by $i=1,\ldots,n$, and an unweighted
network among them represented by an $n\times n$ adjacency matrix
$Y$, where $Y_{ij}=1$ if a link exists between agents $i$ and $j$,
and $Y_{ij}=0$ otherwise. We focus on undirected networks, i.e.,
$Y_{ij}=Y_{ji}$ for all $i,j$. We index distinct unordered pairs
as $ij$ with $i<j$, and refer to each such pair $ij$ as a \emph{dyad}.
Throughout the paper every sum $\sum_{ij}$ over dyads is understood
to be $\sum_{i<j}$, so that each dyad is counted exactly once.

We consider the following network formation model with link interdependence,
where a link between agents $i$ and $j$ exists if and only if the
latent surplus from the link is non-negative: 
\begin{equation}
Y_{ij}=\ind\left\{ Z_{ij}'\beta_{0}+X_{ij}'\gamma_{0}\geq\varepsilon_{ij}\right\} \label{eq:link_model}
\end{equation}
where: 
\begin{itemize}
\item $Z_{ij}=w_{n}(Z_{i},Z_{j})$ is a vector of exogenous covariates constructed
from individual-level exogenous characteristics $Z_{i}$ and $Z_{j}$
via a known function $w_{n}$, symmetric in its two arguments, which
may encode homophily, level, and interaction effects: see leading
examples of $Z_{ij}$ below; 
\item $X_{ij}$ is a vector of potentially endogenous covariates, which
may involve other links $Y_{hk}$ with $(h,k)\neq(i,j)$: we discuss
$X_{ij}$ in more detail below; 
\item $\varepsilon_{ij}$ is an idiosyncratic pairwise shock. 
\end{itemize}
We allow the covariate function $w_{n}$ to depend on the network
size $n$, for example, through the rescaling of latent positions
\citep[cf.][]{leung2019treatment}, to accommodate sparse designs
in large networks.

The structural parameters of interest are $\theta_{0}:=\left(\beta_{0},\gamma_{0}\right)\in\Theta$.
Throughout this paper, we use the shorthand notation 
\begin{equation}
\delta_{ij}:=\delta_{ij}(\theta_{0}),\qquad\delta_{ij}(\theta):=Z^{'}_{ij}\beta+X^{'}_{ij}\gamma,
\end{equation}
so that the link formation equation becomes $Y_{ij}=\ind\{\varepsilon_{ij}\leq\delta_{ij}\}$.

The exogenous dyadic covariates $Z_{ij}$ can include a constant $1$
(under a proper location normalization in the distribution of $\epsilon$),
homophily effects in the form of component-wise absolute difference
$|Z_{i}-Z_{j}|$ for continuous variables or $\ind\{Z_{i}=Z_{j}\}$
for discrete variables, as well as level effects in the form of $Z_{i}+Z_{j}$.\footnote{The ability of including level effects of the form $Z_{i}+Z_{j}$
is one of the key flexibility in network formation models without
unobserved individual fixed effects: in the dyadic literature with
additive degree heterogeneity \citep{graham2017econometric,dzemski2019empirical,gao2020},
the level component is collinear with the fixed-effect profile and
is annihilated by every device that eliminates it, so only difference-type
effects can be included. }

A key feature of our model is the presence of the endogenous covariates
$X_{ij}$, which may arise as functions of the realized network $Y$.
Specifically, we allow: 
\begin{equation}
X_{ij}=\phi_{ij}\left(Y_{-ij},Z\right)\label{eq:endogenous_X}
\end{equation}
where $\phi_{ij}$ is a known function mapping the $ij$-excluded\footnote{\label{fn:own_link}This restriction is only required to endow model
$(\ref{eq:link_model})$ with the \emph{economic} interpretation of
``pairwise stability'', which is formulated on the utility difference
of a \emph{counterfactual} comparison between linking and not linking.
Note that $\phi_{ij}$ above can be defined as 
\[
\phi_{ij}\left(Y_{-ij},Z\right)=\tilde{\phi}_{ij}\left(\left(Y_{-ij},1\right),Z\right)-\tilde{\phi}_{ij}\left(\left(Y_{-ij},0\right),Z\right),
\]
where $\left(Y_{-ij},y_{ij}\right)$ denotes the two networks under
the two possible own-link states $y_{ij}=1$ and $y_{ij}=0$. Here,
$\tilde{\phi}_{ij}$ is a function that can depend on the whole network
structure, including own link $ij$: for example, the eigenvetor centrality
of $ij$. That said, even if $X_{ij}$ does depend on the realized
$Y_{ij}$, all the econometric inference results in our paper still
apply (under the subsequently stated assumptions), with the only caveat
being that we should no longer interpret \eqref{eq:link_model} as
a literal pairwise stability condition.} realized network $Y_{-ij}$ and exogenous characteristics $Z=(Z_{1},\ldots,Z_{n})$
to a vector of dyadic covariates. We allow $\phi_{ij}$ to be both
locally and globally defined (subject to the subtlety in Footnote
\ref{fn:own_link}). 

One canonical local network statistic is the number of common friends
of $i$ and $j$, together with its normalized version, 
\begin{equation}
\text{CF}_{ij}:=\sum_{k\neq i,j}Y_{ik}Y_{jk},\qquad\overline{\text{CF}}_{ij}:=\frac{\text{CF}_{ij}}{n-2}\,\in[0,1],\label{eq:CF}
\end{equation}
capturing transitivity: the surplus of link $ij$ may increase with
the number of shared neighbors. The normalized version is the friends-in-common
statistic of \citet{sheng2020structural}, and a nonnegative coefficient
on it delivers the nonnegative externalities exploited by \citet{miyauchi2016structural}.
It is the statistic we focus on in simulation and the empirical application.

Other leading choices of local network statistics are accommodated
without modification: for example, the Jaccard overlap index $J_{ij}:=\frac{\text{CF}_{ij}}{{\displaystyle \sum_{k\neq i,j}\ind\{Y_{ik}+Y_{jk}\geq1\}}}$
(with $J_{ij}:=0$ when the denominator vanishes), which measures
overlap \emph{relative} to the union of the two neighborhoods and
is a nonlinear, \emph{non-monotone} functional of the network; second-order
friends in the form of $\overline{\text{SF}}_{ij}:=\frac{1}{n-2}\sum_{k\neq i,j}\left(Y_{ik}+Y_{jk}\right)\in[0,2],$
and interactions with exogenous types, such as $\overline{\text{CF}}_{ij}\cdot\ind\{Z_{i}=Z_{j}\}$,
which let the strength of transitivity vary with observables.

Notably, our model also accommodates globally defined $X_{ij}$, i.e.,
those that cannot pinned down completely by the local network structure
around $ij$. Leading examples include various globally defined centrality
measures, such as eigenvector centrality and closeness centrality
(subject to the subtlety in Footnote \ref{fn:own_link}). Such globally
defined $X_{ij}$ are usually highly nonlinear and implicit aggregation
of links from the whole network $Y$, for which there exist no closed-form
representations.

\ 

Since $X_{ij}$ may depend on the realized network $Y$, while each
link in $Y$ is determined by equation~\eqref{eq:link_model} simultaneously,
model $(\ref{eq:link_model})$ can be interpreted as the equilibrium
network in a strategic network formation game under transferable utilities
with pairwise stability as the equilibrium solution concept \citep[cf.][]{jackson1996strategic,sheng2020structural}.
In general, the game may have multiple equilibria for a given realization
of primitives $(Z,\varepsilon)$, where $\varepsilon=(\varepsilon_{ij})_{i<j}$
collects all the idiosyncratic shocks. We represent the realized network
abstractly as: 
\begin{equation}
Y=g(Z,\varepsilon;\theta_{0},\xi)\label{eq:equilibrium}
\end{equation}
where $g$ incorporates both the equilibrium correspondence and an
(arbitrary and possibly unknown) equilibrium selection mechanism that
picks a single equilibrium for each realization of the primitives,
potentially depending on an unrestricted random element $\xi$ that
drives equilibrium selection.

The equilibrium mapping $g$ is generally intractable to characterize.
Even for simple specifications of $\phi_{ij}$, the fixed-point nature
of the equilibrium creates a complex interdependence structure, which
is exacerbated by the discrete combinatorial nature of the graph space.
A globally defined $\phi_{ij}$ would induce yet another layer of
complexity and global dependence. As will be shown below, our identification
approach does not require characterization, evaluation, or computation
of $g$: we extract identifying information using monotonicity restrictions
that hold regardless of the complexity of the equilibrium correspondence
and any equilibrium selection mechanisms.

We maintain the following assumptions throughout. \begin{assumption}[Model
Specification] \label{ass:model} A single network $Y:=\left(Y_{ij}\right)_{i<j}$
is observed along with individual covariates $Z:=\left(Z_{i}\right)^{n}_{i=1}$,
and $\left(Y,Z\right)$ satisfies model \eqref{eq:link_model} for
some unknown parameter $\theta_{0}$ and unobserved shock realization
$\varepsilon=(\varepsilon_{ij})_{i<j}$. \end{assumption} \begin{assumption}[Random
Sampling] \label{ass:random_sampling} The individual-level exogenous
characteristics $Z_{i}$ are independently and identically distributed
across agents $i=1,\ldots,n$. \end{assumption} \begin{assumption}[IID
Pairwise Errors] \label{ass:iid_errors} The idiosyncratic shocks
$\varepsilon_{ij}$ are independently and identically distributed
across all pairs $(i,j)$ with $i<j$, with common CDF $F_{\varepsilon}$.
\end{assumption} \begin{assumption}[Exogeneity/Independence] \label{ass:exogeneity}
The vector of exogenous characteristics $Z$ is independent of the
idiosyncratic shocks $\varepsilon$. \end{assumption} Assumptions~\ref{ass:model}-\ref{ass:exogeneity}
are standard in the strategic network formation literature. Assumption~\ref{ass:model}
essentially assumes that a pairwise stable network exists and the
observed network is pairwise stable. Assumption~\ref{ass:random_sampling}
asserts random sampling of exogenous covariates across individuals.
Assumption~\ref{ass:iid_errors} imposes the idiosyncrasy of link-level
surplus shocks but leaves their common distribution $F_{\varepsilon}$
unrestricted for now: our key identifying strategy applies to both
parametric and semiparametric settings, and we will introduce a parametric
assumption on $F_{\varepsilon}$ when it is explicitly required. Assumption~\ref{ass:exogeneity}
is an exogeneity condition on the observables $Z$ and the pairwise
error $\varepsilon$. Importantly, we do \emph{not} assume that $X$
and $\varepsilon$ are independent: since $X$ is a function of the
equilibrium network it is endogenous and generally correlated with
the shocks $\varepsilon$. This assumption also formalizes the naming
convention of calling $Z$ the exogenous covariates while calling
$X$ the endogenous covariates.

Assumption~\ref{ass:model} also stresses that the data environment
we consider is a \emph{single network} observed once. The alternative
environment of \emph{many independent networks} (villages, schools,
classrooms, markets) is simpler in the sense that conditional probabilities
of observable events given the agents' exogenous characteristics are
population moments that are directly identified and consistently estimable
across networks with no restriction on within-network dependence.
Our restrictions also apply to that case, and see Remark~\ref{rem:many}
for more discussion. That said, the single network environment is
what our procedures are mainly designed for and the one we focus on
in the main text.

\subsection{The \textquotedblleft Bounding-by-$c$\textquotedblright{} Event Inequalities
}\label{subsec:bbc}

This subsection develops the identification strategy at the level
of \emph{events}: every statement holds realization by realization,
i.e., for every value of the primitives, every equilibrium consistent
with them, and every selection among equilibria. No expectation, limit,
or probability model enters. These realization-wise inequalities are
the cleanest foundation for the identifying restrictions of Section~\ref{sec:single}
as well as the finite-sample inference theory of Section~\ref{sec:single_inf}.

Under model \eqref{eq:link_model}, each link satisfies the threshold-crossing
representation $Y_{ij}=\ind\{\varepsilon_{ij}\leq\delta_{ij}\}$ with
$\delta_{ij}=\delta_{ij}(\theta_{0})=Z_{ij}'\beta_{0}+X_{ij}'\gamma_{0}$.
The essential difficulty is that the threshold $\delta_{ij}$ contains
the endogenous covariate $X_{ij}$, which is co-determined with the
entire shock vector $\varepsilon$ through the equilibrium: the event
$\{\varepsilon_{ij}\leq\delta_{ij}\}$ couples the shock to an endogenous,
equilibrium-determined random threshold, and its probability depends
on the intractable equilibrium mapping $g$ and on the unknown selection
mechanism.

The ``bounding-by-$c$'' idea is to trade the endogenous threshold
for a deterministic one, which we now explain. Fix any constant $c\in\R$
and consider the event 
\[
Y_{ij}=1\quad\text{and}\quad\delta_{ij}\leq c.
\]
Given that $\left\{ Y_{ij}=1\right\} =\{\varepsilon_{ij}\leq\delta_{ij}\}$,
together with the bounding-by-c event $\left\{ \delta_{ij}\leq c\right\} $,
we can deduce from simple transitivity of the two inequalities that
$\varepsilon_{ij}\leq c$, an event that involves only the exogenous
shock and the deterministic number $c$. Similarly, a flipped event
\[
Y_{ij}=0\quad\text{and}\quad\delta_{ij}\geq c
\]
implies $\varepsilon_{ij}>\delta_{ij}\geq c$ and thus $\varepsilon_{ij}>c$.
Formally, for any $c\in\R$, we have 
\begin{align}
Y_{ij}\ind\left\{ \delta_{ij}\leq c\right\} \  & \leq\ \ind\left\{ \varepsilon_{ij}\leq c\right\} \leq1-\left(1-Y_{ij}\right)\ind\left\{ \delta_{ij}\geq c\right\} .\label{eq:dyad_inclusion}
\end{align}

Three features of \eqref{eq:dyad_inclusion} deserve emphasis, because
they carry the entire methodological weight of the paper.

First, the event inequalities \eqref{eq:dyad_inclusion} are valid
algebraically for every realization of the primitives $(Z,\varepsilon)$,
the equilibrium network $Y$, and the induced network statistics $X$.
Nothing about the equilibrium correspondence, its multiplicity, or
the selection mechanism is used beyond model \eqref{eq:link_model}
itself. This is why the resulting restrictions below are robust to
equilibrium multiplicity and arbitrary, unknown selection.

Second, \eqref{eq:dyad_inclusion}\emph{ can be aggregated} under
any nonnegative weights or more generally, any weakly increasing transformation:
averaging over the dyads of a covariate cell the cornerstone of the
inference procedures of Section~\ref{sec:single_inf}), summing signed
certificates across the links of a configuration (Remark~\ref{rem:subnet}),
and taking conditional expectations in any environment where they
are identified (Remark~\ref{rem:many}). In every such aggregate
the endogeneity of $X_{ij}$ is immaterial: no conditional distribution
of $X_{ij}$ given $Z$ is ever modeled, estimated, or simulated---the
endogenous covariate enters only through the observable indicator
$\ind\{\delta_{ij}(\theta)\leq c\}$.

Third, the test constant $c$\emph{ generates a one-dimensional family
of restrictions.} Each $c$ contributes one lower and one upper bound
on $\ind\left\{ \varepsilon_{ij}\leq c\right\} $, whose probability
is exactly $F_{\varepsilon}\left(c\right)$. Hence, sweeping $c$
over $\R$ is effectively a way to trace out an envelope for the entire
error CDF $F_{\varepsilon}\left(c\right)$. The identifying content
for $\theta$ comes from the interplay between the $c$-family, the
aggregation across dyads, and the conditioning on the dyad's exogenous
types $Z_{D}:=(Z_{i},Z_{j})$, which we will explain in the next subsection.

\ 

Above we presented bounding-by-$c$ inequalities for dyad $ij$ under
transferable utility, which is the case we carry through the paper
for expositional simplicity. The technique is tied to neither restriction:
it applies to general subnetwork configurations and to nontransferable
utility as well, as the next two remarks illustrate. While the identifying
restrictions of Section~\ref{sec:single} and the inference procedures
of Section~\ref{sec:single_inf} are plausibly adaptable to each
setting, we leave these extensions to future work.
\begin{rem}[General Subnetwork Configurations]
\label{rem:subnet} To illustrate, consider a triad $ijk$. Bounding
by $c$ applies to any signed combination of its links, giving inequalities
such as 
\begin{align*}
Y_{ij}Y_{ik}\ind\left\{ \delta_{ij}+\delta_{ik}\leq c\right\}  & \leq\ind\left\{ \varepsilon_{ij}+\varepsilon_{ik}\leq c\right\} \\
Y_{ij}\left(1-Y_{ik}\right)\ind\left\{ \delta_{ij}-\delta_{ik}\leq c\right\}  & \leq\ind\left\{ \varepsilon_{ij}-\varepsilon_{ik}\leq c\right\} \\
Y_{ij}Y_{ik}\left(1-Y_{jk}\right)\ind\left\{ \delta_{ij}+\delta_{ik}-\delta_{jk}\leq c\right\}  & \leq\ind\left\{ \varepsilon_{ij}+\varepsilon_{ik}-\varepsilon_{jk}\leq c\right\} .
\end{align*}
and their ``flipped'' counterparts. In each case the right-hand
side is free of endogenous covariates: it is a threshold event in
a signed sum of shocks, whose distribution is a known signed convolution
of $F_{\varepsilon}$. These are illustrations rather than an exhaustive
list, and analogous inequalities can be derived for any subnetwork
configuration. Unlike subnetwork approaches built on stability conditions,
no matching of graph isomorphism is required: each configuration contributes
a closed-form counting restriction, and the researcher may use as
few or as many as the data and computational resources support. 
\end{rem}
\begin{rem}[Non-Transferable Utility]
\label{rem:ntu} Relatedly, the technique applies to the nontransferable-utility
model 
\begin{equation}
Y_{ij}=\ind\left\{ Z_{ij}'\beta_{0}+X_{ij}'\gamma_{0}\geq\varepsilon_{ij}\right\} \ind\left\{ Z_{ji}'\beta_{0}+X_{ji}'\gamma_{0}\geq\varepsilon_{ji}\right\} \label{eq:link_model-NTU}
\end{equation}
in which $\left(Z_{ij},X_{ij},\varepsilon_{ij}\right)$ may be\emph{
asymmetric} across $ij$ and $ji$, so that the undirected link $Y_{ij}\equiv Y_{ji}$
requires bilateral consent. The following inequalities then hold:
\begin{align*}
Y_{ij}\ind\left\{ \delta_{ij}\leq c\right\}  & \leq\ind\left\{ \varepsilon_{ij}\leq c\right\} \\
Y_{ij}\ind\left\{ \delta_{ji}\leq c\right\}  & \leq\ind\left\{ \varepsilon_{ji}\leq c\right\} \\
Y_{ij}\ind\left\{ \delta_{ij}\leq c_{1}\right\} \ind\left\{ \delta_{ji}\leq c_{2}\right\}  & \leq\ind\left\{ \varepsilon_{ij}\leq c_{1}\right\} \ind\left\{ \varepsilon_{ji}\leq c_{2}\right\} \\
\left(1-Y_{ij}\right)\ind\left\{ \delta_{ij}>c\right\} \ind\left\{ \delta_{ji}>c\right\}  & \leq1-\ind\left\{ \varepsilon_{ij}\leq c\right\} \ind\left\{ \varepsilon_{ji}\leq c\right\} 
\end{align*}
and many more can be derived from other subnetwork structures. The
contrast between the first three inequalities and the last is the
substantive feature of this environment: a formed link certifies both
directional shocks and is therefore attributable to each margin separately,
whereas an absent link certifies only that at least one margin refused,
and so constrains the two jointly. 
\end{rem}

\subsection{Identifying Restrictions under Large-Network Asymptotics}\label{sec:single}

``Standard identification analysis'' usually requires converting
inequalities on random variables into deterministic population (or
asymptotic) restrictions after finite-sample randomness of the model
is averaged out. To do so in a single large network setting, one might
expect that we would need delicate weak-dependence conditions. The
main message of this subsection is that no such conditions are needed
for the \emph{validity} of our identifying restrictions. This is because
the bounding-by-$c$ inequalities hold realization by realization
on the observed network, and the only unknown they sandwich is the
distribution of a \emph{purely exogenous} quantity, whose empirical
counterpart converges in the large-network limit by classical limit
theory for exchangeable arrays, regardless of what the equilibrium
correspondence and the selection mechanism do.

\ 

Formally, we consider the following single large-netwok asymptotics.
We place all primitives on a single probability space: an infinite
i.i.d.\ sequence $(Z_{i})_{i\geq1}$, an infinite array $(\varepsilon_{ij})_{ij}$
of i.i.d.\ shocks independent of $Z$, an auxiliary randomization
variable $U_{\mathrm{sel}}$ independent of $(Z,\varepsilon)$ (allowing
randomized equilibrium selection), and, for each $n$, a realized
network $Y^{(n)}=g_{n}(Z^{(n)},\varepsilon^{(n)},U_{\mathrm{sel}};\theta_{0})$
formed from the first $n$ agents by the ($n$-dependent) equilibrium-and-selection
mapping, as in \eqref{eq:equilibrium}. We write $\P$ for the joint
law of $\big((Z_{i})_{i\geq1},(\varepsilon_{ij})_{ij},U_{\mathrm{sel}}\big)$.
The equilibrium correspondence and the selection mechanism embodied
in $g_{n}$ may change arbitrarily with $n$ and may depend on all
primitives.

We now explain how we derive the large-network identifying restrictions.

\subsubsection*{The Finite-Sample Sandwich}

We start by working with finite-sample conditional average of the
bounding-by-$c$ inequalities derived in Section \ref{subsec:bbc}.

We first define a discrete dyad-type random variable $Z_{D,ij}$ below,
which allows us to cover discrete, continuous and mixed $Z_{i}$ in
a unified manner. 
\begin{defn}[Discretized Dyad Types]
\label{def:disc} Let $\left({\cal Z}_{D,1},...,{\cal Z}_{D,\kappa}\right)$
be an arbitrary finite, \emph{symmetric}\footnote{Symmetric in the sense that $\left(z,z^{\prime}\right)\in{\cal Z}_{D,k}$
if and only if $\left(z^{\prime},z\right)\in{\cal Z}_{D,k}$} partition of $\text{Supp}\left(Z_{i},Z_{j}\right)\equiv\text{Supp}\left(Z_{i}\right)^{2}$
such that $\P\left(\left(Z_{i},Z_{j}\right)\in{\cal Z}_{D,\kappa}\right)>0$
for each $k=1,...,\kappa$. We define the discretized dyad type $Z_{D,ij}$
as 
\[
Z_{D,ij}:=z_{D,k}\quad\text{if }\left(Z_{i},Z_{j}\right)\in{\cal Z}_{D,k}
\]
where $z_{D,k}$ are $\kappa$ arbitrary cell labels. We write ${\cal Z}_{D}:=\left\{ z_{D,1},...,z_{D,\kappa}\right\} $
for the support of $Z_{D,ij}$. 
\end{defn}
When $Z_{i}$ has continuous or mixed support, the above device allows
us to proceed with a discretized dyad type, which will result in \emph{no
loss of validity} as shown below.\footnote{We work throughout with the discretized dyad type because it yields
the cleanest conditional-mean formulation: the cells partition the
dyads exactly, so the conditioning is exact and distinct cells hold
disjoint sets of shocks. What discretization costs is informational
coarsening, not validity: the restrictions hold cell by cell for any
partition, so a coarser partition simply delivers fewer of them. For
continuous $Z_{i}$, we could in principle use a finer conditioning
device by multiplying the pointwise inequalities with a nonnegative
kernel, at the cost of additional regularity conditions for the limiting
statements. In finite samples, however, the trade-off often runs the
other way, and coarsening can be actively desirable. Cell counts govern
how tightly the exogenous middle term concentrates, so in sparse networks,
where fine cells hold few dyads, a coarser partition buys sharper
control of that term and hence a smaller critical value (See Section
\ref{sec:single_inf} on critical values in finite-sample inference). } When $Z_{i}$ has finite support, one can always take the partition
to be each support point in $\text{Supp}\left(Z_{i},Z_{j}\right)$.
That said, in computation, one may still want to aggregate support
points in $\text{Supp}\left(Z_{i},Z_{j}\right)$ into coarser cells
even when $Z_{i}$ is discrete.

We start by fixing a candidate parameter $\theta$, a threshold $c$,
and a generic dyad type $z_{D}$ in the support of $Z_{D,ij}$. Define
the realized cell frequencies 
\begin{align}
\hat{p}^{\,(n)}_{L}(z_{D},c;\theta) & :=\frac{\sum_{ij}Y_{ij}\ind\{\delta_{ij}(\theta)\leq c\}\ind\{Z_{D,ij}=z_{D}\}}{\sum_{ij}\ind\{Z_{D,ij}=z_{D}\}},\label{eq:phatL}\\
\hat{p}^{\,(n)}_{U}(z_{D},c;\theta) & :=1-\frac{\sum_{ij}(1-Y_{ij})\ind\{\delta_{ij}(\theta)\geq c\}\ind\{Z_{D,ij}=z_{D}\}}{\sum_{ij}\ind\{Z_{D,ij}=z_{D}\}},\label{eq:phatU}
\end{align}
together with the \emph{exogenous middle term} 
\begin{equation}
\hat{F}_{n}(c;z_{D}):=\frac{\sum_{ij}\ind\{\varepsilon_{ij}\leq c\}\ind\{Z_{D,ij}=z_{D}\}}{\sum_{ij}\ind\{Z_{D,ij}=z_{D}\}}.\label{eq:Fhat}
\end{equation}
All three objects are defined on the event that $\sum_{ij}\ind\{Z_{D,ij}=z_{D}\}\geq1$,
which occurs for all large $n$ almost surely since $\P\left(Z_{D,ij}=z_{D}\right)=\P\left(\left(Z_{i},Z_{j}\right)\in{\cal Z}_{k}\right)>0$.
We emphasize that $\delta_{ij}(\theta)=Z^{'}_{ij}\beta+X^{'}_{ij}\gamma$
is still defined with the raw exogenous covariates $Z_{ij}$, \emph{not}
recomputed based on the discretized $Z_{D,ij}$.\footnote{We are therefore not ``discretizing the model'' in any way. If $Z_{i}$
is continuous in the model, then $\delta_{ij}(\theta)$ continues
to vary continuously with $Z_{ij}$, and $\beta$ remains the coefficient
on the raw covariate. The discretized type $Z_{D,ij}$ enters only
as the conditioning variable through which dyads are aggregated into
the cell averages above.} The frequencies $\hat{p}^{\,(n)}_{L}$ and $\hat{p}^{\,(n)}_{U}$
are computable from observed data $(Y,X,Z)$,\footnote{The upper frequency \eqref{eq:phatU} scans the strict index event
$\{\delta_{ij}(\theta)>c\}$. The weak version $\{\delta_{ij}(\theta)\geq c\}$
is equally valid---an unformed link certifies $\varepsilon_{ij}>\delta_{ij}(\theta_{0})$
strictly, so $\delta_{ij}(\theta_{0})\geq c$ still delivers $\varepsilon_{ij}>c$---and
is weakly tighter. We maintain the strict convention throughout.} while the middle term $\hat{F}_{n}$ involves the unobserved shocks
and serves purely as a theoretical bridge and is never computed.

The per-dyad inclusions \eqref{eq:dyad_inclusion} hold \emph{pointwise}---for
every realization of the primitives, every equilibrium, and every
selection mechanism. Averaging within a cell therefore yields the
\emph{finite-network oracle sandwich}: for every $n$ and \emph{every
realization}, 
\begin{equation}
\hat{p}^{\,(n)}_{L}(z_{D},c;\theta_{0})\ \leq\ \hat{F}_{n}(c;z_{D})\ \leq\ \hat{p}^{\,(n)}_{U}(z_{D},c;\theta_{0})\qquad\text{for all }c\in\R\label{eq:empirical_sandwich}
\end{equation}
at every $z_{D}$ such that $\sum_{ij}\ind\{Z_{D,ij}=z_{D}\}\geq1$.
The sandwich inequality \eqref{eq:empirical_sandwich} is stronger
than a population moment inequality: it holds \emph{before} any expectation
is taken, and under arbitrary dependence in the realized network.
All equilibrium complications are confined to the two observable outer
terms, while the middle term contains only the independent $Z$ and
$\varepsilon$.

\subsubsection*{A Uniform Strong LLN for the Exogenous Middle Term}

What drives our large-network identification result is the convergence
of the middle term $\hat{F}_{n}(c;z_{D})$ as $n\to\infty$. 
\begin{lem}[Uniform LLN for the Exogenous Middle Term]
\label{lem:exo_middle} Suppose Assumptions~\ref{ass:random_sampling}--\ref{ass:exogeneity}
hold. Then 
\[
\max_{z_{D}\in\mathcal{Z}_{D}}\ \sup_{c\in\R}\ \big|\hat{F}_{n}(c;z_{D})-F_{\varepsilon}(c)\big|\ \longrightarrow\ 0\qquad\P\text{-almost surely.}
\]
\end{lem}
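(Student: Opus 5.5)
The plan is to condition on the whole covariate sequence $(Z_i)_{i\ge1}$ and exploit that, given $Z$, the shocks $\varepsilon_{ij}$ remain i.i.d.\ with CDF $F_\varepsilon$ by Assumptions~\ref{ass:iid_errors}--\ref{ass:exogeneity}. Fix a cell $z_D$ and write $N_n(z_D):=\sum_{ij}\ind\{Z_{D,ij}=z_D\}$. Conditional on $Z$, the set of dyads in the cell is fixed, and $\hat F_n(\cdot;z_D)$ is the empirical CDF of $N_n(z_D)$ i.i.d.\ draws from $F_\varepsilon$. The index set changes with $n$, but it is nested: it grows by adding the dyads $\{i,n+1\}$ that fall in the cell. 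Hence, conditional on $Z$, $\hat F_n(\cdot;z_D)$ is the empirical CDF of the first $N_n(z_D)$ terms of a single i.i.d.\ sequence, obtained by enumerating the cell's dyads in order of appearance.

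The first step is to show $N_n(z_D)\to\infty$ almost surely. By Assumption~\ref{ass:random_sampling}, $N_n(z_D)/\binom{n}{2}$ is a U-statistic with bounded kernel $\ind\{(Z_i,Z_j)\in\mathcal Z_{D,k}\}$. Hoeffding's strong law for U-statistics gives convergence to $\P((Z_1,Z_2)\in\mathcal Z_{D,k})>0$ almost surely, so $N_n(z_D)\to\infty$ a.s. A cruder argument also suffices here, since only divergence is needed, but the U-statistic SLLN is the clean route.

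The second step applies Glivenko--Cantelli to the conditionally i.i.d.\ sequence. For almost every $Z$-path with $N_n\to\infty$, the conditional law of the re-indexed shock sequence is i.i.d.\ $F_\varepsilon$, and it does not depend on $Z$. Glivenko--Cantelli therefore gives $\sup_c|\hat F^{\ast}_m-F_\varepsilon|\to0$ a.s.\ as $m\to\infty$, where $\hat F^\ast_m$ is the empirical CDF of the first $m$ re-indexed draws. Since $\hat F_n(\cdot;z_D)=\hat F^\ast_{N_n(z_D)}$ and $N_n\to\infty$ is nondecreasing, the conclusion transfers along the subsequence.

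The step I expect to be the main obstacle is the measurability and conditioning bookkeeping. The re-indexing map depends on $Z$, so I would verify, via independence of $Z$ and $\varepsilon$ and Fubini, that the re-indexed sequence is i.i.d.\ $F_\varepsilon$ unconditionally and independent of $Z$. The almost-sure GC event then has full joint probability, and intersecting it with the event $\{N_n\to\infty\}$ from the first step gives a full-probability event.

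Finally, since $\mathcal Z_D$ is finite, the maximum over $\kappa$ cells of quantities each tending to zero a.s.\ tends to zero a.s. Note that $U_{\mathrm{sel}}$ and the equilibrium selection play no role, because $\hat F_n$ involves only $(Z,\varepsilon)$.
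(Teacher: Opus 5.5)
Your argument is correct, but it takes a different route from the paper's. The paper never conditions on $Z$. It treats the numerator $\hat A_{n,z_D}(c)$, its left-limit version, and the denominator $\hat m(z_D)/\binom{n}{2}$ as $U$-statistics of the jointly exchangeable, dissociated array $W_{ij}=(Z_i,Z_j,\varepsilon_{ij})$. It applies the Eagleson--Weber strong law to get a.s.\ convergence of the ratio (and of its left limit) at each fixed $c$. It then upgrades to uniformity in $c$ by a hand-built Glivenko--Cantelli bracketing over the quantile points $c_j=\inf\{c:F_\varepsilon(c)\geq j/M\}$, and finally takes the finite maximum over cells.

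You instead exploit the coupling: there is one infinite shock array, and the cells are defined on the raw $(Z_i,Z_j)$, so they do not change with $n$. Each cell's dyad set is therefore nested in $n$. Given $Z$, the middle term is then literally the empirical CDF of the first $N_n(z_D)$ terms of a single i.i.d.\ $F_\varepsilon$ sequence. The classical Glivenko--Cantelli theorem does all the uniformity work. The only input you need about $Z$ is $N_n(z_D)\to\infty$, and your cruder Borel--Cantelli argument on the disjoint pairs $(Z_1,Z_2),(Z_3,Z_4),\dots$ already suffices for that.

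Your route has three advantages:
\begin{itemize}
\item It is more elementary, since it needs no exchangeable-array limit theory.
\item It delivers a conditional-on-$Z$ statement: convergence holds for a.e.\ shock array along a.e.\ $Z$-path on which every cell receives infinitely many dyads. Assumption~\ref{ass:random_sampling} is then used only to make cell counts diverge.
\item It matches the conditional-on-$Z$ reasoning the paper itself uses in the proofs of Theorem~\ref{thm:semi_CS} and Proposition~\ref{prop:cdkw}.
\end{itemize}

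The paper's numerator/denominator template, on the other hand, does not rely on the conditioning being an exact indicator partition. It would therefore carry over to the kernel-weighted conditioning for continuous $Z_i$ mentioned in the paper's footnote. There the ratio is a weighted empirical CDF, and your reduction to a subsample of an i.i.d.\ sequence is not available.
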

The proof (Appendix~\ref{app:proofs}) is classical: the numerator
and denominator of $\hat{F}_{n}$ are $U$-statistics of the jointly
exchangeable, dissociated array formed by the i.i.d.\ types $(Z_{i})$
and the i.i.d.\ dyadic shocks $(\varepsilon_{ij})$, so the strong
law of large numbers for exchangeable arrays applies. The equilibrium
and the endogenous $X_{ij}$ play no role in this convergence.

\subsubsection*{Pathwise Identified Set}

Combining the oracle sandwich with the uniform strong law fixes the
profiling step. Define, for each $(z_{D},c,\theta)$, the (possibly
random) limit frequencies 
\begin{equation}
p^{\infty}_{L}(z_{D},c;\theta):=\limsup_{n\to\infty}\ \hat{p}^{\,(n)}_{L}(z_{D},c;\theta),\qquad p^{\infty}_{U}(z_{D},c;\theta):=\liminf_{n\to\infty}\ \hat{p}^{\,(n)}_{U}(z_{D},c;\theta),\label{eq:pLU_single}
\end{equation}
which always exist. Crucially, the inequality $\limsup_{n}\hat{p}^{\,(n)}_{L}\leq\liminf_{n}\hat{p}^{\,(n)}_{U}$
does \emph{not} follow from \eqref{eq:empirical_sandwich} alone,
but it does follow once the middle term is known to \emph{converge}:
by Lemma~\ref{lem:exo_middle}, almost surely, 
\begin{align*}
\limsup_{n}\hat{p}^{\,(n)}_{L}(z_{D},c;\theta_{0})\  & \leq\ \limsup_{n}\hat{F}_{n}(c;z_{D})\ =\ F_{\varepsilon}(c)\\
 & =\ \liminf_{n}\hat{F}_{n}(c;z_{D})\ \leq\ \liminf_{n}\hat{p}^{\,(n)}_{U}(z_{D},c;\theta_{0}).
\end{align*}
This yields the population version of single-network validity. 
\begin{thm}
\label{thm:single_pop} Suppose Assumptions~\ref{ass:model}--\ref{ass:exogeneity}
hold. Then, $\P$-almost surely, 
\[
p^{\infty}_{L}(z_{D},c;\theta_{0})\ \leq\ F_{\varepsilon}(c)\ \leq\ p^{\mathrm{\infty}}_{U}(z_{D},c;\theta_{0})\qquad\text{for all }c\in\R\text{ and all dyad types }z_{D}.
\]
Consequently, almost surely: 
\begin{itemize}
\item (i) Semiparametric identified set: $\theta_{0}\in\Theta^{\infty}_{I}$,
where 
\begin{equation}
\Theta^{\infty}_{I}:=\big\{\theta:\ \sup_{z_{D}}p^{\mathrm{\infty}}_{L}(z_{D},c;\theta)\leq\inf_{z_{D}}p^{\mathrm{\infty}}_{U}(z_{D},c;\theta)\ \text{for all }c\in\R\big\};\label{eq:semi_ID}
\end{equation}
\item (ii) Parametric identified set: if in addition $F_{\varepsilon}$
belongs to a known parametric family $F_{\varepsilon}(\,\cdot\,;\theta_{\varepsilon})$,
then the stacked true parameter $\overline{\theta}_{0}=(\theta_{0},\theta_{\varepsilon,0})$
satisfies
\begin{equation}
\sup_{z_{D}}p^{\mathrm{\infty}}_{L}(z_{D},c;\theta_{0})\ \leq\ F_{\varepsilon}(c;\theta_{\varepsilon,0})\ \leq\ \inf_{z_{D}}p^{\infty}_{U}(z_{D},c;\theta_{0})\qquad\text{for all }c\in\R.\label{eq:para_ID}
\end{equation}
\end{itemize}
\end{thm}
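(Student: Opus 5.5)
The plan is to combine the realization-wise oracle sandwich \eqref{eq:empirical_sandwich} with Lemma~\ref{lem:exo_middle} on a single probability-one event, and then read off (i) and (ii) by elementary manipulations of sup and inf.

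\emph{The good event.} Let $\Omega_{0}$ be the event on which the uniform convergence in Lemma~\ref{lem:exo_middle} holds. Intersect it with the event that every cell $z_{D}\in\mathcal{Z}_{D}$ is eventually nonempty, i.e.\ $\sum_{ij}\ind\{Z_{D,ij}=z_{D}\}\geq1$ for all large $n$. Since $\P((Z_{i},Z_{j})\in{\cal Z}_{D,k})>0$ and the $Z_{i}$ are i.i.d., the second event has probability one. As $\mathcal{Z}_{D}$ is finite, the intersection $\Omega_{1}$ still has probability one. Crucially, the exceptional null set must not depend on $c$. This is exactly why we invoke the \emph{uniform} (sup over $c$) version of Lemma~\ref{lem:exo_middle}, rather than a pointwise LLN over countably many $c$.

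\emph{The main inequality.} Fix $\omega\in\Omega_{1}$, a cell $z_{D}$, and any $c\in\R$. For all large $n$ the sandwich \eqref{eq:empirical_sandwich} holds at $\theta_{0}$. This step uses Assumption~\ref{ass:model}, so that \eqref{eq:dyad_inclusion} holds dyad by dyad for whatever equilibrium and selection $g_{n}$ produced $Y^{(n)}$. Taking $\limsup$ on the left inequality and $\liminf$ on the right, and using $\hat F_{n}(c;z_{D})\to F_{\varepsilon}(c)$ on $\Omega_{1}$, gives
\[
p^{\infty}_{L}(z_{D},c;\theta_{0})\leq\limsup_{n}\hat F_{n}(c;z_{D})=F_{\varepsilon}(c)=\liminf_{n}\hat F_{n}(c;z_{D})\leq p^{\infty}_{U}(z_{D},c;\theta_{0}).
\]
Both limits exist in $[0,1]$, since all terms are frequencies in $[0,1]$. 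The only place care is needed is this step: the $\limsup/\liminf$ of the endogenous envelopes need not converge, and the comparison $\limsup\hat p_{L}\leq\liminf\hat p_{U}$ is obtained only through the convergent middle term. I expect no genuine obstacle beyond making sure the null set is chosen uniformly over $c$ and the finitely many $z_{D}$.

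\emph{Consequences.} Because the displayed inequality holds for every $z_{D}$ with the same middle value $F_{\varepsilon}(c)$, taking $\max$ over the finite set $\mathcal{Z}_{D}$ on the left and $\min$ on the right yields
\[
\sup_{z_{D}}p^{\infty}_{L}(z_{D},c;\theta_{0})\leq F_{\varepsilon}(c)\leq\inf_{z_{D}}p^{\infty}_{U}(z_{D},c;\theta_{0})
\]
for all $c$, on $\Omega_{1}$. Dropping the middle term gives $\theta_{0}\in\Theta^{\infty}_{I}$, which is (i). Under the parametric assumption, $F_{\varepsilon}(\cdot)=F_{\varepsilon}(\cdot;\theta_{\varepsilon,0})$, and the same display is exactly \eqref{eq:para_ID}, which is (ii).
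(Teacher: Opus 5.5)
Your proposal is correct and follows essentially the same route as the paper: take $\limsup$/$\liminf$ of the realization-wise sandwich \eqref{eq:empirical_sandwich}, use the almost-sure convergence of the exogenous middle term from Lemma~\ref{lem:exo_middle}, and then take the max/min over the finitely many cells to obtain (i) and (ii). Your extra care is a welcome tightening of details the paper leaves implicit. You fix a single null set that is uniform in $c$ and $z_{D}$ via the uniform version of the lemma, and you check that each cell is eventually nonempty.
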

We refer to $\Theta^{\mathrm{\infty}}_{I}$ (or its parametric counterpart)
as the \emph{pathwise identified set}: it is defined along the realized
network sequence.

\ 

We now provide some discussion about Theorem \ref{thm:single_pop}.

First, we clarify that we use the phrase ``identified set'' \emph{without}
claiming sharpness, and there is a substantive reason to doubt sharp
characterizations are feasible in this class. Sharp identified sets
in games with multiple equilibria are often characterized by, in effect,
solving the model: for each candidate parameter and each realization
of the latent primitives, enumerating the full set of equilibrium
outcomes and checking whether some selection reproduces the distribution
of the observables. Such sharp identification results have been formalized
through random-set and optimal-transport methods by \citet{beresteanu2011sharp}
and \citet{galichon2011set}. Here in the specific context of strategic
network formation, the requirement of solving or inverting the pairwise-stability
correspondence over a combinatorial outcome space is intractable in
both theoretical and practical senses, which is exactly the motivation
behind this paper. The value of the bounding-by-$c$ restrictions
is that they extract equilibrium-robust identifying content \emph{without
ever solving the game}, and the finite-sample inference of Section~\ref{sec:single_inf}
is built on their realization-wise structure rather than on sharpness.

Second, Theorem \ref{thm:single_pop} is a\emph{ validity} statement
rather than a characterization of a deterministic population quantity.
Without weak-dependence conditions, $\hat{p}^{\,(n)}_{L}$ and $\hat{p}^{\,(n)}_{U}$
need not converge: global interdependence, or an equilibrium selection
that oscillates along the sequence, can keep them fluctuating, in
which case $p^{\infty}_{L}$ and $p^{\infty}_{U}$, as well as the
identified set $\Theta^{\infty}_{I}$ can remain \emph{random}. This
thus marks a departure from the standard identification analysis,
and we regard it as the right one in a single large network setting.
Conventional identification analysis, point- or set-valued, rests
upon a deterministic feature of the observable distribution of data:
a population conditional probability or moment, or its probability
limit, which the data can in principle recover and the model restricts.
In our context no such object needs to exist. What is deterministically
pinned down in the limit (or population) is instead the\emph{ middle}
term: the empirical distribution of the exogenous shocks $\hat{F}_{n}$
converges to $F_{\varepsilon}$ along almost every path, and all identifying
restrictions come from sandwiching this convergent middle term between
two possibly random observable bounds. The situation is reminiscent
of laws of large numbers without ergodicity in time series settings:
for time-series processes, limits of intertemporal sample means are
in general random variables in the form of conditional expectations
given the invariant or exchangeable $\sigma$-field rather than constants
\citep[cf.][]{eagleson1978limit,davezies2021empirical}, and inference
can still proceed conditionally on the realized time-series path.
In the same spirit, identification here is articulated\emph{ pathwise},
along the realized network sequence, without taking any stand on whether
population limits of observable frequencies exist, or whether there
exists a well-defined limit network model which all finite-$n$ models
converge to.
\begin{rem}[The Many-Network Environment in One Paragraph]
\label{rem:many} When the researcher observes $K$ independent networks
drawn from a common data-generating process, everything above simplifies
and the limit frequencies \eqref{eq:pLU_single} are replaced by ordinary
population moments. Taking conditional expectations directly in \eqref{eq:dyad_inclusion}
yields 
\[
p_{L}(z_{D},c;\theta_{0})\ \leq\ F_{\varepsilon}(c)\ \leq\ p_{U}(z_{D},c;\theta_{0})\quad\text{for every }c\text{ and every dyad type }z_{D},
\]
with 
\begin{align*}
p_{L} & (z_{D},c;\theta):=\E[Y_{ij}\ind\{\delta_{ij}(\theta)\leq c\}\mid Z_{D,ij}=z_{D}],\\
p_{U} & (z_{D},c;\theta):=1-\E[(1-Y_{ij})\ind\{\delta_{ij}(\theta)>c\}\mid Z_{D,ij}=z_{D}],
\end{align*}
based on which we can obtain a many-network analog of Theorem \ref{thm:single_pop}.
These moments are directly identified and consistently estimable as
$K\to\infty$ by the ordinary law of large numbers across independent
networks, with \emph{no} restriction on within-network dependence.
Here, $p_{L}$ and $p_{U}$ are understood as conditional probabilities
that are deterministic at each given $z_{D}$, which is ``standard''
in identification analysis. 
\end{rem}
\begin{rem}[Location and Scale Normalization]
\label{rem:normalization} As standard in network formation (or more
generally binary outcome) models, our modelrestricts the pair $(\theta_{0},F_{\varepsilon})$
only up to location and scale normalization.\footnote{Adding a constant $a$ to every index and replacing $\varepsilon$
by $\varepsilon+a$ leaves the link equation \eqref{eq:link_model}
unchanged, and so does multiplying the index and the shock by a common
$b>0$.} When $F_{\varepsilon}$ is specified parametrically as, say, normal
or logistic distribution, it is without loss of generality to set
$F_{\varepsilon}$ as the standard normal or the standard logistic.
When $F_{\varepsilon}$ is left nonparametric, we normalize the constant
to zero and fix the scale of one nonzero coefficient, say $|\beta_{1}|=1$. 
\end{rem}
\begin{rem}[Shape Restrictions on $F_{\varepsilon}$]
Because the middle term of the sandwich is exactly $F_{\varepsilon}$
itself, nonparametric shape restrictions can be layered onto it directly
and tractably: one simply requires the two observable envelopes to
be consistent with some CDF in the restricted class. This is a further
advantage of the bounding-by-$c$ approach: the argument delivers
the error CDF as the very object being sandwiched, so global shape
restrictions on $F_{\varepsilon}$ can be naturally attached to this
middle term. Symmetry about an unknown center and log-concavity are
the leading examples, and both are satisfied by most parametric families
used in practice. Each such restriction weakly tightens the set in
exchange for a stronger maintained assumption, interpolating between
the semiparametric criterion, which profiles $F_{\varepsilon}$ out
entirely, and the parametric one, which fixes it as an assumption.
See Appendix~\ref{app:sym} for more detail about how symmetry restrictions
can be formally incorporated without further functional form assumptions. 
\end{rem}

\section{Finite-Sample Inference in a Single Network}

\label{sec:single_inf}

The pathwise identification in Section~\ref{sec:single} is a large-network
result along a coupled sequence of networks as $n\to\infty$. This
section develops the finite-sample counterpart, which leads to a valid
inference procedure that constitutes the paper's central contribution.

\subsection{Semiparametric Confidence Sets}

\label{subsec:semi_cs}

We start with the semiparametric case, which turns out easier to present
and helps convey the conceptual core of our inference approach.

The semiparametric identifying restriction \eqref{eq:semi_ID} in
Theorem~\ref{thm:single_pop} can be encoded by the extremum criterion
\[
Q^{\infty}(\theta):=\sup_{c\in\R}\Big[\max_{z_{D}}p^{\infty}_{L}(z_{D},c;\theta)-\min_{z_{D}}p^{\infty}_{U}(z_{D},c;\theta)\Big]_{+},
\]
which satisfies $Q^{\infty}(\theta_{0})=0$ almost surely. We define
its finite-sample analogue by 
\begin{equation}
\widehat{Q}_{n}(\theta):=\sup_{c\in\R}\Big[\max_{z_{D}\in{\cal \hat{Z}}_{D}}\hat{p}^{\,(n)}_{L}(z_{D},c;\theta)-\min_{z_{D}\in{\cal \hat{Z}}_{D}}\hat{p}^{\,(n)}_{U}(z_{D},c;\theta)\Big]_{+}\label{eq:Qhat_np}
\end{equation}
where $\hat{{\cal Z}}_{D}$ is the family of cells with at least $\underline{m}$
observations, i.e., 
\[
\hat{{\cal Z}}_{D}:=\big\{ z_{D}\in{\cal Z}_{D}:\hat{m}(z_{D})\geq\underline{m}\big\},\qquad\hat{m}(z_{D}):=\sum_{i<j}\ind\{Z_{D,ij}=z_{D}\},
\]
for a pre-chosen $\underline{m}\geq1$.\footnote{Taking $\underline{m}=1$ is the minimal requirement for the conditional
averages $\hat{p}^{\,(n)}_{L}$, $\hat{p}^{\,(n)}_{U}$ and $\hat{F}_{n}$
to be well-defined. Any larger floor is equally valid, and is often
preferable in practice. A larger $\underline{m}$ keeps only cells
with enough dyads to carry little finite-sample uncertainty, which
lowers the critical value. The effect on the confidence set is a genuine
trade-off rather than a free improvement: both $\widehat{Q}_{n}$
and the critical value are computed over the same family $\hat{{\cal Z}}_{D}$,
so raising $\underline{m}$ discards restrictions and lowers the test
statistic as well. In sparse networks with many thin cells the first
effect typically dominates, which is why we impose a moderate floor
rather than $\underline{m}=1$. The validity of using a pre-chosen
floor $\underline{m}$ follows again from the realization-wise validity
of the bounding-by-$c$ inequalities.} Also, even though $\sup_{c\in\R}$ appears to be a supremum taken
over the whole real line, in finite sample it is actually computable
with finitely many evaluations only. This is because, at each fixed
$\theta$, the maps $c\mapsto\hat{p}^{\,(n)}_{L}(z_{D},c;\theta)$
and $c\mapsto\hat{p}^{\,(n)}_{U}(z_{D},c;\theta)$ are step functions
whose jumps happen exactly at the finitely many realized index values
in $\{\delta_{ij}(\theta):i<j\}$. Hence, $\widehat{Q}_{n}$ is \emph{exactly}
computable at each candidate $\theta$.

To control the finite-sample uncertainty in $\widehat{Q}_{n}(\theta_{0})$,
we use the ``control statistic'' 
\begin{equation}
R_{n}:=\sup_{c\in\R}\Big[\max_{z_{D}\in{\cal \hat{Z}}_{D}}\hat{F}_{n}(c;z_{D})-\min_{z_{D}\in{\cal \hat{Z}}_{D}}\hat{F}_{n}(c;z_{D})\Big],\label{eq:Rn_range}
\end{equation}
which dominates $\widehat{Q}_{n}(\theta_{0})$ as shown in the following
lemma. 
\begin{lem}[Semiparametric Uncertainty Control]
\label{lem:dom_semi} For every $n$, 
\[
\widehat{Q}_{n}(\theta_{0})\ \leq\ R_{n}.
\]
\end{lem}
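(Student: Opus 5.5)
The argument is a direct, realization-by-realization consequence of the finite-network oracle sandwich \eqref{eq:empirical_sandwich}. No probability is involved. Fix $n$ and a realization, and assume $\hat{{\cal Z}}_{D}$ is nonempty. If it were empty, both sides would involve an empty $\max$ and $\min$; we adopt the convention that both statistics are then zero, so the claim is trivial.

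Every $z_{D}\in\hat{{\cal Z}}_{D}$ satisfies $\hat{m}(z_{D})\geq\underline{m}\geq1$. Hence \eqref{eq:empirical_sandwich} applies at $\theta_{0}$ for every such cell and every $c\in\R$, giving
\[
\hat{p}^{\,(n)}_{L}(z_{D},c;\theta_{0})\leq\hat{F}_{n}(c;z_{D})\leq\hat{p}^{\,(n)}_{U}(z_{D},c;\theta_{0}).
\]
One point needs checking: the paper's displayed definition \eqref{eq:phatU} uses $\ind\{\delta_{ij}\geq c\}$, while the footnote mentions the strict version. The pointwise inclusion \eqref{eq:dyad_inclusion} holds under either convention, because an absent link gives $\varepsilon_{ij}>\delta_{ij}(\theta_{0})$. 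So the sandwich is valid whichever is used.

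Now fix $c$. Taking the maximum over $z_{D}\in\hat{{\cal Z}}_{D}$ of the left inequality gives
\[
\max_{z_{D}}\hat{p}^{\,(n)}_{L}(z_{D},c;\theta_{0})\leq\max_{z_{D}}\hat{F}_{n}(c;z_{D}).
\]
Taking the minimum of the right inequality gives
\[
\min_{z_{D}}\hat{F}_{n}(c;z_{D})\leq\min_{z_{D}}\hat{p}^{\,(n)}_{U}(z_{D},c;\theta_{0}).
\]
Subtracting the second from the first yields
\[
\max_{z_{D}}\hat{p}^{\,(n)}_{L}-\min_{z_{D}}\hat{p}^{\,(n)}_{U}\leq\max_{z_{D}}\hat{F}_{n}(c;z_{D})-\min_{z_{D}}\hat{F}_{n}(c;z_{D}).
\]
The right-hand side is nonnegative, since a maximum over a nonempty set is at least the minimum. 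Because $x\leq y$ with $y\geq0$ implies $[x]_{+}\leq y$, the positive part of the left-hand side is also bounded by the same quantity.

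Finally, take $\sup_{c\in\R}$ on both sides. Since $\sup_{c}$ preserves pointwise inequalities, this gives $\widehat{Q}_{n}(\theta_{0})\leq R_{n}$.

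There is no real obstacle. The only care needed is that both statistics are computed over the same data-dependent family $\hat{{\cal Z}}_{D}$. This causes no difficulty: $\hat{{\cal Z}}_{D}$ is determined by $Z$ alone, and the sandwich holds cell by cell for every cell that is nonempty.
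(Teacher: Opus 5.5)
Your proof is correct and matches the paper's argument step for step. Apply the sandwich \eqref{eq:empirical_sandwich} cell by cell at $\theta_{0}$, take the maximum of the lower inequality and the minimum of the upper one over $\hat{\mathcal{Z}}_{D}$, subtract, and use the nonnegativity of the cross-cell range to pass through the positive part and the supremum over $c$. Your remarks on the empty-family convention and on the weak versus strict index event in $\hat{p}^{\,(n)}_{U}$ are harmless additions that the paper leaves implicit.
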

At a null hypothesis $\theta_{0}$, the statistic $\widehat{Q}_{n}(\theta_{0})$
on the left is observable and may carry arbitrary network endogeneity
and dependence, while the oracle statistic $R_{n}$ on the right contains
no $Y$ or $X$, and only involves the exogenous $\varepsilon$ and
$Z$. Any probability statement about $R_{n}$ based on the independence
assumption between $\varepsilon$ and $Z$ therefore translates directly
into coverage for the truth $\theta_{0}$.

\ 

A key, and somewhat surprising, property of the control statistic
$R_{n}$ is that its conditional distribution given $Z$ can be simulated
\emph{without the knowledge of} $F_{\varepsilon}$.

We describe our proposed inference procedure in the following. For
$b=1,\dots,B$ with $B$ sufficiently large, draw $\hat{m}(z_{D})$
i.i.d. uniform random variables on $[0,1]$ for each $z_{D}\in\hat{{\cal Z}}_{D}$,
and independently across $b$ and across cells. Let $G^{(b)}_{z_{D}}$
be the empirical distribution function of cell $z_{D}$, and set 
\[
R^{(b)}:=\sup_{u\in[0,1]}\Big[\max_{z_{D}\in\hat{{\cal Z}}_{D}}G^{(b)}_{z_{D}}(u)-\min_{z_{D}\in\hat{{\cal Z}}_{D}}G^{(b)}_{z_{D}}(u)\Big].
\]
Let $q^{\,R}_{n,1-\alpha}(Z)$ be the $(1-\alpha)$-th quantile\footnote{\label{fn:finite-B}In fact, we can deliver \emph{exact finite-$B$
coverage guarantee} by using the critical value $\hat{q}^{\,R}_{n,1-\alpha}(Z)$,
defined as the $\lceil(1-\alpha)(B+1)\rceil$-th smallest value of
$R^{(1)},\dots,R^{(B)}$. Such choice of $\hat{q}^{\,R}_{n,1-\alpha}(Z)$
would guarantee that randomness from the finite-$B$ simulation is
also properly accounted for in the confidence set construction. That
said, since $B$ can be made arbitrarily large, in the main text we
abstract away from finite-$B$ sampling error.} of $R^{(b)}$, which we will use as the critical value. We then construct
the semiparametric confidence set as 
\begin{equation}
\widehat{{\cal C}}^{\mathrm{semi}}_{n}(1-\alpha):=\big\{\theta:\ \widehat{Q}_{n}(\theta)\leq q^{\,R}_{n,1-\alpha}(Z)\big\}.\label{eq:CS_semi}
\end{equation}
Note that the inference procedure uses only the cell sizes $\hat{m}(z_{D})$,
\emph{without} the need of drawing $\varepsilon$ from the unknown
distribution $F_{\varepsilon}$ or simulating any network formation
process. 
\begin{thm}[Semiparametric Confidence Set]
\label{thm:semi_CS} Suppose Assumptions~\ref{ass:model}--\ref{ass:exogeneity}
hold. Then, for every $n$, 
\[
\Prob\big(\theta_{0}\in\widehat{{\cal C}}^{\mathrm{semi}}_{n}(1-\alpha)\mid Z\big)\geq1-\alpha,
\]
and hence the coverage guarantee also holds unconditionally. 
\end{thm}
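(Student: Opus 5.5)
The plan is to chain Lemma~\ref{lem:dom_semi} with an exact distributional identity for the control statistic $R_n$ conditional on $Z$. By Lemma~\ref{lem:dom_semi}, $\widehat{Q}_n(\theta_0)\le R_n$ holds realization by realization. Hence
\[
\{R_n\le q^{\,R}_{n,1-\alpha}(Z)\}\subseteq\{\theta_0\in\widehat{\mathcal C}^{\mathrm{semi}}_n(1-\alpha)\}.
\]
The theorem therefore reduces to showing that $\P(R_n\le q^{\,R}_{n,1-\alpha}(Z)\mid Z)\ge 1-\alpha$. Both $\hat{\mathcal Z}_D$ and the cell sizes $\hat m(z_D)$ are functions of $Z$ alone, so the critical value is $Z$-measurable. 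It then suffices to show that the conditional law of $R_n$ given $Z$ is stochastically dominated by (ideally equal to) the law of $R^{(b)}$ constructed from the cell sizes.

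Next I characterize the conditional law of $R_n$ given $Z$. By Assumptions~\ref{ass:iid_errors} and \ref{ass:exogeneity}, conditional on $Z$ the shocks $(\varepsilon_{ij})_{i<j}$ remain i.i.d.\ with CDF $F_\varepsilon$. The cells $\{ij: Z_{D,ij}=z_D\}$ are $Z$-measurable and disjoint, so the cell empirical CDFs $\hat F_n(\cdot;z_D)$, $z_D\in\hat{\mathcal Z}_D$, are conditionally independent. Each is the empirical CDF of $\hat m(z_D)$ i.i.d.\ draws from $F_\varepsilon$. I then apply the quantile transform: write $\varepsilon_{ij}=F_\varepsilon^{-1}(U_{ij})$ with $U_{ij}$ i.i.d.\ uniform and independent of $Z$ (enlarging the probability space if needed, which does not affect the law of $R_n$). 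Since $F_\varepsilon^{-1}(u)\le c$ iff $u\le F_\varepsilon(c)$, we get $\hat F_n(c;z_D)=G_{z_D}(F_\varepsilon(c))$, where $G_{z_D}$ is the uniform empirical CDF of cell $z_D$. Therefore
\[
R_n=\sup_{c\in\R}\Big[\max_{z_D} G_{z_D}(F_\varepsilon(c))-\min_{z_D}G_{z_D}(F_\varepsilon(c))\Big]\le \sup_{u\in[0,1]}\Big[\max_{z_D} G_{z_D}(u)-\min_{z_D}G_{z_D}(u)\Big]=:R^{*}.
\]
The inequality holds because the range of $F_\varepsilon$ is a subset of $[0,1]$, with equality when $F_\varepsilon$ is continuous. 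Conditional on $Z$, $R^*$ has exactly the law of $R^{(b)}$, so $R_n$ is conditionally stochastically dominated by $R^{(b)}$. This argument never uses $F_\varepsilon$ beyond its existence, which is why the simulation is distribution-free.

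Combining these steps gives $\P(R_n\le q^{\,R}_{n,1-\alpha}(Z)\mid Z)\ge \P(R^*\le q^{\,R}_{n,1-\alpha}(Z)\mid Z)\ge 1-\alpha$, by the definition of $q^{\,R}_{n,1-\alpha}(Z)$ as a $(1-\alpha)$-quantile of the conditional law of $R^*$ (right-continuity of the CDF gives $\ge$). Taking expectations over $Z$ yields the unconditional statement. For finite $B$ with the order-statistic critical value of Footnote~\ref{fn:finite-B}, the same argument works via the standard Monte Carlo exchangeability argument applied to $(R^*,R^{(1)},\dots,R^{(B)})$.

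The main obstacle is the representation step under a possibly discontinuous $F_\varepsilon$. I will handle it with the generalized inverse $F_\varepsilon^{-1}(u)=\inf\{x:F_\varepsilon(x)\ge u\}$ and the Galois identity $F_\varepsilon^{-1}(u)\le c\iff u\le F_\varepsilon(c)$. I must also check that conditioning on $Z$ is legitimate for the cell-selection rule $\hat{\mathcal Z}_D$, which holds because that rule is $Z$-measurable.
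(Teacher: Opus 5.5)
Your proposal is correct and follows essentially the same route as the paper's proof. Both use Lemma~\ref{lem:dom_semi}, the generalized-inverse identity giving $\hat F_n(c;z_D)=G_{z_D}(F_\varepsilon(c))$, the bound $R_n\le R^{*}$ with $R^{*}$ distributed as $R^{(b)}$ given $Z$, and then the quantile argument (or the rank argument for finite $B$).

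The only cosmetic difference is in how the uniforms are built. The paper constructs them explicitly through the randomized distributional transform $U_{ij}=F_\varepsilon(\varepsilon_{ij}^-)+V_{ij}[F_\varepsilon(\varepsilon_{ij})-F_\varepsilon(\varepsilon_{ij}^-)]$, so that $\varepsilon_{ij}=F_\varepsilon^{-1}(U_{ij})$ holds almost surely. Your in-law representation is equally valid, because after the event inclusion you only need the conditional law of $R_n$.
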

The proof is in Appendix~\ref{app:proofs}, where the key idea is
to exploit the probability integral transform (and its discrete analog).
Conditional on $Z$, different dyad cells $z_{D}\in{\cal \hat{Z}}_{D}$
contain disjoint sets of i.i.d. dyadic shocks (Assumption \ref{ass:iid_errors}).
When $F_{\varepsilon}$ is continuous, setting $U_{ij}:=F_{\varepsilon}(\varepsilon_{ij})$
produces i.i.d. uniform random variables by the probability integral
transform, so $\hat{F}_{n}(c;z_{D})=G_{z_{D}}(F_{\varepsilon}(c))$,
and thus the conditional distribution of $R_{n}$ only depends on
the per-cell sample size $\hat{m}(z_{D})$, which can be computed
by the $R^{(b)}$ simulation. When $F_{\varepsilon}$ has atoms, the
discretized probability integral transform continues to work in delivering
validity, but would make the test more conservative than it is in
the case with a continuous $F_{\varepsilon}$.

\ 

In fact, a closed-form critical value can be derived using an analytic
bound on the tail probability of $R_{n}$: 
\begin{prop}[Closed-Form Critical Value]
\label{prop:cdkw} Suppose Assumptions~\ref{ass:model}--\ref{ass:exogeneity}
hold. Then for any $t>0$, 
\[
\Prob\big(R_{n}>t\mid Z\big)\ \leq\ \sum_{z_{D}\in\hat{{\cal Z}}_{D}}2e^{-\hat{m}(z_{D})t^{2}/2}.
\]
Consequently, let $\hat{t}_{n}(\alpha;Z)$ be a solution\footnote{A solution must exist since $\sum_{z_{D}\in\hat{{\cal Z}}_{D}}2e^{-\hat{m}(z_{D})\hat{t}^{2}_{n}/2}$
decreases continuously and strictly from $2\#\left(\hat{{\cal Z}}_{D}\right)$
to $0$ as $\hat{t}_{n}$ runs over $[0,\infty)$, so the solution
exists and is unique for every $\alpha\in(0,1)$.} to 
\begin{equation}
\sum_{z_{D}\in\hat{{\cal Z}}_{D}}2e^{-\hat{m}(z_{D})\hat{t}^{2}_{n}/2}=\alpha.\label{eq:t_equation}
\end{equation}
Then the confidence set $\{\theta:\widehat{Q}_{n}(\theta)\leq\hat{t}_{n}(\alpha;Z)\}$
has conditional coverage at least $1-\alpha$.\footnote{One can also work with the more conservative bound $\hat{t}_{n}=\sqrt{2\log\left(2\#\left(\hat{{\cal Z}}_{D}\right)/\alpha\right)/\min_{z_{D}\in\hat{{\cal Z}}_{D}}\hat{m}(z_{D})}$,
obtained by replacing every $\hat{m}(z_{D})$ by the smallest one,
which does not require solving the equation $\sum_{z_{D}\in\hat{{\cal Z}}_{D}}2e^{-\hat{m}(z_{D})\hat{t}^{2}_{n}/2}=\alpha$.} 
\end{prop}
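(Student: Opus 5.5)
The plan is to bound $R_n$ by a maximum of one-sample Kolmogorov--Smirnov deviations, apply the Dvoretzky--Kiefer--Wolfowitz inequality cell by cell, and finish with a union bound. The coverage claim then follows from Lemma~\ref{lem:dom_semi}.

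First, fix $Z$, so that the cell family $\hat{\mathcal Z}_D$ and the sizes $\hat m(z_D)$ are deterministic. For any $c$ and any two cells $z_D,z_D'$, the triangle inequality gives
\[
\hat F_n(c;z_D)-\hat F_n(c;z_D')\le |\hat F_n(c;z_D)-F_\varepsilon(c)|+|\hat F_n(c;z_D')-F_\varepsilon(c)|.
\]
Taking the max and min over cells and the sup over $c$ yields $R_n\le 2\max_{z_D\in\hat{\mathcal Z}_D} D_{z_D}$, where $D_{z_D}:=\sup_c|\hat F_n(c;z_D)-F_\varepsilon(c)|$.

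That constant $2$ is too lossy for the stated bound. Instead I would use the sharper observation that $R_n>t$ requires some $c$ and a pair of cells with $\hat F_n(c;z_D)-F_\varepsilon(c)>t/2$ or $F_\varepsilon(c)-\hat F_n(c;z_D')>t/2$. Hence
\[
\{R_n>t\}\subseteq\bigcup_{z_D}\{D_{z_D}>t/2\}.
\]

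Second, conditional on $Z$, Assumptions~\ref{ass:iid_errors}--\ref{ass:exogeneity} imply that cell $z_D$ contains exactly $\hat m(z_D)$ i.i.d.\ draws from $F_\varepsilon$. This holds because $\varepsilon$ is independent of $Z$ and the cells are determined by $Z$ alone. Massart's tight form of the DKW inequality then gives
\[
\Prob\big(D_{z_D}>s\mid Z\big)\le 2e^{-2\hat m(z_D)s^2}.
\]
This is valid for arbitrary $F_\varepsilon$, including ones with atoms, since the discrete case is dominated by the continuous one through the quantile-transform representation. With $s=t/2$ the exponent becomes $-\hat m(z_D)t^2/2$, and a union bound over $\hat{\mathcal Z}_D$ delivers the stated inequality.

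Third, for the confidence set, choose $\hat t_n$ as the unique solution of \eqref{eq:t_equation}. Existence and uniqueness follow from continuity and strict monotonicity, as noted in the footnote. Lemma~\ref{lem:dom_semi} gives $\widehat Q_n(\theta_0)\le R_n$ realization by realization, so
\[
\Prob\big(\widehat Q_n(\theta_0)>\hat t_n\mid Z\big)\le\Prob\big(R_n>\hat t_n\mid Z\big)\le\alpha.
\]
Here $\hat t_n$ is $Z$-measurable, so it can be treated as a constant under the conditioning.

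The main obstacle is the first step: getting the constant right. The naive $2\max D$ bound only gives $e^{-\hat m t^2/8}$. The reduction to one-sided deviations of size $t/2$ must be checked carefully, including the attainment issue for the supremum over $c$. This is handled by using strict inequality and step-function structure, or by applying the bound at $t-\eta$ and letting $\eta\downarrow0$.
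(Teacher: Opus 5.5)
Your argument is correct and is the paper's proof: the triangle inequality through $F_\varepsilon(c)$, Massart's DKW bound $2e^{-2\hat m(z_D)s^2}$ in each cell at $s=t/2$, a union bound over $\hat{\mathcal Z}_D$, and Lemma~\ref{lem:dom_semi} for coverage. Your ``sharper observation'' is in fact identical to the bound $R_n\le 2\max_{z_D}D_{z_D}$ you dismissed: that bound already gives $\{R_n>t\}\subseteq\bigcup_{z_D}\{D_{z_D}>t/2\}$ and hence $2e^{-2\hat m(z_D)(t/2)^2}=2e^{-\hat m(z_D)t^2/2}$, not $e^{-\hat m t^2/8}$; likewise there is no attainment issue, since $\sup_c(\cdot)>t$ already yields some $c$ at which the value exceeds $t$.
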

\begin{rem}[Convergence Rate of Critical Values]
Proposition \ref{prop:cdkw} not only provides us with valid inference
procedure, but also offers a clean theoretical about the rate at which
the critical value shrinks, since \eqref{eq:t_equation} pins down
the rate at which $\hat{t}_{n}(\alpha;Z)\to0$ as $n\to\infty$. To
see this, note that the effective sample size $\hat{m}(z_{D})$ in
cell $z_{D}$ grows at the order $O(n^{2})$, the same as the number
of dyads. In the meanwhile, the solution path $\hat{t}_{n}(\alpha;Z)$
must balance $1/\hat{m}(z_{D})\sim\hat{t}^{2}_{n}/2$ as $n\to\infty$
to solve \eqref{eq:t_equation} at each $n$. This implies that the
critical value $\hat{t}_{n}(\alpha;Z)$ shrinks at the \emph{dyadic}
rate $1/\sqrt{\hat{m}(z_{D})}\sim1/n$ (up to a log factor), which
is much faster than the rate $1/\sqrt{n}$ associated with the number
of agents $n$. This is intuitive since our critical value is entirely
built upon the ``middle term'' that involves $\varepsilon_{ij}$
only (conditional on $Z$), and thus the $O\left(n^{2}\right)$ number
of i.i.d. dyadic shocks $\varepsilon_{ij}$ naturally induce a convergence
rate of $\sqrt{1/n^{2}}=1/n$ after averaging. 
\end{rem}

\subsection{Parametric Confidence Sets}

\label{subsec:par_cs}

We now consider the parametric case, in which $F_{\varepsilon}$ is
known up to a finite-dimensional $\theta_{\varepsilon}$. Note that,
when $F_{\varepsilon}$ is taken to be normal or logistic (as commonly
adopted in the literature), it is without loss of generality to set
$F_{\varepsilon}$ to be standard normal or standard logistic as location
and scale normalization, so that no free parameter $\theta_{\varepsilon}$
is required.\footnote{When $F_{\varepsilon}$ is set to be standard normal or standard logistic,
a constant term should be included in $Z_{ij}$, and the whole vector
of $\theta_{0}$ should be treated as free-varying parameter.} That said, we keep the notation $\theta_{\varepsilon}$ for theoretical
generality.

Knowing $F_{\varepsilon}$ allows us to leverage the two-sided bounding-by-$c$
bounds separately, in a similar style as the identifying restrictions
in Theorem \ref{thm:single_pop}(ii). Specifically, we define the
finite-sample test statistic 
\begin{equation}
\widehat{Q}^{par}_{n}(\overline{\theta}):=\sup_{c\in\R}\max\Big\{\max_{z_{D}\in\hat{{\cal Z}}_{D}}\hat{p}^{\,(n)}_{L}(z_{D},c;\theta)-F_{\varepsilon}(c;\theta_{\varepsilon}),\ F_{\varepsilon}(c;\theta_{\varepsilon})-\min_{z_{D}\in\hat{{\cal Z}}_{D}}\hat{p}^{\,(n)}_{U}(z_{D},c;\theta),\ 0\Big\}.\label{eq:Qhat_p}
\end{equation}
Correspondingly, we define the oracle statistic 
\begin{equation}
R^{par}_{n}\left(\theta_{\varepsilon}\right):=\max_{z_{D}\in\hat{{\cal Z}}_{D}}\ \sup_{c\in\R}\ \big|\hat{F}_{n}(c;z_{D})-F_{\varepsilon}(c;\theta_{\varepsilon})\big|\label{eq:Rnpar}
\end{equation}
which controls the uncertainty in $\widehat{Q}^{par}_{n}(\overline{\theta})$
at truth. 
\begin{lem}[Parametric Uncertainty Control]
\label{lem:dom_par} Suppose $F_{\varepsilon}=F_{\varepsilon}(\cdot\,;\theta_{\varepsilon,0})$
is known up to $\theta_{\varepsilon}$. Then, for every $n$, 
\[
\widehat{Q}^{par}_{n}(\overline{\theta}_{0})\ \leq\ R^{par}_{n}\left(\theta_{\varepsilon,0}\right).
\]
\end{lem}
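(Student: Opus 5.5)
The plan is to apply the finite-network oracle sandwich \eqref{eq:empirical_sandwich} cell by cell, and then push it through the max/min and the supremum over $c$ by elementary monotonicity. Fix $n$ and a realization, and evaluate at the truth $\overline{\theta}_{0}=(\theta_{0},\theta_{\varepsilon,0})$, so that $F_{\varepsilon}(c;\theta_{\varepsilon,0})=F_{\varepsilon}(c)$. Every $z_{D}\in\hat{{\cal Z}}_{D}$ has $\hat{m}(z_{D})\geq\underline{m}\geq1$, so all three cell averages are well defined there. The per-dyad inclusions \eqref{eq:dyad_inclusion} then give, for every $c\in\R$ and every such $z_{D}$,
\[
\hat{p}^{\,(n)}_{L}(z_{D},c;\theta_{0})\leq\hat{F}_{n}(c;z_{D})\leq\hat{p}^{\,(n)}_{U}(z_{D},c;\theta_{0}).
\]
These inequalities hold for any equilibrium and any selection.

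Next I bound each of the three terms inside the $\max\{\cdot\}$ in \eqref{eq:Qhat_p}, for fixed $c$.

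For the first term, subtract $F_{\varepsilon}(c)$ from the lower half of the sandwich. This gives
\[
\hat{p}^{\,(n)}_{L}(z_{D},c;\theta_{0})-F_{\varepsilon}(c)\leq\hat{F}_{n}(c;z_{D})-F_{\varepsilon}(c)\leq\big|\hat{F}_{n}(c;z_{D})-F_{\varepsilon}(c)\big|.
\]
Taking the maximum over $z_{D}\in\hat{{\cal Z}}_{D}$, the first term is at most $\max_{z_{D}}|\hat{F}_{n}(c;z_{D})-F_{\varepsilon}(c)|$.

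For the second term, write $F_{\varepsilon}(c)-\min_{z_{D}}\hat{p}^{\,(n)}_{U}=\max_{z_{D}}\big(F_{\varepsilon}(c)-\hat{p}^{\,(n)}_{U}(z_{D},c;\theta_{0})\big)$. The upper half of the sandwich bounds each summand by $F_{\varepsilon}(c)-\hat{F}_{n}(c;z_{D})$, which is at most the same absolute deviation. So the second term obeys the same bound.

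The third term is $0$, which is trivially bounded by any nonnegative quantity.

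Combining the three, for every $c$ the bracketed maximum is at most
\[
\max_{z_{D}\in\hat{{\cal Z}}_{D}}\big|\hat{F}_{n}(c;z_{D})-F_{\varepsilon}(c)\big|\leq\max_{z_{D}\in\hat{{\cal Z}}_{D}}\sup_{c'\in\R}\big|\hat{F}_{n}(c';z_{D})-F_{\varepsilon}(c';\theta_{\varepsilon,0})\big|=R^{par}_{n}(\theta_{\varepsilon,0}).
\]
The right-hand side does not depend on $c$, so taking $\sup_{c\in\R}$ on the left gives $\widehat{Q}^{par}_{n}(\overline{\theta}_{0})\leq R^{par}_{n}(\theta_{\varepsilon,0})$. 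Since the argument is realization-wise, the inequality holds for every $n$ and surely.

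There is no real obstacle here. The only points needing care are that the sandwich is used only on cells in $\hat{{\cal Z}}_{D}$, and that swapping $\max_{z_{D}}$ with $\sup_{c}$ is harmless: a supremum over $c$ of a maximum over $z_{D}$ is always at most the maximum over $z_{D}$ of the supremum over $c$.
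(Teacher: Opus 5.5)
Your proof is correct and follows essentially the same route as the paper's: apply the sandwich \eqref{eq:empirical_sandwich} cell by cell at $\theta_{0}$, bound each one-sided deviation by $\big|\hat{F}_{n}(c;z_{D})-F_{\varepsilon}(c;\theta_{\varepsilon,0})\big|$, and then take maxima over $z_{D}\in\hat{\mathcal{Z}}_{D}$ and suprema over $c$. Your explicit treatment of the zero term and of the interchange of $\sup_{c}$ and $\max_{z_{D}}$ simply spells out what the paper covers with ``as before.''
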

At each null $\theta_{\varepsilon,0}$, we can again simulate the
distribution of $R^{par}_{n}\left(\theta_{\varepsilon,0}\right)$
to obtain the critical value. 

Specifically, for $b=1,\ldots,B$ with $B$ sufficiently large, draw
a full i.i.d.\ pair-shock array $\{\varepsilon^{(b)}_{ij}\}_{ij}$
from $F_{\varepsilon}(\cdot\,;\theta_{\varepsilon})$, and then compute
$\hat{F}^{(b)}_{n}(c;z_{D})$ and 
\[
R^{par,(b)}_{n}(\theta_{\varepsilon}):=\max_{z_{D}\in\hat{{\cal Z}}_{D}}\sup_{c\in\R}\big|\hat{F}^{(b)}_{n}(c;z_{D})-F_{\varepsilon}(c;\theta_{\varepsilon})\big|.
\]
Let $q_{n,1-\alpha}(Z;\theta_{\varepsilon})$ be the $(1-\alpha)$-th
quantile\footnote{As explained in Footnote \ref{fn:finite-B}, we can use the finite-$B$
adjusted $\hat{q}_{n,1-\alpha}(Z;\theta_{\varepsilon})$, defined
as the $\lceil(1-\alpha)(B+1)\rceil$-th smallest value of $R^{par,(1)}_{n},\dots,R^{par,(B)}_{n}$. } of $R^{par,(b)}_{n}(\theta_{\varepsilon})$, and construct the parametric
confidence set as 
\[
\widehat{\mathcal{C}}^{par}_{n}(1-\alpha):=\big\{\overline{\theta}:\ \widehat{Q}^{par}_{n}(\overline{\theta})\leq q_{n,1-\alpha}(Z;\theta_{\varepsilon})\big\}.
\]

\begin{thm}[Parametric Confidence Set]
\label{thm:para_CS} Suppose Assumptions~\ref{ass:model}--\ref{ass:exogeneity}
hold with $F_{\varepsilon}=F_{\varepsilon}(\cdot\,;\theta_{\varepsilon,0})$
known up to the finite-dimensional parameter $\theta_{\varepsilon,0}$.
Then 
\[
\Prob\left(\overline{\theta}_{0}\in\widehat{\mathcal{C}}^{par}_{n}(1-\alpha)\,\big|\,Z\right)\geq1-\alpha
\]
and hence the coverage guarantee also holds unconditionally. 
\end{thm}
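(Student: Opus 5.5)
The plan is to combine the realization-wise domination of Lemma~\ref{lem:dom_par} with an exact distributional identity for the oracle statistic conditional on $Z$. The argument has three steps.

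\emph{Step 1: reduce coverage to a statement about the oracle statistic.} By Lemma~\ref{lem:dom_par}, for every realization we have $\widehat{Q}^{par}_{n}(\overline{\theta}_{0})\leq R^{par}_{n}(\theta_{\varepsilon,0})$. Hence the event $\{R^{par}_{n}(\theta_{\varepsilon,0})\leq q_{n,1-\alpha}(Z;\theta_{\varepsilon,0})\}$ is contained in the event $\{\overline{\theta}_{0}\in\widehat{\mathcal{C}}^{par}_{n}(1-\alpha)\}$. This inclusion holds whatever equilibrium is selected and whatever value the selection variable takes. It therefore suffices to show
\[
\Prob\big(R^{par}_{n}(\theta_{\varepsilon,0})\leq q_{n,1-\alpha}(Z;\theta_{\varepsilon,0})\mid Z\big)\geq1-\alpha .
\]

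\emph{Step 2: identify the conditional law of the oracle statistic.} Conditional on $Z$, four objects are fixed: the cell labels $Z_{D,ij}$, the cell counts $\hat{m}(z_{D})$, the family $\hat{\mathcal{Z}}_{D}$, and the critical value $q_{n,1-\alpha}(Z;\theta_{\varepsilon,0})$. The only remaining randomness in $R^{par}_{n}(\theta_{\varepsilon,0})$ comes from the shocks. By Assumptions~\ref{ass:iid_errors} and~\ref{ass:exogeneity}, the array $(\varepsilon_{ij})_{i<j}$ given $Z$ is still i.i.d.\ with CDF $F_{\varepsilon}(\cdot;\theta_{\varepsilon,0})$. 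The oracle statistic is a fixed measurable map of $(\varepsilon_{ij})_{i<j}$ and the cell assignments; in particular it involves neither $Y$, $X$, nor the selection variable. The simulated $R^{par,(b)}_{n}(\theta_{\varepsilon,0})$ applies the same map to an independent i.i.d.\ draw from the same distribution with the same $Z$. Therefore, conditional on $Z$, the simulated draw has exactly the same distribution as $R^{par}_{n}(\theta_{\varepsilon,0})$. Consequently $q_{n,1-\alpha}(Z;\theta_{\varepsilon,0})$ is the $(1-\alpha)$ quantile of that conditional law.

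\emph{Step 3: conclude.} By the definition of a quantile, $\Prob(R\leq q_{1-\alpha})\geq1-\alpha$ holds for any random variable, including ones with atoms. Applying this to the conditional law from Step 2 gives the displayed inequality, and hence conditional coverage. Taking expectations over $Z$ then yields unconditional coverage.

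\emph{Anticipated obstacle.} The delicate point is Step 2: checking that conditioning on $Z$ leaves the shock array i.i.d.\ with the correct marginal, and that everything else in the statistic is $Z$-measurable. This includes the sup over $c$, which must be seen to be a measurable function; it is, because it reduces to a maximum over finitely many jump points. It also includes the requirement that no endogenous object enters. In the finite-$B$ version, one would instead invoke exchangeability of the $B+1$ draws, with ties handled by the $\lceil(1-\alpha)(B+1)\rceil$ rank rule, to obtain exact coverage.
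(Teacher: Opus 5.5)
Your proposal is correct and follows the paper's proof. It uses the realization-wise domination of Lemma~\ref{lem:dom_par}, then the equality in conditional law (given $Z$) of $R^{par}_{n}(\theta_{\varepsilon,0})$ and its simulated copies, which holds because the shocks are i.i.d.\ $F_{\varepsilon}(\cdot;\theta_{\varepsilon,0})$ and independent of $Z$, and finally the quantile (or finite-$B$ rank) argument from the proof of Theorem~\ref{thm:semi_CS}; as you implicitly note, drawing directly from the true parametric $F_{\varepsilon}$ removes any need for the distributional transform used in the semiparametric case.
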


\subsection{Improved Confidence Sets with Nondecreasing Aggregation}\label{subsec:gendom}

The previous subsections focus on simple finite-sample inference procedures
that are direct ``finite-sample analogues'' of the identifying restrictions
in Theorem \ref{thm:single_pop}. That said, they are not the only
valid procedures induced by the sandwich inequalities \eqref{eq:empirical_sandwich}.
Specifically, since \eqref{eq:empirical_sandwich} is a collection
of \emph{pointwise} inequalities, one for each cell $z_{D}$ and threshold
$c$, we can first transform each side of the inequalities under an
arbitrary nondecreasing transformation that preserves the inequalities,
and then construct test statistics based on the transformed inequalities.
This allows us to better control and account for the varying degrees
of uncertainties across $z_{D}$ and $c$ values, leading to a less
conservative test (and thus tighter confidence set) than the ones
constructed above.

For concreteness and simplicity, we focus on the \emph{parametric
case}. See Remark \ref{rem:improve_semi} below on how similar type
of improvements also carry over in the semiparametric case.

At a candidate $\theta$, define 
\begin{align*}
V^{+}(z_{D},c;\overline{\theta}) & :=\big[\hat{p}^{\,(n)}_{L}(z_{D},c;\theta)-F_{\varepsilon}(c;\theta_{\varepsilon})\big]_{+}, & R^{+}(z_{D},c;\theta_{\varepsilon}) & :=\big[\hat{F}_{n}(c;z_{D})-F_{\varepsilon}(c;\theta_{\varepsilon})\big]_{+},\\
V^{-}(z_{D},c;\overline{\theta}) & :=\big[F_{\varepsilon}(c;\theta_{\varepsilon})-\hat{p}^{\,(n)}_{U}(z_{D},c;\theta)\big]_{+}, & R^{-}(z_{D},c;\theta_{\varepsilon}) & :=\big[F_{\varepsilon}(c;\theta_{\varepsilon})-\hat{F}_{n}(c;z_{D})\big]_{+}.
\end{align*}
Evaluated at the truth $\overline{\theta}_{0}$, the sandwich inequalities
\eqref{eq:empirical_sandwich} translate to 
\[
V^{\pm}(z_{D},c;\overline{\theta}_{0})\ \leq\ R^{\pm}(z_{D},c;\theta_{\varepsilon})\qquad\text{for every }(z_{D},c,\pm),
\]
which we abbreviate to 
\[
V\left(\overline{\theta}_{0}\right)\leq R\left(\theta_{\varepsilon,0}\right),
\]
with $V,R$ understood as the concatenations across all $(z_{D},c,\pm)$
combinations.

Conditional on $Z$ and at the truth $\overline{\theta}_{0}$, let
$T$ be any\emph{ nondecreasing}\footnote{That is, $T\left(v\right)\leq T\left(v^{'}\right)$ whenever $v\leq v^{'}$
elementwisely.}\emph{ aggregation} function, mapping from the space of $V$ or $R$
into a scalar. Then obviously we have 
\begin{equation}
T\left(V\left(\overline{\theta}_{0}\right)\right)\ \leq\ T\left(R\left(\theta_{\varepsilon,0}\right)\right).\label{eq:gen_dom}
\end{equation}
For example, the statistics $\widehat{Q}^{par}_{n}$ and $R^{par}_{n}$
in Section \ref{subsec:par_cs} are constructed by taking the aggregation
function $T$ to be the maximum function over all $(z_{D},c,\pm)$
combinations, in which case \eqref{eq:gen_dom} specializes to Lemma
\ref{lem:dom_par}.

The above implies the following generalization of the conditional
Monte Carlo procedure from Section \ref{subsec:par_cs}.

Conditional on $Z$ and at any candidate $\overline{\theta}$, let
$T_{Z,\overline{\theta}}$ be any \emph{known} nondecreasing aggregation
as described above. Here we emphasize that $T_{Z,\overline{\theta}}$
can freely depend on $\left(Z,\overline{\theta}\right)$, since our
test will be conditional on $Z$ with $\overline{\theta}$ hypothesized
as truth under the null. As before, we simulate $B$ i.i.d. copies
$R^{(1)},\dots,R^{(B)}$ based on $Z$ and $F_{\varepsilon}\left(\,\cdot\,;\theta_{\varepsilon}\right)$,
compute $c^{\,T}_{n,1-\alpha}(Z;\overline{\theta})$ as the $(1-\alpha)$-th
quantile of the aggregated statistics $T_{Z,\overline{\theta}}(R^{(b)})$.
Then we construct the $T$-induced confidence set as 
\[
\widehat{\mathcal{C}}^{T}_{n}(1-\alpha):=\left\{ \overline{\theta}:\ T_{Z,\overline{\theta}}\left(V\left(\overline{\theta}\right)\right)\leq c^{\,T}_{n,1-\alpha}\left(Z;\overline{\theta}\right)\right\} ,
\]
whose validity is established below. 
\begin{thm}[Valid Parametric Confidence Sets under Nondecreasing Aggregation]
\label{prop:gendom} Suppose Assumptions~\ref{ass:model}--\ref{ass:exogeneity}
hold with $F_{\varepsilon}=F_{\varepsilon}(\cdot\,;\theta_{\varepsilon,0})$
known up to the finite-dimensional parameter $\theta_{\varepsilon,0}$.
Then 
\[
\Prob\left(\overline{\theta}_{0}\in\widehat{\mathcal{C}}^{T}_{n}(1-\alpha)\,\big|\,Z\right)\geq1-\alpha
\]
and hence the coverage guarantee also holds unconditionally. 
\end{thm}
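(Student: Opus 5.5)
The argument has two parts: a realization-wise domination step, and the observation that the conditional law of the dominating statistic given $Z$ is exactly what the Monte Carlo procedure simulates.

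\emph{Domination.} Fix $Z$ and work at the truth $\overline{\theta}_{0}=(\theta_{0},\theta_{\varepsilon,0})$. By the finite-network oracle sandwich \eqref{eq:empirical_sandwich}, which holds realization by realization for every equilibrium and selection, we have $\hat{p}^{\,(n)}_{L}(z_{D},c;\theta_{0})-F_{\varepsilon}(c;\theta_{\varepsilon,0})\leq\hat{F}_{n}(c;z_{D})-F_{\varepsilon}(c;\theta_{\varepsilon,0})$. Since $[\cdot]_{+}$ is nondecreasing, this gives $V^{+}\leq R^{+}$, and the symmetric argument gives $V^{-}\leq R^{-}$, for every $(z_{D},c)$ with $z_{D}\in\hat{{\cal Z}}_{D}$. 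Hence $V(\overline{\theta}_{0})\leq R(\theta_{\varepsilon,0})$ elementwise. Because $T_{Z,\overline{\theta}_{0}}$ is nondecreasing and is a deterministic function once $Z$ and the null are fixed, \eqref{eq:gen_dom} gives $T_{Z,\overline{\theta}_{0}}(V(\overline{\theta}_{0}))\leq T_{Z,\overline{\theta}_{0}}(R(\theta_{\varepsilon,0}))$ surely. It follows that
\[
\Prob\big(\overline{\theta}_{0}\notin\widehat{\mathcal{C}}^{T}_{n}(1-\alpha)\mid Z\big)\leq\Prob\big(T_{Z,\overline{\theta}_{0}}(R(\theta_{\varepsilon,0}))>c^{\,T}_{n,1-\alpha}(Z;\overline{\theta}_{0})\mid Z\big).
\]

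\emph{Identifying the conditional law.} The key step, and the main obstacle, is to show that the conditional law of $R(\theta_{\varepsilon,0})$ given $Z$ equals the law of the simulated copies $R^{(b)}$. Here $R$ is a measurable function of $(\varepsilon,Z)$ only; $Y$, $X$ and the selection variable $U_{\mathrm{sel}}$ never enter it. By Assumptions~\ref{ass:iid_errors} and \ref{ass:exogeneity}, conditional on $Z$ the array $\varepsilon$ is i.i.d.\ $F_{\varepsilon}(\cdot\,;\theta_{\varepsilon,0})$. The cell memberships $Z_{D,ij}$ and the retained family $\hat{{\cal Z}}_{D}$ are functions of $Z$ and are therefore held fixed under conditioning. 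Each simulated $R^{(b)}$ is built the same way: the cell assignments are fixed at the observed $Z$, and a fresh i.i.d.\ array is drawn from $F_{\varepsilon}(\cdot\,;\theta_{\varepsilon,0})$. So $T_{Z,\overline{\theta}_{0}}(R^{(b)})$ has exactly the conditional distribution of $T_{Z,\overline{\theta}_{0}}(R(\theta_{\varepsilon,0}))$ given $Z$. One technical point needs a sentence: the suprema over $c$ that may appear in $T$ can be reduced to countably many $c$, by right-continuity of the step functions, or exactly to the finitely many jump points, so the aggregated statistic is measurable.

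\emph{Conclusion.} Abstracting from finite-$B$ error as in the main text, $c^{\,T}_{n,1-\alpha}$ is the $(1-\alpha)$-quantile of this common conditional law, so the bound in the display above is at most $\alpha$. If finite-$B$ validity is wanted, I would use the rank $\lceil(1-\alpha)(B+1)\rceil$ of Footnote~\ref{fn:finite-B}. Conditional on $Z$, the oracle $T(R)$ and the $B$ simulated values are i.i.d., so the standard Monte Carlo exchangeability argument \citep{besag1989generalized} applies, with ties handled by the usual conservative (rank-based or randomized) convention; exact uniform tie-breaking is not needed. Taking expectations over $Z$ then gives the unconditional coverage guarantee.
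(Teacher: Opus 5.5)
Your proposal is correct and follows essentially the same route as the paper's proof. Both use the sandwich \eqref{eq:empirical_sandwich} to get $V(\overline{\theta}_{0})\leq R(\theta_{\varepsilon,0})$ coordinatewise, then monotonicity of $T_{Z,\overline{\theta}_{0}}$, then the fact that conditional on $Z$ the oracle $T_{Z,\overline{\theta}_{0}}(R)$ and its simulated copies are i.i.d., and finish with the quantile or finite-$B$ rank argument from Theorem~\ref{thm:semi_CS}. Your remarks on measurability of the suprema over $c$ and on tie handling are details the paper leaves implicit.
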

This freedom in the choice of $T$ matters because $T$ controls how
uncertainty is aggregated and weighted across $(z_{D},c,\pm)$, which
affects the distribution of $T(R(\theta_{\varepsilon}))-T\big(V(\overline{\theta}_{0})\big)$.
The maximum function used in Lemma \ref{lem:dom_par} is intuitive
from the identifying restrictions, but in finite sample it suffers
from the sensitivity of taking maximums: one thin cell at an uninformative
threshold can become a ``maximum'' by chance and affects the test
statistic or the critical value drastically. Hence, heuristically,
we should aggregate the inequalities in a way that controls the influence
of any single $(z_{D},c,\pm)$ configuration with very few data points
(and thus high uncertainty).

Concretely, we consider the following three ``obvious improvements''.
For the rest of this subsection, we suppress $\theta_{\varepsilon}$
from $F_{\varepsilon}$ and write $\theta$ instead of $\overline{\theta}$:
the results still go through at each candidate $\theta_{\varepsilon}$,
but, as discussed earlier, in practice $F_{\varepsilon}$ is often
taken to be standard normal or standard logistic, leaving no more
free $\theta_{\varepsilon}$. Since the rest of the subsection focuses
more on practical implementation, this notational simplification helps
better convey the key ideas of the improvements without unnecessary
clutter in the notation.

\ 

\noindent\emph{(i) Constrained Thresholding.} 

First, we show how to systematically trim away thresholds $c$ that
carry no identifying information about $\theta$ and would only inflate
the simulated maximum. At each $(\theta,Z)$ we take the \emph{constrained
threshold set} $C(z_{D};\theta,Z)$ to be the \emph{a priori} range
of the candidate index in cell $z_{D}$. Formally, writing ${\cal X}$
as the support of $X$, 
\[
C(z_{D};\theta,Z)=\Big[\min_{\tilde{z}_{D}\in z_{D}}w_{n}\left(\tilde{z}\right)^{'}\beta+\inf_{x\in{\cal X}}\gamma^{\prime}x,\ \ \max_{\tilde{z}_{D}\in z_{D}}w_{n}\left(\tilde{z}\right)^{'}\beta+\sup_{x\in{\cal X}}\gamma^{\prime}x\Big],
\]
where $\min_{\tilde{z}_{D}\in z_{D}}w_{n}\left(\tilde{z}\right)^{'}\beta$
is the minimum exogenous index across exogenous covariate values $\tilde{z}$
that is consistent with the discretized cell $z_{D}$.\footnote{Recall from the description after model \eqref{eq:link_model} that
the function $w_{n}$ generates the exogenous dyadic covariates from
individual characteristics, i.e., $Z_{ij}=w_{n}\left(Z_{i},Z_{j}\right)$.
And ``$\tilde{z}_{D}\in z_{D}$'' means that the discretized version
$\tilde{z}_{D}$ of the vector of individual-leve covariates $\tilde{z}$
according to Definition \ref{def:disc} belongs to the cell $z_{D}$. } The point of restricting to $C(z_{D};\theta,Z)$ is that it discards
thresholds $c$ that could not have bite.\footnote{Outside the interval $C(z_{D};\theta,Z)$, neither $\hat{p}^{\,(n)}_{L}$
nor $\hat{p}^{\,(n)}_{U}$ varies with $c$: below it every dyad in
the cell has $\delta_{ij}(\theta)>c$, so $\hat{p}^{\,(n)}_{L}=0$
and $\hat{p}^{\,(n)}_{U}=\overline{Y}(z_{D})$, the link frequency
of the cell; above it every dyad has $\delta_{ij}(\theta)\leq c$,
so $\hat{p}^{\,(n)}_{L}=\overline{Y}(z_{D})$ and $\hat{p}^{\,(n)}_{U}=1$.} That said, for $c$ outside $C(z_{D};\theta,Z)$, the control statistic
$R^{\pm}(z_{D},c;\theta_{\varepsilon})$ , which does not involves
the index, would still inflate the simulated maximum and thus the
associated critical value. Hence, for finite-sample performance it
is strictly better to throw away $c$ outside $C(z_{D};\theta,Z)$.
In many practical cases, ${\cal X}$ is a compact set, so the restriction
to $C(z_{D};\theta,Z)$ can be quite substantial.\footnote{If ${\cal X}$ is unbounded but sign-restricted, the same formula
can still help trim away unnecessary thresholds. To illustrate, for
scalar $X_{ij}$ with ${\cal X}=[0,\infty)$, we have 
\[
C(z_{D};\theta,Z)=\begin{cases}
\big[\min_{\tilde{z}_{D}\in z_{D}}w_{n}\left(\tilde{z}\right)^{'}\beta),\ +\infty\big), & \gamma>0,\\[2pt]
\big[\min_{\tilde{z}_{D}\in z_{D}}w_{n}\left(\tilde{z}\right)^{'}\beta,\ \max_{\tilde{z}_{D}\in z_{D}}w_{n}\left(\tilde{z}\right)^{'}\beta\big], & \gamma=0,\\[2pt]
\big(-\infty,\ \max_{\tilde{z}_{D}\in z_{D}}w_{n}\left(\tilde{z}\right)^{'}\beta\big], & \gamma<0.
\end{cases}
\]
}

Given the threshold set $C(z_{D};\theta,Z)\subseteq\R$, we can aggregate
restrictions only across $\left(z_{D},c\right)$ such that $c\in C(z_{D};\theta,Z)$.
To illustrate, if we aggregate by taking the maximum across $\left(z_{D},c\right)$
such that $c\in C(z_{D};\theta,Z)$, then the induced test statistic
and the uncertainty control statistic will be 
\begin{align*}
\widehat{Q}^{\mathrm{con}}_{n}(\theta) & =\max_{z_{D}\in\hat{{\cal Z}}_{D}}\ \sup_{c\in C(z_{D};\theta,Z)}\ \max\Big\{\hat{p}^{\,(n)}_{L}(z_{D},c;\theta)-F_{\varepsilon}(c),\ F_{\varepsilon}(c)-\hat{p}^{\,(n)}_{U}(z_{D},c;\theta),\ 0\Big\},\\
T^{\mathrm{con}}_{\theta}(R^{(b)}) & =\max_{z_{D}\in\hat{{\cal Z}}_{D}}\ \sup_{c\in C(z_{D};\theta,Z)}\ \big|\hat{F}^{(b)}_{n}(c;z_{D})-F_{\varepsilon}(c)\big|,
\end{align*}
which correspond to the aggregation $T$ that keeps the coordinates
with $c\in C(z_{D};\theta,Z)$ and discards the rest before taking
the maximum; multiplying each coordinate by a $(\theta,Z)$-measurable
indicator is nondecreasing, so Theorem~\ref{prop:gendom} applies.
Then, let $\hat{c}^{\,\mathrm{con}}_{n,1-\alpha}(Z;\theta)$ be the
$\left(1-\alpha\right)$ sample quantile of $\big(T^{\mathrm{con}}_{\theta}(R^{(b)})\big)^{B}_{b=1}$,
and the validity of confidence sets constructed by test inversion
is then guaranteed by Theorem \ref{prop:gendom}. That said, constrained
threshold sets need not be combined with the maximum aggregation rule:
it can be combined with the improvements below as well.

\ 

\noindent\emph{(ii) Studentization.}

Dividing each inequality by its null standard deviation equalizes
scale across cells of unequal size, so that in finite sample a $20$-dyad
cell cannot drastically influence the critical value for a $2{,}000$-dyad
one. Define 
\[
\sigma(z_{D},c):=\sqrt{\frac{F_{\varepsilon}\left(c\right)\left(1-F_{\varepsilon}\left(c\right)\right)}{\hat{m}(z_{D})}},\quad\mathcal{S}:=\Big\{(z_{D},c):F_{\varepsilon}\left(c\right)\left(1-F_{\varepsilon}\left(c\right)\right)\geq\frac{1}{\hat{m}(z_{D})}\Big\},
\]
where $\sigma(z_{D},c)$ is the standard deviation of the cell frequency
$\hat{F}_{n}(c;z_{D})$, the average of $\hat{m}(z_{D})$ i.i.d.\ Bernoulli
indicators $\ind\left\{ \varepsilon_{ij}\leq c\right\} $ with success
probability $F_{\varepsilon}(c)$. The set ${\cal S}$ selects the
subset of $(z_{D},c)$ whose null standard deviation is at least above
a threshold driven by the granularity of the effective sample size
$\hat{m}(z_{D})$ in the cell, ruling out tail thresholds $c$ (very
close to $0$ or $1$) that are too noisy to ``estimate'' given
the $\hat{m}(z_{D})$ data points. The test statistic and uncertainty
control statistic can then be defined as 
\begin{align*}
\widehat{Q}^{\mathrm{stu}}_{n}(\theta) & =\max_{z_{D}\in\hat{{\cal Z}}_{D}}\ \sup_{c:\,(z_{D},c)\in\mathcal{S}}\ \frac{\max\big\{\hat{p}^{\,(n)}_{L}(z_{D},c;\theta)-F_{\varepsilon}(c),\ F_{\varepsilon}(c)-\hat{p}^{\,(n)}_{U}(z_{D},c;\theta),\ 0\big\}}{\sigma(z_{D},c)},\\
T^{\mathrm{stu}}_{\theta}(R^{(b)}) & =\max_{z_{D}\in\hat{{\cal Z}}_{D}}\ \sup_{c:\,(z_{D},c)\in\mathcal{S}}\ \frac{\big|\hat{F}^{(b)}_{n}(c;z_{D})-F_{\varepsilon}(c)\big|}{\sigma(z_{D},c)},
\end{align*}
which are again defined by a single nondecreasing aggregation rule
$T$: divide each $(z_{D},c)$ by $\sigma(z_{D},c)$, and only sup
over $(z_{D},c)\in{\cal S}$.

\ 

\noindent\emph{(iii) Berk--Jones Weights.}

Studentization reweights $(z_{D},c)$ through division by the standard
deviation of the cell frequency, approximately equalizing the scales
of variations across $(z_{D},c)$. An alternative way to equalize
uncertainty scales across $(z_{D},c)$ is to use the Berk--Jones
score \citep{berk1979goodness}, which we describe below.

Specifically, fix a cell $z_{D}$ with $\hat{m}(z_{D})$ dyads and
a threshold $c$. Under the null the number of dyads in the cell whose
shock falls below $c$ is 
\[
N(z_{D},c):=\sum_{i<j:\,Z_{D,ij}=z_{D}}\ind\{\varepsilon_{ij}\leq c\}\ \sim\ \mathrm{Binomial}\big(\hat{m}(z_{D}),\,F_{\varepsilon}(c)\big),
\]
by Assumption~\ref{ass:iid_errors}, so the cell frequency is $\hat{F}_{n}(c;z_{D})=N(z_{D},c)/\hat{m}(z_{D})$
with expectation $F_{\varepsilon}(c)$. Large-deviation probabilities
of the form $\Prob\big(\hat{F}_{n}(c;z_{D})\geq p\big)$ can be approximated
by an exponential form 
\[
\Prob\big(\hat{F}_{n}(c;z_{D})\geq p\big)\ =\ \exp\Big\{-\hat{m}(z_{D})\,\mathrm{KL}\big(p\,\big\|\,F_{\varepsilon}(c)\big)+o\big(\hat{m}(z_{D})\big)\Big\}\qquad\text{for }p>F_{\varepsilon}(c),
\]
and symmetrically for $p<F_{\varepsilon}(c)$, where 
\[
\mathrm{KL}(p\|u):=p\log\frac{p}{u}+(1-p)\log\frac{1-p}{1-u}
\]
is the Bernoulli Kullback--Leibler divergence.\footnote{Both directions are straightforward. For the upper bound, write $N\sim\mathrm{Binomial}(m,u)$
and apply Markov's inequality to $e^{\lambda N}$: for any $\lambda>0$,
\[
\Prob(N\geq mp)\ \leq\ e^{-\lambda mp}\big(1-u+ue^{\lambda}\big)^{m}\ =\ \exp\Big\{-m\big[\lambda p-\log(1-u+ue^{\lambda})\big]\Big\},
\]
and the bracket is maximized at $e^{\lambda}=\tfrac{p(1-u)}{u(1-p)}$,
where it equals exactly $\mathrm{KL}(p\|u)$, giving $\Prob(N\geq mp)\leq e^{-m\mathrm{KL}(p\|u)}$
for every $m$ and every $p>u$. The method of types supplies the
matching lower bound $\Prob(N\geq mp)\geq(m+1)^{-1}e^{-m\mathrm{KL}(p\|u)}$
at integer $mp$, so the exponent is exact. No normal approximation
enters at any point.} Weighting the inequality at $(z_{D},c)$ by $\hat{m}(z_{D})\,\mathrm{KL}\big(\cdot\,\big\|\,F_{\varepsilon}(c)\big)$
therefore reports it on a \emph{tail-probability} scale, while the
tail is exactly where we have the most salient concern about finite-sample
uncertainty. In particular, in the tails where $F_{\varepsilon}(c)\left(1-F_{\varepsilon}(c)\right)$
is near zero the studentized ratio is unstable, while the KL divergence
grows more stably, and hence there is no need to ``throw away''
tails based on the set ${\cal S}$ as constructed in the studentization
improvement above in (ii).\footnote{A concrete illustration. Take a cell of size $m$ and let $u:=F_{\varepsilon}(c)\downarrow0$
with exactly one dyad below $c$, so that $p=1/m$. The studentized
deviation is 
\[
\frac{p-u}{\sqrt{u(1-u)/m}}\ \sim\ \frac{1}{\sqrt{mu}},
\]
which diverges at a power rate as $u\downarrow0$. In comparison,
the Berk--Jones weight 
\[
m\,\mathrm{KL}\big(\tfrac{1}{m}\big\| u\big)=\log\frac{1}{mu}+(m-1)\log\frac{1-1/m}{1-u}\ \sim\ \log\frac{1}{mu},
\]
diverges only logarithmically.}

The map $p\mapsto\mathrm{KL}\big(p\,\big\|\,F_{\varepsilon}(c)\big)$
is strictly decreasing on $[0,F_{\varepsilon}(c)]$ and strictly increasing
on $[F_{\varepsilon}(c),1]$, so weighting each side of the sandwich
only where that side is violated yields an aggregation that is nondecreasing
in the violation magnitude $V$,\footnote{Theorem~\ref{prop:gendom} requires $T$ to be nondecreasing in each
coordinate of $V$, and those coordinates are the nonnegative violations
of the form $V^{\pm}$. When $\hat{p}^{\,(n)}_{L}>F_{\varepsilon}(c)$
we may write $\hat{p}^{\,(n)}_{L}=F_{\varepsilon}(c)+V^{+}$, so the
assigned weight at that coordinate is $\hat{m}(z_{D})\,\mathrm{KL}\big(F_{\varepsilon}(c)+V^{+}\,\big\|\,F_{\varepsilon}(c)\big)$,
which is nondecreasing in $V^{+}$ and vanishes at $V^{+}=0$. Symmetrically
the weight when $\hat{p}^{\,(n)}_{U}<F_{\varepsilon}(c)$ is $\hat{m}(z_{D})\,\mathrm{KL}\big(F_{\varepsilon}(c)-V^{-}\,\big\|\,F_{\varepsilon}(c)\big)$,
which is again nondecreasing in $V^{-}$. } and therefore Theorem~\ref{prop:gendom} applies. Taking the maximum
over the constrained threshold sets of (i), 
\begin{align*}
\widehat{Q}^{\mathrm{BJ}}_{n}(\theta)=\max_{z_{D}\in\hat{{\cal Z}}_{D}}\ \sup_{c\in C(z_{D};\theta,Z)}\hat{m}(z_{D})\max & \Big\{\mathrm{KL}\big(\hat{p}^{\,(n)}_{L}\,\big\|\,F_{\varepsilon}(c)\big)\ind\big\{\hat{p}^{\,(n)}_{L}>F_{\varepsilon}(c)\big\},\\
 & \mathrm{KL}\big(\hat{p}^{\,(n)}_{U}\,\big\|\,F_{\varepsilon}(c)\big)\ind\big\{\hat{p}^{\,(n)}_{U}<F_{\varepsilon}(c)\big\}\Big\},
\end{align*}
with the arguments $(z_{D},c;\theta)$ of $\hat{p}^{\,(n)}_{L}$ and
$\hat{p}^{\,(n)}_{U}$ suppressed, and the matching control statistic\footnote{At most one of the two indicators can equal one, since $\hat{p}^{\,(n)}_{L}\leq\hat{p}^{\,(n)}_{U}$.
The same holds on the oracle side, where $\hat{F}^{(b)}_{n}(c;z_{D})$
lies on one side of $F_{\varepsilon}(c)$ and the two-sided deviation
therefore collapses to the single score $\hat{m}(z_{D})\,\mathrm{KL}\big(\hat{F}^{(b)}_{n}(c;z_{D})\,\big\|\,F_{\varepsilon}(c)\big)$.} is 
\begin{align*}
T^{\mathrm{BJ}}_{\theta}(R^{(b)}) & =\max_{z_{D}\in\hat{{\cal Z}}_{D}}\ \sup_{c\in C(z_{D};\theta,Z)}\ \hat{m}(z_{D})\,\mathrm{KL}\big(\hat{F}^{(b)}_{n}(c;z_{D})\,\big\|\,F_{\varepsilon}(c)\big),
\end{align*}
with $\hat{c}^{\,\mathrm{BJ}}_{n,1-\alpha}(Z;\theta)$ similarly defined
based on simulations of $T^{\mathrm{BJ}}_{\theta}(R^{(b)})$. This
is the constrained-threshold Berk--Jones statistic used in Sections~\ref{sec:sim}
and~\ref{sec:application}. 
\begin{rem}[Improved Confidence Sets in the Semiparametric Case]
\label{rem:improve_semi} The same core idea of this subsection can
also be used to improve the semiparametric procedure of Section~\ref{subsec:semi_cs},
once the inequalities are indexed by ordered \emph{pairs} of cells
rather than by a cell and a sign. Formally, for $z_{D},z^{\prime}_{D}\in\hat{{\cal Z}}_{D}$
and $c\in\R$, set 
\[
V(z_{D},z^{\prime}_{D},c;\theta):=\big[\hat{p}^{\,(n)}_{L}(z_{D},c;\theta)-\hat{p}^{\,(n)}_{U}(z^{\prime}_{D},c;\theta)\big]_{+},\qquad R(z_{D},z^{\prime}_{D},c):=\big[\hat{F}_{n}(c;z_{D})-\hat{F}_{n}(c;z^{\prime}_{D})\big]_{+}.
\]
At $\theta_{0}$ the sandwich~\eqref{eq:empirical_sandwich} gives
$V\leq R$ coordinate by coordinate, and the statistics~\eqref{eq:Qhat_np}
and~\eqref{eq:Rn_range} are the special case $T=\max$. Theorem~\ref{prop:gendom}
therefore goes through verbatim for any nondecreasing $T_{Z,\theta}$,
with the critical value simulated from $R^{(b)}(z_{D},z^{\prime}_{D},u):=\big[G^{(b)}_{z_{D}}(u)-G^{(b)}_{z^{\prime}_{D}}(u)\big]_{+}$
exactly as in Section~\ref{subsec:semi_cs}.

A key difference from the parametric case, however, is that $T_{Z,\theta}$
cannot be weighted by each $c$. The semiparametric test exploits
the probability integral transform $\hat{F}_{n}(c;z_{D})=G_{z_{D}}\big(F_{\varepsilon}(c)\big)$,
with simulation run on the transformed latent uniform variable and
a supremum over $c\in\R$ replaced by $\sup_{u\in[0,1]}$, which no
longer involves $c$. A weighting scheme that depends on $c$ would
interfere with the probability integral transform argument and invalidate
 the simulation-based test procedure. That said, weights can still
depend on $Z$, in particular the effective sample sizes $\hat{m}(z_{D})$
and $\hat{m}(z^{'}_{D})$ so that we can down-weigh restrictions coming
from cells with very small $\hat{m}(z_{D})$ and $\hat{m}(z^{'}_{D})$
. 
\end{rem}
\begin{rem}[Overidentification and Misspecification Test]
Even though the model parameter is in general set identified only,
the identifying restrictions still contain \emph{overidentifying restrictions}
that can falsify the model specification. In finite sample, the confidence
sets of this section control false \emph{exclusion} of the truth under
the model. As a result, we may conduct a misspecification test by
rejecting correct specification (of the model with all the assumptions)
when the $\left(1-\alpha\right)$ confidence set constructed above
turns out \emph{empty.} 
\end{rem}
\begin{cor}[Misspecification Test by Emptiness]
\label{cor:spec_test} Fix a candidate set $\Theta_{\mathrm{grid}}$
and a significance level $\alpha$. Consider the null hypothesis that
the data are generated by model \eqref{eq:link_model} under Assumptions~\ref{ass:model}--\ref{ass:exogeneity},
along with any assumption on $F_{\varepsilon}$, for \emph{some} parameter
value in $\Theta_{\mathrm{grid}}$. Then, for any of the confidence
sets of this section, under the null, 
\[
\Prob(\widehat{\mathcal{C}}_{n}=\varnothing\mid Z)\leq\Prob(\overline{\theta}_{0}\notin\widehat{\mathcal{C}}_{n}\mid Z)\leq\alpha
\]
\end{cor}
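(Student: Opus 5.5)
The plan is to reduce the corollary to a set inclusion between events, and then apply the coverage guarantees already established. Under the null, the data are generated by model \eqref{eq:link_model} for some true parameter $\overline{\theta}_{0}\in\Theta_{\mathrm{grid}}$. Every confidence set in this section is computed by test inversion over the candidate set. So I will read $\widehat{\mathcal{C}}_{n}$ as $\{\overline{\theta}\in\Theta_{\mathrm{grid}}:\text{test statistic}\leq\text{critical value}\}$.

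\emph{First step: the event inclusion.} On the event $\{\widehat{\mathcal{C}}_{n}=\varnothing\}$, no element of $\Theta_{\mathrm{grid}}$ belongs to $\widehat{\mathcal{C}}_{n}$. Since $\overline{\theta}_{0}\in\Theta_{\mathrm{grid}}$, in particular $\overline{\theta}_{0}\notin\widehat{\mathcal{C}}_{n}$. Hence
\[
\{\widehat{\mathcal{C}}_{n}=\varnothing\}\subseteq\{\overline{\theta}_{0}\notin\widehat{\mathcal{C}}_{n}\}
\]
holds realization by realization. Monotonicity of conditional probability given $Z$ then gives the first inequality.

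\emph{Second step: the $\alpha$ bound.} The second inequality is the complement of the relevant coverage theorem:
\begin{itemize}
\item Theorem~\ref{thm:semi_CS} for the semiparametric set;
\item Proposition~\ref{prop:cdkw} for the closed-form critical value;
\item Theorem~\ref{thm:para_CS} for the parametric set;
\item Theorem~\ref{prop:gendom} for the aggregated versions (constrained, studentized, Berk--Jones);
\item the analogue in Remark~\ref{rem:improve_semi} for the improved semiparametric sets.
\end{itemize}
Each of these gives $\Prob(\overline{\theta}_{0}\in\widehat{\mathcal{C}}_{n}\mid Z)\geq1-\alpha$, and taking complements yields the bound $\alpha$.

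\emph{Main point of care.} Those theorems are stated for the confidence set over the whole parameter space, so I need to check that restricting to $\Theta_{\mathrm{grid}}$ is harmless. Membership of the fixed point $\overline{\theta}_{0}$ depends only on the test statistic and critical value evaluated at $\overline{\theta}_{0}$. Restricting the candidate set to any set containing $\overline{\theta}_{0}$ therefore does not change the event $\{\overline{\theta}_{0}\in\widehat{\mathcal{C}}_{n}\}$.

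A second subtlety is that the null is composite: the truth could be any element of $\Theta_{\mathrm{grid}}$. The argument above holds for whichever element is true, so the bound holds uniformly over the null.
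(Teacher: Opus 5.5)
Your proof is correct and follows the same reasoning the paper relies on. The paper gives no separate proof: it treats the corollary as immediate from the preceding remark. That remark notes that an empty set over $\Theta_{\mathrm{grid}}$ excludes the true $\overline{\theta}_{0}\in\Theta_{\mathrm{grid}}$, and the coverage results of Section~\ref{sec:single_inf} bound the conditional probability of that exclusion by $\alpha$. Your two added checks are exactly the details the paper leaves implicit. First, restricting test inversion to $\Theta_{\mathrm{grid}}$ leaves the event $\{\overline{\theta}_{0}\in\widehat{\mathcal{C}}_{n}\}$ unchanged, because each candidate is tested pointwise. Second, the bound holds for whichever grid point is true.
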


\section{Simulation Study}

\label{sec:sim}

This section numerically evaluates the performance of the confidence
sets proposed in Section~\ref{sec:single_inf}, across different
network sizes from $100$ to $10,000$. 

\subsection{Design}

\label{subsec:sim_design}

We consider the following model specification
\begin{equation}
Y_{ij}=\ind\Big\{\beta_{0}\,|z_{i}-z_{j}|+\gamma_{0}\,\overline{\mathrm{CF}}_{ij}(Y)-b_{0}\ \geq\ \varepsilon_{ij}\Big\},\qquad\beta_{0}=-1,\quad\gamma_{0}=16,\label{eq:sim_dgp}
\end{equation}
with $\varepsilon_{ij}$ drawn from i.i.d.\ $\mathrm{Logistic}(0,1)$,
$z_{i}$ drawn from i.i.d.\ uniform on $21$ equally spaced grid
on $[-10,10]$, and $\overline{\mathrm{CF}}_{ij}(Y):=\mathrm{CF}_{ij}(Y)/(n-2)$
being the normalized common-friend count on the equilibrium network
$Y$. This normalization ensures that the data generating process,
and thus the equilibrium network, stays stable across $n$. Conditioning
cells ($z_{D}\in{\cal Z}_{D}$) are the $231$ exact unordered type
pairs. We consider network size $n\in\{100,200,400,800,1600,3200,6400,10000\}$
with $K=100$ independent repetitions at every size. We calibrate
$b_{0}$ via pilot simulation runs so that the outcome networks at
$n=400$ have densities around 0.2, and fix $b_{0}$ thereafter.

The choice of the strategic coefficient $\gamma_{0}=16$ seems ``large'',
but its magnitude should be interpreted together with the scale of
$X_{ij}:=\mathrm{CF}_{ij}(Y)/(n-2)$, which is normalized by $1/(n-2)$.
Table \ref{tab:sim_design} reports summary statistics about the equilibrium
networks, showing that the endogenous covariate $X$ are generally
small in scale with mean around 0.05 and sd around 0.06. Hence, $\gamma_{0}=16$
only translates to a very moderate amount of variation in the strategic
index term. For example, at $n=200$ the strategic term contributes
$\gamma_{0}\cdot\mathrm{mean}(X)\approx0.75$ logit units at the mean
and $\gamma_{0}\cdot\mathrm{sd}(X)\approx1.13$ logit units of variations,
against a standard logistic shock $\varepsilon_{ij}$.

\begin{table}[t]
\centering \caption{Summary statistics about the simulated networks}
\label{tab:sim_design} \begin{threeparttable}%
\begin{tabular}{rccccc}
\toprule 
$n$ & density mean & density range & mean $X$ & sd $X$ & mean $\max X$\tabularnewline
\midrule 
$100$ & $.2019$ & $[.1416,.3925]$ & $.0531$ & $.0839$ & $.3237$\tabularnewline
$200$ & $.2002$ & $[.1543,.3379]$ & $.0470$ & $.0705$ & $.2941$\tabularnewline
$400$ & $.2029$ & $[.1617,.2827]$ & $.0458$ & $.0664$ & $.2776$\tabularnewline
$800$ & $.2001$ & $[.1758,.2337]$ & $.0428$ & $.0592$ & $.2487$\tabularnewline
$1600$ & $.1998$ & $[.1826,.2233]$ & $.0419$ & $.0562$ & $.2250$\tabularnewline
$3200$ & $.1994$ & $[.1892,.2152]$ & $.0414$ & $.0548$ & $.2091$\tabularnewline
$6400$ & $.1996$ & $[.1928,.2119]$ & $.0413$ & $.0543$ & $.1980$\tabularnewline
$10000$ & $.1991$ & $[.1931,.2052]$ & $.0410$ & $.0537$ & $.1899$\tabularnewline
\bottomrule
\end{tabular}\begin{tablenotes}[flushleft]

{\footnotesize\item Note: $X=\overline{\mathrm{CF}}$. Based on $K=100$
replications per size. ``mean $\max X$'' is the mean across replications
of each network's maximum, not the pooled maximum. }\end{tablenotes}
\end{threeparttable}
\end{table}

We also note that, since $\overline{\mathrm{CF}}$ is nondecreasing
in the network and $\gamma_{0}>0$, the network formation game is
supermodular, and the equilibrium networks are computed as the least
fixed points by iterations from the empty graph. While our inference
procedure requires no supermodularity, no equilibrium selection, and
no computation of equilibrium networks, for simulations we do need
to generate the simulated equilibrium networks that our inference
procedure takes as data. The focus on supermodular structure enables
us to compute large equilibrium networks with $n=10,000$. 

Throughout the rest of this section, we compute various \emph{confidence
sets} as described in Section \ref{sec:single_inf} and their performance
across the $K=100$ repetitions. In the parametric case, the standard
logistic CDF already fixes the location and scale, so every structural
parameter in $\left(\beta_{0},\gamma_{0},b_{0}\right)$ is treated
as unknown, and the test (inversion) is joint over the whole vector.
We work with a discrete grid on the parameter space: $b_{0}\in[-6,6]$
with step size $0.5$, $\beta\in[-3,1]$ with step size $0.25$, and
$\gamma\in[-8,60]$ with step size $0.25$. The test statistic is
the Berk--Jones statistics in Section~\ref{subsec:gendom}, with
the thresholds $c$ fixed on $361$ points spanning $[-36,36]$. The
conditional critical value is simulated accordingly with $B=999$
at confidence level $1-\alpha=0.95$. In the semiparametric case,
we normalize the scale of the homophily effect parameter to unity,
$|\beta_{0}|=1$, with the location normalization either imposed via
the intercept term $b_{0}$ or the symmetry-about-$0$ assumption.

\subsection{Parametric Confidence Sets}

\label{subsec:sim_master}

Each of the $100$ networks at a size is one replication in which
the researcher observes that network \emph{alone} and computes a confidence
set for the full parameter triple. Table~\ref{tab:sim_master} reports
statistics of the confidence sets across all three coordinates. We
discuss the main findings as follows.

\begin{table}[t]
\centering \caption{Parametric confidence sets}
\label{tab:sim_master} \begin{threeparttable}%
\begin{tabular}{rcccccccc}
\toprule 
$n$  & $\gamma$ sign & $\gamma$ width  & vac.  & $\gamma$ edge L\,/\,U  & $\beta$ sign & $\beta$ width  & $b_{0}$ width & cover\tabularnewline
\midrule 
$100$  & $.75$  & $46.750$  & $.07$  & $.14$ / $.31$  & $.24$  & $2.000$  & $6.500$ & $1.00$\tabularnewline
$200$  & $.96$  & $35.000$  & $.01$  & $.03$ / $.06$  & $.80$  & $1.250$  & $5.000$ & $1.00$\tabularnewline
$400$  & $\mathbf{1.00}$  & $22.625$  & $.00$  & $.00$ / $.02$  & $.98$  & $0.750$  & $3.000$ & $1.00$\tabularnewline
$800$  & $\mathbf{1.00}$  & $16.250$  & $.00$  & $.00$ / $.00$  & $\mathbf{1.00}$  & $0.500$  & $2.000$ & $1.00$\tabularnewline
$1600$  & $\mathbf{1.00}$  & $10.750$  & $.00$  & $.00$ / $.00$  & $\mathbf{1.00}$  & $0.250$  & $1.500$ & $1.00$\tabularnewline
$3200$  & $\mathbf{1.00}$  & $3.875$  & $.00$  & $.00$ / $.00$  & $\mathbf{1.00}$  & $0.000$  & $0.000$ & $1.00$\tabularnewline
$6400$  & $\mathbf{1.00}$  & $2.750$  & $.00$  & $.00$ / $.00$  & $\mathbf{1.00}$  & $0.000$  & $0.000$ & $1.00$\tabularnewline
$10000$  & $\mathbf{1.00}$  & $2.250$  & $.00$  & $.00$ / $.00$  & $\mathbf{1.00}$  & $0.000$  & $0.000$ & $1.00$\tabularnewline
\bottomrule
\end{tabular}\begin{tablenotes}[flushleft] 

{\footnotesize\item Note: based on $K=100$ replications, at 95\%
nominal level. ``signs'' is the share of replications whose projection
is nonempty with a strictly positive minimum (for $\gamma$) or strictly
negative maximum (for $\beta$). ``width'' is the median length
of the CS projection across replications, and a width of ``0.000''
means a single grid point at the true value. ``vac.'' is the share
accepting all grid points, and ``edge'' reports contact with the
endpoints $-8$ and $60$. ``cover'' is the share of replications
where the CS covers the true parameter vector (jointly).} \end{tablenotes}
\end{threeparttable} 
\end{table}

Our inference procedure\emph{ is clearly valid and informative }(though
conservative as expected), in a single network setting even at the
smallest size $n=100$, and median width of the CS projects shrink
at every tested $n$. The true parameter is covered by the CS in every
of the $K=100$ replications across all $n$.\footnote{This is anticipated since our tests are built on realization-wise
inequalities that heuristically ensures the ``worst-case'' coverage
guarantee regardless of network dependence structure and equilibrium
selection mechanism. } That said, the CS are nontrivial subsets of the parameter grid that
contract with $n$ and deliver sign-revealing information. 

Noticeably, the sign of the strategic coefficient $\gamma$, an important
question of interest in the strategic network formation literature,
is certified in $75\%$ of replications at $n=100$, in $96\%$ at
$n=200$, and in all $100$ replications from $n=400$ onward, while
the median $\gamma$ projection width falls from about 47 ($69\%$
of the candidate grid) to about $2$ ($3.3\%$ of the grid) at $n=10,000$.
Again, each computed confidence set is based on a single network,
with no assumption nor information on network equilibrium ever used
in the inference procedure. We are unaware of prior inference procedures
in the literature that deliver \emph{valid sign-revealing} confidence
sets in this environment. 

The confidence set also delivers quite tight bounds on the homophily
effect parameter $\beta_{0}$ (as well as the intercept $b_{0}$).
Homophily (negative $\beta_{0}$) is certified in $98\%$ of replications
by $n=400$ and in all of them from $n=800$. The projection widths
for $\beta_{0}$ and $b_{0}$ both shrink with $n$, and from $n=3200$
onwards the widths around the true values effectively \emph{shrink
under the grid resolution} (``0.000''). This confirms the intuition
that identification and inference on parameters for the exogenous
covariates tend to be ``easier''. What is particularly noteworthy
in our exercise is the finding that this intuition appears still true
even in presence of the ``hard-to-learn'' strategic coefficient
$\gamma_{0}$.

\subsection{Semiparametric Confidence Sets}

\label{subsubsec:sim_ladder}

Now, in Table \ref{tab:sim_ladder}, we report results on the semiparametric
versions of our inference procedure, based on the conditional simulation
critical values described in Section \ref{subsec:semi_cs} as well
as the symmetry-based one described in Appendix \ref{app:sym}. We
also include corresponding results about the parametric case from
the last subsection for the ease of comparison.

These confidence sets are computed on the same simulated networks
with the same $\gamma$ grid as the parametric ones in Section \ref{subsec:sim_master}.
A key operational difference lies in the different location and scale
normalization between the semiparametric case and the parametric case.
The parametric CS are joint over the three-dimensional parameters
$(b_{0},\beta,\gamma)$ jointly, with normalization imposed through
the location and scale of the standard logistic distribution. The
symmetry-based semiparametric CS (Appendix~\ref{app:sym}) encodes
a location normalization (symmetry around $0$) but no scale normalization.
Consequently, we fix the scale of the homophily effect parameter $|\beta|=1$
and the symmetry-based CS is joint over an intercept $b_{0}$, the
strategic coefficient $\gamma_{0}$, as well as the sign of $\beta_{0}$.
Lastly, the fully semiparametric CS (Section \ref{subsec:semi_cs})
leave both location and scale normalization open: we thus fix $b_{0}=0$
(effectively dropping the constant term) as location normalization
and set $|\beta|=1$ as scale normalization. 

\begin{table}[t]
\centering \caption{Semiparametric confidence sets: $\gamma$ projections}
\label{tab:sim_ladder} \begin{threeparttable}%
\begin{tabular}{rccccccccc}
\toprule 
$n$  & \multicolumn{3}{c}{parametric } & \multicolumn{3}{c}{symmetric } & \multicolumn{3}{c}{semiparametric }\tabularnewline
\midrule 
 & sign & width & vac & sign & width & vac & sign & width & vac\tabularnewline
\midrule 
$100$  & $.75$ & $46.750$ & $.07$ & $.31$  & $67.250$  & $.45$  & $.28$ & $66.875$ & $.47$\tabularnewline
$200$  & $.96$ & $35.000$ & $.01$ & $.25$ & $66.000$ & $.38$ & $.17$ & $67.375$ & $.45$\tabularnewline
$400$  & $\mathbf{1.00}$ & $22.625$ & $.00$ & $.61$ & $58.000$ & $.07$ & $.25$ & $59.750$ & $.12$\tabularnewline
$800$  & $\mathbf{1.00}$ & $16.250$ & $.00$ & $.95$ & $53.625$ & $.00$ & $.38$ & $56.750$ & $.00$\tabularnewline
$1600$  & $\mathbf{1.00}$ & $10.750$ & $.00$ & $\mathbf{1.00}$ & $26.500$ & $.00$ & $.73$ & $27.125$ & $.00$\tabularnewline
$3200$  & $\mathbf{1.00}$ & $3.875$ & $.00$ & $\mathbf{1.00}$ & $19.750$ & $.00$ & $.95$ & $21.500$ & $.00$\tabularnewline
$6400$  & $\mathbf{1.00}$ & $2.750$ & $.00$ & $\mathbf{1.00}$ & $10.750$ & $.00$ & $\mathbf{1.00}$ & $17.250$ & $.00$\tabularnewline
$10000$  & $\mathbf{1.00}$ & $2.250$ & $.00$ & $\mathbf{1.00}$ & $6.500$ & $.00$ & $\mathbf{1.00}$ & $12.750$ & $.00$\tabularnewline
\bottomrule
\end{tabular} \begin{tablenotes}[flushleft] 

{\footnotesize\item Note: ``sign'': sign rate, ``width'': median
width in the $[-8,60]$ grid, and ``vac'': the share of replications
accepting the entire grid (vacuous). Vacuity refers to the $\gamma$-projection
only, not to the joint set. }{\footnotesize\textbf{Joint coverage
rates (of truth) are $100/100$}}{\footnotesize{} }{\footnotesize\textbf{across
all entries}}{\footnotesize{} and are thus not reported.}\end{tablenotes}
\end{threeparttable} 
\end{table}

The semiparametric confidence sets, while less tight than the parametric
ones as anticipated, remain nontrivial and informative. Given the
generality and complexity of the strategic network formation problem
under consideration, it is \emph{not a priori clear} whether any inference
method can deliver nontrivial bounds, especially on the strategic
coefficient $\gamma_{0}$. It is thus surprising that our semiparametric
confidence sets turn out to be informative on the sign of $\gamma_{0}$,
especially given how \emph{few assumptions} were used in the semiparametric
inference procedure. 

It should be acknowledged that the semiparametric confidence sets
still materially get censored by the grid bounds, especially at small
$n$. Upper-endpoint contact is about $80\%$ at $n=400$ and $27$-$45\%$
at $n=800$, so the ``width'' entries at $n=400$ and $n=800$ are
contaminated with grid truncation and thus should be interpreted with
care. The results at $n=100$ and $n=200$ should be read as sign
evidence alone. That said, for larger networks (starting $n=1600$),
the semiparametric confidence sets deliver two-sided bounds in the
grid interior for the majority of replications. 

The ``cleanest'' metric to focus on is the \emph{sign rate}, the
share of replications that certify the (correct) sign of $\gamma_{0}$,
and this measure also shows a clear ordering of the confidence set
performance with respect to the strength of the imposed assumptions.
The strongest parametric assumption delivers the 100\% sign rate starting
at $n=400$, the semiparametric symmetry-based one reaches 100\% at
$n=1600$, while the purely semiparametric one (without any assumption
on $F_{\varepsilon}$) requires $n=6400$ to do so. 

\subsection{Multidimensional Exogenous Covariates}

\label{subsubsec:sim_multidim}

The design so far has one exogenous covariate $\left|z_{i}-z_{j}\right|$.
We now repeat the exercise with two and three exogenous covariates,
with true coefficient vectors $\beta_{0}=(-1,+1)$ and $\beta_{0}=(-1,+1,-.5)$.
Each margin of $Z_{i}$ is independently uniform on the same $21$
points in $[-10,10]$, and the scale normalization is imposed on the
first coordinate $|\beta_{1}|=1$. The $\gamma$ grid spans the same
$[-8,60]$, with a larger step size to save computation. The results
for the parametric and the two semiparametric confidence sets are
reported in Table \ref{tab:sim_multidim}.

\begin{table}[t]
\centering \caption{Multidimensional exogenous covariates: $\gamma$ projections}
\label{tab:sim_multidim} \begin{threeparttable}%
\begin{tabular}{ccccccccccc}
\toprule 
dim-$Z$ & $n$ & \multicolumn{3}{c}{parametric} & \multicolumn{3}{c}{symmetric} & \multicolumn{3}{c}{semiparametric}\tabularnewline
\midrule 
 &  & sign & width & vac & sign & width & vac & sign & width & vac\tabularnewline
\midrule 
\multirow{7}{*}{$2$} & $100$ & $.74$ & $34$ & $.00$ & $.00$ & $68$ & $.99$ & $.00$ & $68$ & $1.00$\tabularnewline
 & $200$ & $.99$ & $30$ & $.00$ & $.00$ & $68$ & $.51$ & $.00$ & $68$ & $1.00$\tabularnewline
 & $400$ & $\mathbf{1.00}$ & $26$ & $.00$ & $.03$ & $52$ & $.00$ & $.00$ & $68$ & $.70$\tabularnewline
 & $800$ & $\mathbf{1.00}$ & $24$ & $.00$ & $.43$ & $40$ & $.00$ & $.07$ & $54$ & $.00$\tabularnewline
 & $1600$ & $\mathbf{1.00}$ & $24$ & $.00$ & $\mathbf{1.00}$ & $31$ & $.00$ & $.98$ & $36$ & $.00$\tabularnewline
 & $3200$ & $\mathbf{1.00}$ & $24$ & $.00$ & $\mathbf{1.00}$ & $26$ & $.00$ & $\mathbf{1.00}$ & $28$ & $.00$\tabularnewline
 & $10000$ & $\mathbf{1.00}$ & $24$ & $.00$ & $\mathbf{1.00}$ & $24$ & $.00$ & $\mathbf{1.00}$ & $26$ & $.00$\tabularnewline
\midrule 
\multirow{7}{*}{$3$} & $100$ & $.55$ & $38$ & $.00$ & \multicolumn{6}{l}{-}\tabularnewline
 & $200$ & $.93$ & $30$ & $.00$ & $.00$ & $68$ & $1.00$ & $.00$ & $68$ & $1.00$\tabularnewline
 & $400$ & $\mathbf{1.00}$ & $26$ & $.00$ & $.10$ & $68$ & $.63$ & $.00$ & $68$ & $1.00$\tabularnewline
 & $800$ & $\mathbf{1.00}$ & $22$ & $.00$ & $\mathbf{1.00}$ & $48$ & $.00$ & $.00$ & $68$ & $.97$\tabularnewline
 & $1600$ & $\mathbf{1.00}$ & $20$ & $.00$ & $\mathbf{1.00}$ & $38$ & $.00$ & $.93$ & $58$ & $.00$\tabularnewline
 & $3200$ & $\mathbf{1.00}$ & $20$ & $.00$ & $\mathbf{1.00}$ & $30$ & $.00$ & $\mathbf{1.00}$ & $42$ & $.00$\tabularnewline
 & $10000$ & $\mathbf{1.00}$ & $20$ & $.00$ & $\mathbf{1.00}$ & $24$ & $.00$ & $\mathbf{1.00}$ & $28$ & $.00$\tabularnewline
\bottomrule
\end{tabular} \begin{tablenotes}[flushleft] 

{\footnotesize\item Note: ``sign'': sign rate, ``width'': median
width in the $[-8,60]$ grid, and ``vac'': the share of replications
accepting the entire grid (vacuous). Vacuity refers to the $\gamma$-projection
only, not to the joint set. ``-'' indicates too few effective sample
sizes in the dicretized cells for the semiparametric procedure to
be meaningfully run.}{\footnotesize\textbf{ Joint coverage rates (of
truth) are $100/100$}}{\footnotesize{} }{\footnotesize\textbf{across
all entries}}{\footnotesize{} and are thus not reported.}\end{tablenotes}
\end{threeparttable} 
\end{table}

We note first that, under the multidimensional $Z$ design, the effective
sample sizes in the ``discretized cells'' $z_{D}$ often tend to
be too small at $n\leq400$, especially for the 3-dimensional design.
For example, there are $21^{3}$ support points in the 3-dimensional
$Z$ space, so at $n=100$ the parametric CS is very wide (even after
cell grouping) and the semiparametric procedure is effectively not
meaningful at all, which is thus labeled by ``-'' in Table \ref{tab:sim_multidim}.
The situation improves gradually as with $n$, but for $n\leq400$,
even the parametric CS remains wide, while the semiparametric versions
are mostly vacuous. That said, this is an anticipated issue as the
dimension of $Z$ grows.

With the caveat above acknowledged, the results still continue to
demonstrate the informativeness of our inference procedure, especially
in its ability to certify the sign of $\gamma_{0}$. In addition,
the performance ordering with respect to the strength of assumptions
on $F_{\varepsilon}$ remains clearly shown even under multidimensional
$Z$. The parametric CS reaches sign rates $\geq90\%$ at $n=200$
in both 2-dim-$Z$ and $3$-dim-$Z$ settings, while the semiparametric
versions eventually reach $\geq90\%$ rates after $n\geq1600.$ 

\section{Empirical Application}

\label{sec:application}

We apply the parametric confidence sets of Section~\ref{subsec:par_cs}
to two empirical data sets: contact (and friendship) among high-school
students, and friendship among Twitch streamers. In each network $Y_{ij}$
records the observed link, $X_{ij}$ is a common-friend statistic
computed from that same network, and $Z_{ij}$ collects observed dyadic
covariates to be explained below. We do not specify which stable network
is chosen when the model admits more than one. 

\subsection{High-School Contact and Friendship Networks}

\label{subsec:app_hs}

The Thiers13 data contain $327$ students in nine classes and four
class types \citep{mastrandrea2015contact}.\footnote{The data set is publicly available and accessed at \href{https://sociopatterns.org/datasets/high-school-contact-and-friendship-networks/}{https://sociopatterns.org/datasets/high-school-contact-and-friendship-networks/}.}
Contact sensors record face-to-face proximity in twenty-second intervals
during one school week, and we set $Y_{ij}=1$ when a pair has at
least $\underline{\kappa}$ recorded intervals in total. The baseline
is $\underline{\kappa}=2$, with $\underline{\kappa}=1$ and $\underline{\kappa}=3$
also reported as robustness checks. The friendship layer sets $Y_{ij}=1$
if either student nominates the other in a survey to which only about
130 of the 327 students responded (and hence the friendship network
should only be interpreted with this caveat). The exogenous covariates
are an indicator for different classes and an indicator for different
class types, so $Z_{ij}=(1,D_{ij},T_{ij})$, and the endogenous covariate
is the raw common-friend count in the same network. Selected summary
statistics are reporeted in Table \ref{tab:hs_summary}.

\begin{table}[t]
\centering \caption{High-school network: summary statistics}
\label{tab:hs_summary} \begin{threeparttable}%
\begin{tabular}{lccccc}
\toprule 
network  & links  & density  & mean degree  & CF mean (sd)  & $\hat{\Prob}(\mathrm{CF}\geq1)$\tabularnewline
\midrule 
contact, $\underline{\kappa}=1$  & $5{,}818$  & $.1092$  & $35.58$  & $4.334$ $(6.636)$  & $.7320$\tabularnewline
contact, $\underline{\kappa}=2$  & $4{,}031$  & $.0756$  & $24.65$  & $2.108$ $(4.616)$  & $.4091$\tabularnewline
contact, $\underline{\kappa}=3$  & $3{,}337$  & $.0626$  & $20.41$  & $1.464$ $(3.591)$  & $.3211$\tabularnewline
friendship  & $406$  & $.0076$  & $2.48$  & $0.053$ $(0.434)$  & $.0247$\tabularnewline
\bottomrule
\end{tabular}\begin{tablenotes}[flushleft] 

{\footnotesize\item Note: all four networks have $327$ nodes and
$53{,}301$ dyads. Common friends in raw integer count over all unordered
dyads, including zeros.}\end{tablenotes} \end{threeparttable} 
\end{table}

Again, we use the Berk--Jones statistic with constrained thresholding
in Section~\ref{subsec:gendom}. Since the raw numbers of common
friends $CF$ are already small in scale and in the empirical applications
there is just one given sample size $n$, for convenience we simply
use the raw $CF$ count but work with a small-scale grid for $\gamma$
with step size $0.01$ (noticing that the results can be equivalently
translated into $CF/(n-2)$ with $\gamma$ correspondingly scaled
up). The remaining coefficient are based on grid with step size $0.5$,
and the joint grid is chosen so that every displayed projection below
lies strictly inside the grid interior.

\begin{table}[t]
\centering \caption{High-school network: parametric confidence-set projections}
\label{tab:hs_cs} \begin{threeparttable}%
\begin{tabular}{lccc}
\toprule 
network  & $b_{D}$  & $b_{T}$  & $\gamma$\tabularnewline
\midrule 
contact, $\underline{\kappa}=1$  & $[-4.0,3.0]$  & $[-4.0,2.0]$  & $[0.03,0.35]$\tabularnewline
contact, $\underline{\kappa}=2$  & $[-4.0,1.0]$  & $[-3.0,1.0]$  & $[0.05,0.37]$\tabularnewline
contact, $\underline{\kappa}=3$  & $[-4.5,0.5]$  & $[-3.5,0.0]$  & $[0.07,0.39]$\tabularnewline
friendship  & $[-4.5,2.5]$  & $[-5.0,3.0]$  & $[0.11,2.24]$\tabularnewline
\bottomrule
\end{tabular}\begin{tablenotes}[flushleft] 

{\footnotesize\item Note: joint nominal confidence level $0.95$;
each column reports projections of the same joint confidence set.
The intercept is jointly tested but its projection is omitted.}\end{tablenotes}
\end{threeparttable} 
\end{table}

Table \ref{tab:hs_cs} reports the parametric 95\%-level confidence-set
projection result under the assumption that $\varepsilon_{ij}$ is
standard logistic.

All four common-friend projections are positive. A positive $\gamma$
indicates the tendency for triad closure in strategic network formation:
sharing a contact (or friend) raises the net surplus from linking.
In the baseline $\underline{\kappa}=2$ contact network one additional
common friend raises the structural link index by $0.05$ to $0.37$.
At the sample mean of about $2.1$ common friends, the endogenous
term contributes roughly $0.1$ to $0.8$ index units, relative to
standard logistic error scale. The $\underline{\kappa}=1$ and $\underline{\kappa}=3$
rows give nearly the same range, so the positive sign of $\gamma$
does not depend on a single contact threshold.

The friendship CS projection is also positive but less precise, which
is consistent with the fact that ``friendship'' is constructed from
a survey where only 130 of the 327 students responded at all. Correspondingly
there is very low variation in the ``friendship'' data, where $97.5\%$
of dyads have no common friend and $193$ of the $327$ students are
isolated. Again, this means that the results based on the friendship
network should be taken with this substantive caveat in mind, and
regarded as a supplemental robustness check using this data set.

\subsection{Twitch Friendship Networks}

\label{subsec:app_twitch}

The Twitch data set contains information about networks of gamers
who stream in a certain language (DE, EN, ES, FR, PT, and RU): six
undirected friendship networks, one per streamer language \citep{rozemberczki2021multiscale}.\footnote{The data set is publicly available and accessed at \href{https://snap.stanford.edu/data/twitch-social-networks.html}{https://snap.stanford.edu/data/twitch-social-networks.html}.}
We  treat each network as a separate inference problem with language-specific
parameters, and each confidence sets is computed on each network separately.
See Table \ref{tab:twitch_summary} for some key summary statistics.

\begin{table}[t]
\centering \caption{Twitch gamer networks: summary statistics}
\label{tab:twitch_summary} \begin{threeparttable}%
\begin{tabular}{lccccc}
\toprule 
language  & nodes  & links  & density  & CF mean (sd)  & $\hat{\P}\left(\mathrm{CF}\geq1\right) $ \tabularnewline
\midrule 
German (DE)  & $9{,}498$  & $153{,}138$  & $.00340$  & $0.863$ $(2.508)$  & $.3721$\tabularnewline
English (EN)  & $7{,}126$  & $35{,}324$  & $.00139$  & $0.082$ $(0.412)$  & $.0631$\tabularnewline
Spanish (ES)  & $4{,}648$  & $59{,}382$  & $.00550$  & $0.660$ $(2.141)$  & $.2772$\tabularnewline
French (FR)  & $6{,}549$  & $112{,}666$  & $.00525$  & $1.093$ $(2.947)$  & $.3894$\tabularnewline
Portuguese (PT)  & $1{,}912$  & $31{,}299$  & $.01713$  & $2.175$ $(5.064)$  & $.5132$\tabularnewline
Russian (RU)  & $4{,}385$  & $37{,}304$  & $.00388$  & $0.458$ $(1.516)$  & $.2325$\tabularnewline
\bottomrule
\end{tabular}\begin{tablenotes}[flushleft] 

{\footnotesize\item Note: common friends in raw integer count over
all unordered dyads, including zeros}.\end{tablenotes} \end{threeparttable} 
\end{table}

We construct the exogenous covariates as follows. Let $q^{V}_{i}\in\{1,\ldots,5\}$
be streamer $i$'s within-language quintile of total views and $m_{i}$
the mature-content indicator. We construct 
\[
V_{ij}=|q^{V}_{i}-q^{V}_{j}|,\qquad S_{ij}=q^{V}_{i}+q^{V}_{j}-6,\qquad M_{ij}=m_{i}+m_{j}-1,
\]
so that $V_{ij}$ measures the popularity gap, $S_{ij}$ captures
the popularity level of the pair, and $M_{ij}$ the (centered) mature-account
count. The structural index in language-$\ell$ network is then
\[
\delta_{ij,\ell}=b_{0\ell}+b_{V\ell}V_{ij}+b_{S\ell}S_{ij}+b_{M\ell}M_{ij}+\gamma_{\ell}\,\ind\{\mathrm{CF}_{ij}\geq1\},
\]
with $Z_{ij}=(1,V_{ij},S_{ij},M_{ij})$ and the endogenous regressor
$X_{ij}=\ind\{\mathrm{CF}_{ij}\geq1\}$ as an indicator of whether
the pair has at least one shared neighbor. We work with a full fixed
threshold ($c$) grid from $-48$ to $48$ in steps of $0.25$. The
parameter grid uses step sizes of $.05$ on all parameters. Again
the joint parameter grid is chosen so that every displayed projection
below lies strictly in the grid interior.

The reason why we use the binary $X_{ij}=\ind\{\mathrm{CF}_{ij}\geq1\}$
is to control the highly skewed and heterogeneous distribution of
$\mathrm{CF}_{ij}$ within each language network and across the six
language networks. Table~\ref{tab:twitch_tail} documents how extreme
that skewness is. In the German network, for instance, $62.8\%$ of
dyads have no common friend at all and the 99th percentile is $9$,
yet the maximum $\mathrm{CF}_{ij}$ is $1{,}432$. A raw count would
therefore let a handful of neighborhoods dominate the identifying
variation in $X_{ij}$, while it is equally problematic to use logs
to reduce skewness given the large share of zeros in the $\mathrm{CF}_{ij}$.
All the six networks are very skewed, but the magnitudes of the skewness
vary substantially across the six. As a result, the indicator $\ind\{\mathrm{CF}_{ij}\geq1\}$
seems as natural ``common friends'' type statistic with clear interpretation
and retains stability within and across the six networks.

\begin{table}[t]
\centering\caption{Twitch gamer networks: right tails of $\mathrm{CF}$ distribution }
\label{tab:twitch_tail}\begin{threeparttable}%
\begin{tabular}{lcccccc}
\toprule 
language & $\hat{\P}(\mathrm{CF}=0)$ & $q_{90}$ & $q_{95}$ & $q_{99}$ & $q_{99.9}$ & $\max\mathrm{CF}$\tabularnewline
\midrule 
German (DE) & $.628$ & $2$ & $4$ & $9$ & $24$ & $1{,}432$\tabularnewline
English (ENGB) & $.937$ & $0$ & $1$ & $2$ & $4$ & $134$\tabularnewline
Spanish (ES) & $.723$ & $2$ & $3$ & $8$ & $24$ & $436$\tabularnewline
French (FR) & $.611$ & $3$ & $5$ & $11$ & $26$ & $1{,}095$\tabularnewline
Portuguese (PTBR) & $.487$ & $6$ & $9$ & $20$ & $54$ & $449$\tabularnewline
Russian (RU) & $.768$ & $1$ & $2$ & $6$ & $14$ & $562$\tabularnewline
\bottomrule
\end{tabular}\begin{tablenotes}[flushleft]

\item{\footnotesize Note: quantiles of raw CF counts over all unordered
dyads, including zeros.}{\footnotesize\par}

\end{tablenotes}\end{threeparttable}
\end{table}

\begin{table}[t]
\centering \caption{Twitch gamer network: parametric confidence-set projections}
\label{tab:twitch_cs} \begin{threeparttable}%
\begin{tabular}{lcccc}
\toprule 
language  & $b_{V}$ (views gap)  & $b_{S}$ (views level)  & $b_{M}$ (mature pair)  & $\gamma$\tabularnewline
\midrule 
German (DE)  & $[0.05,0.60]$  & $[0.20,0.70]$  & $[-0.80,0.30]$  & $[1.45,7.25]$\tabularnewline
English (EN)  & $[0.05,0.50]$  & $[0.30,0.65]$  & $[-0.35,0.35]$  & $[0.65,8.45]$\tabularnewline
Spanish (ES)  & $[-0.05,0.85]$  & $[0.10,0.90]$  & $[-1.30,1.10]$  & $[1.05,7.45]$\tabularnewline
French (FR)  & $[-0.20,0.85]$  & $[0.05,0.90]$  & $[-0.95,1.00]$  & $[0.95,7.20]$\tabularnewline
Portuguese (PT)  & $[-0.40,1.30]$  & $[-0.15,1.10]$  & $[-0.70,0.60]$  & $[0.90,6.95]$\tabularnewline
Russian (RU)  & $[0.15,0.95]$  & $[0.20,0.95]$  & $[-0.95,0.95]$  & $[0.55,7.20]$\tabularnewline
\bottomrule
\end{tabular}\begin{tablenotes}[flushleft] 

{\footnotesize\item Note: the six language rows are separate confidence
sets at level $0.95$ (joint over all parameters), with no parameter
homogeneity imposed across networks. The intercept is jointly tested
in every row but its uninterpretable projection is omitted.}\end{tablenotes}
\end{threeparttable} 
\end{table}

Table \ref{tab:twitch_cs} reports the parametric confidence-set projections
under the assumption that $\varepsilon_{ij}$ is standard logistic.
Every language produces a strictly positive CS projection for $\gamma$:
the lower endpoints range from $0.55$ in Russian to $1.45$ in German.
German has the largest lower endpoint, which under the logistic normalization
implies more than a $4.2$-fold multiplier of the latent threshold
odds associated with the first common friend. The upper endpoints
are less informative in these sparse networks with a binary endogenous
regressor, though they are comparable in raw magnitudes (6.95-8.45)
across the six languages, indicating the stability of our inference
procedure across these very different networks. We thus view the robustly
positive lower endpoints and the stable upper endpoints as an empirical
demonstration of the informativeness and robustness of our inference
procedure with respect to the strategic coefficient $\gamma$.

\bibliographystyle{ecta}
\bibliography{SNF_NoFE}

\begin{thebibliography}{59}
\newcommand{\enquote}[1]{``#1''}
\expandafter\ifx\csname natexlab\endcsname\relax\def\natexlab#1{#1}\fi

\bibitem[\protect\citeauthoryear{Andrews and Shi}{Andrews and
  Shi}{2013}]{andrews2013inference}
\textsc{Andrews, D.~W. and X.~Shi} (2013): \enquote{Inference based on
  conditional moment inequalities,} \emph{Econometrica}, 81, 609--666.

\bibitem[\protect\citeauthoryear{Aradillas-L{\'o}pez}{Aradillas-L{\'o}pez}{2020}]{aradillas2020econometrics}
\textsc{Aradillas-L{\'o}pez, A.} (2020): \enquote{The econometrics of static
  games,} \emph{Annual Review of Economics}, 12, 135--165.

\bibitem[\protect\citeauthoryear{Athey, Eckles, and Imbens}{Athey
  et~al.}{2018}]{athey2018exact}
\textsc{Athey, S., D.~Eckles, and G.~W. Imbens} (2018): \enquote{Exact p-values
  for network interference,} \emph{Journal of the American Statistical
  Association}, 113, 230--240.

\bibitem[\protect\citeauthoryear{Auerbach}{Auerbach}{2022}]{auerbach2022testing}
\textsc{Auerbach, E.} (2022): \enquote{Testing for differences in stochastic
  network structure,} \emph{Econometrica}, 90, 1205--1223.

\bibitem[\protect\citeauthoryear{Badev}{Badev}{2021}]{badev2021nash}
\textsc{Badev, A.} (2021): \enquote{Nash equilibria on (un)stable networks,}
  \emph{Econometrica}, 89, 1179--1206.

\bibitem[\protect\citeauthoryear{Bajari, Hong, and Ryan}{Bajari
  et~al.}{2010}]{bajari2010identification}
\textsc{Bajari, P., H.~Hong, and S.~P. Ryan} (2010): \enquote{Identification
  and estimation of a discrete game of complete information,}
  \emph{Econometrica}, 78, 1529--1568.

\bibitem[\protect\citeauthoryear{Beresteanu, Molchanov, and
  Molinari}{Beresteanu et~al.}{2011}]{beresteanu2011sharp}
\textsc{Beresteanu, A., I.~Molchanov, and F.~Molinari} (2011): \enquote{Sharp
  identification regions in models with convex moment predictions,}
  \emph{Econometrica}, 79, 1785--1821.

\bibitem[\protect\citeauthoryear{Berk and Jones}{Berk and
  Jones}{1979}]{berk1979goodness}
\textsc{Berk, R.~H. and D.~H. Jones} (1979): \enquote{Goodness-of-fit test
  statistics that dominate the Kolmogorov statistics,} \emph{Zeitschrift
  f{\"u}r Wahrscheinlichkeitstheorie und Verwandte Gebiete}, 47, 47--59.

\bibitem[\protect\citeauthoryear{Besag and Clifford}{Besag and
  Clifford}{1989}]{besag1989generalized}
\textsc{Besag, J. and P.~Clifford} (1989): \enquote{Generalized {M}onte {C}arlo
  significance tests,} \emph{Biometrika}, 76, 633--642.

\bibitem[\protect\citeauthoryear{Chandrasekhar and Jackson}{Chandrasekhar and
  Jackson}{2025}]{chandrasekhar2025network}
\textsc{Chandrasekhar, A.~G. and M.~O. Jackson} (2025): \enquote{A network
  formation model based on subgraphs,} \emph{The Review of Economic Studies},
  92, 3741--3787.

\bibitem[\protect\citeauthoryear{Chernozhukov, Lee, and Rosen}{Chernozhukov
  et~al.}{2013}]{chernozhukov2013intersection}
\textsc{Chernozhukov, V., S.~Lee, and A.~M. Rosen} (2013):
  \enquote{Intersection bounds: Estimation and inference,} \emph{Econometrica},
  81, 667--737.

\bibitem[\protect\citeauthoryear{Comola and Dekel}{Comola and
  Dekel}{2026}]{comola2026estimating}
\textsc{Comola, M. and A.~Dekel} (2026): \enquote{Estimating network
  externalities in undirected link formation games,} \emph{Journal of Applied
  Econometrics}, published online, DOI 10.1002/jae.70079; volume/pages not yet
  assigned.

\bibitem[\protect\citeauthoryear{Davezies, D'Haultf{\oe}uille, and
  Guyonvarch}{Davezies et~al.}{2021}]{davezies2021empirical}
\textsc{Davezies, L., X.~D'Haultf{\oe}uille, and Y.~Guyonvarch} (2021):
  \enquote{Empirical process results for exchangeable arrays,} \emph{The Annals
  of Statistics}, 49, 845--862.

\bibitem[\protect\citeauthoryear{de~Paula}{de~Paula}{2020}]{de2020econometric}
\textsc{de~Paula, {\'A}.} (2020): \enquote{Econometric models of network
  formation,} \emph{Annual Review of Economics}, 12, 775--799.

\bibitem[\protect\citeauthoryear{de~Paula, Richards-Shubik, and Tamer}{de~Paula
  et~al.}{2018}]{de2018identifying}
\textsc{de~Paula, {\'A}., S.~Richards-Shubik, and E.~Tamer} (2018):
  \enquote{Identifying preferences in networks with bounded degree,}
  \emph{Econometrica}, 86, 263--288.

\bibitem[\protect\citeauthoryear{Dvoretzky, Kiefer, and Wolfowitz}{Dvoretzky
  et~al.}{1956}]{dvoretzky1956asymptotic}
\textsc{Dvoretzky, A., J.~Kiefer, and J.~Wolfowitz} (1956): \enquote{Asymptotic
  minimax character of the sample distribution function and of the classical
  multinomial estimator,} \emph{Annals of Mathematical Statistics}, 27,
  642--669.

\bibitem[\protect\citeauthoryear{Dzemski}{Dzemski}{2019}]{dzemski2019empirical}
\textsc{Dzemski, A.} (2019): \enquote{An empirical model of dyadic link
  formation in a network with unobserved heterogeneity,} \emph{Review of
  Economics and Statistics}, 101, 763--776.

\bibitem[\protect\citeauthoryear{Eagleson and Weber}{Eagleson and
  Weber}{1978}]{eagleson1978limit}
\textsc{Eagleson, G.~K. and N.~C. Weber} (1978): \enquote{Limit theorems for
  weakly exchangeable arrays,} \emph{Mathematical Proceedings of the Cambridge
  Philosophical Society}, 84, 73--80.

\bibitem[\protect\citeauthoryear{Epstein, Kaido, and Seo}{Epstein
  et~al.}{2016}]{epstein2016robust}
\textsc{Epstein, L.~G., H.~Kaido, and K.~Seo} (2016): \enquote{Robust
  confidence regions for incomplete models,} \emph{Econometrica}, 84,
  1799--1838.

\bibitem[\protect\citeauthoryear{Galichon and Henry}{Galichon and
  Henry}{2011}]{galichon2011set}
\textsc{Galichon, A. and M.~Henry} (2011): \enquote{Set identification in
  models with multiple equilibria,} \emph{The Review of Economic Studies}, 78,
  1264--1298.

\bibitem[\protect\citeauthoryear{Gao}{Gao}{2020}]{gao2020}
\textsc{Gao, W.~Y.} (2020): \enquote{Nonparametric Identification in Index
  Models of Link Formation,} \emph{Journal of Econometrics}, 215, 399--413.

\bibitem[\protect\citeauthoryear{Gao, Li, and Xu}{Gao
  et~al.}{2023}]{gao2023logical}
\textsc{Gao, W.~Y., M.~Li, and S.~Xu} (2023): \enquote{Logical differencing in
  dyadic network formation models with nontransferable utilities,}
  \emph{Journal of Econometrics}, 235, 302--324.

\bibitem[\protect\citeauthoryear{Gao, Li, and Xu}{Gao
  et~al.}{2026}]{gao2026tractable}
\textsc{Gao, W.~Y., M.~Li, and Z.~Xu} (2026): \enquote{Tractable Identification
  of Strategic Network Formation Models with Unobserved Heterogeneity,} ArXiv
  preprint arXiv:2603.08634.

\bibitem[\protect\citeauthoryear{Gao and Wang}{Gao and
  Wang}{2026}]{gao2026identification}
\textsc{Gao, W.~Y. and R.~Wang} (2026): \enquote{Identification in nonlinear
  dynamic panel models under partial stationarity,} \emph{Journal of
  Econometrics}, 253, 106185.

\bibitem[\protect\citeauthoryear{Graham}{Graham}{2017}]{graham2017econometric}
\textsc{Graham, B.~S.} (2017): \enquote{An Econometric Model of Network
  Formation with Degree Heterogeneity,} \emph{Econometrica}, 85, 1033--1063.

\bibitem[\protect\citeauthoryear{Graham}{Graham}{2020}]{graham2020network}
---\hspace{-.1pt}---\hspace{-.1pt}--- (2020): \enquote{Network data,} in
  \emph{Handbook of Econometrics}, ed. by S.~N. Durlauf, L.~P. Hansen, J.~J.
  Heckman, and R.~L. Matzkin, Amsterdam: North-Holland, vol.~7A, 111--218.

\bibitem[\protect\citeauthoryear{Graham and Pelican}{Graham and
  Pelican}{2020}]{grahampelican2020testing}
\textsc{Graham, B.~S. and A.~Pelican} (2020): \enquote{Testing for
  externalities in network formation using simulation,} in \emph{The
  Econometric Analysis of Network Data}, ed. by B.~S. Graham and
  {\'A}.~de~Paula, Cambridge, MA: Academic Press, 63--82, pages per publisher
  chapter listing --- re-verify at proof stage.

\bibitem[\protect\citeauthoryear{Gualdani}{Gualdani}{2021}]{gualdani2021identification}
\textsc{Gualdani, C.} (2021): \enquote{An econometric model of network
  formation with an application to board interlocks between firms,}
  \emph{Journal of Econometrics}, 224, 345--370.

\bibitem[\protect\citeauthoryear{Jackson and Watts}{Jackson and
  Watts}{2002}]{jackson2002evolution}
\textsc{Jackson, M.~O. and A.~Watts} (2002): \enquote{The evolution of social
  and economic networks,} \emph{Journal of economic theory}, 106, 265--295.

\bibitem[\protect\citeauthoryear{Jackson and Wolinsky}{Jackson and
  Wolinsky}{1996}]{jackson1996strategic}
\textsc{Jackson, M.~O. and A.~Wolinsky} (1996): \enquote{A Strategic Model of
  Social and Economic Networks,} \emph{Journal of economic theory}, 71, 44--74.

\bibitem[\protect\citeauthoryear{Jochmans}{Jochmans}{2018}]{jochmans2017semiparametric}
\textsc{Jochmans, K.} (2018): \enquote{Semiparametric Analysis of Network
  Formation,} \emph{Journal of Business \& Economic Statistics}, 36, 705–713.

\bibitem[\protect\citeauthoryear{Kaido and Zhang}{Kaido and
  Zhang}{2025}]{kaido2025universal}
\textsc{Kaido, H. and Y.~Zhang} (2025): \enquote{Universal Inference for
  Incomplete Discrete Choice Models,} ArXiv preprint arXiv:2501.17973.

\bibitem[\protect\citeauthoryear{Kasy, Linos, and Mobasseri}{Kasy
  et~al.}{2026}]{kasy2026causal}
\textsc{Kasy, M., E.~Linos, and S.~Mobasseri} (2026): \enquote{Causal inference
  for social network formation,} \emph{arXiv preprint arXiv:2604.17952}.

\bibitem[\protect\citeauthoryear{Kojevnikov, Marmer, and Song}{Kojevnikov
  et~al.}{2021}]{kojevnikov2021limit}
\textsc{Kojevnikov, D., V.~Marmer, and K.~Song} (2021): \enquote{Limit theorems
  for network dependent random variables,} \emph{Journal of Econometrics}, 222,
  882--908.

\bibitem[\protect\citeauthoryear{Leung}{Leung}{2015}]{leung2015twostep}
\textsc{Leung, M.~P.} (2015): \enquote{Two-step estimation of network-formation
  models with incomplete information,} \emph{Journal of Econometrics}, 188,
  182--195.

\bibitem[\protect\citeauthoryear{Leung}{Leung}{2019}]{leung2019treatment}
---\hspace{-.1pt}---\hspace{-.1pt}--- (2019): \enquote{A weak law for moments
  of pairwise-stable networks,} \emph{Journal of Econometrics}, 210, 310--326.

\bibitem[\protect\citeauthoryear{Leung}{Leung}{2022}]{leung2022causal}
---\hspace{-.1pt}---\hspace{-.1pt}--- (2022): \enquote{Causal inference under
  approximate neighborhood interference,} \emph{Econometrica}, 90, 267--293.

\bibitem[\protect\citeauthoryear{Leung and Moon}{Leung and
  Moon}{2026}]{leung2026normal}
\textsc{Leung, M.~P. and H.~R. Moon} (2026): \enquote{Normal approximation in
  large network models,} \emph{Review of Economic Studies}, forthcoming.

\bibitem[\protect\citeauthoryear{Li and Henry}{Li and
  Henry}{2026}]{li2026finite}
\textsc{Li, L. and M.~Henry} (2026): \enquote{Finite Sample Inference in
  Incomplete Models,} ArXiv preprint arXiv:2204.00473 (v4, June 2026).

\bibitem[\protect\citeauthoryear{Li, Shi, and Zheng}{Li
  et~al.}{2026}]{li2026bagging}
\textsc{Li, M., Z.~Shi, and Y.~Zheng} (2026): \enquote{Bagging the Network,}
  ArXiv preprint arXiv:2410.23852 (v3, May 2026).

\bibitem[\protect\citeauthoryear{Manski}{Manski}{1975}]{manski1975maximum}
\textsc{Manski, C.~F.} (1975): \enquote{Maximum Score Estimation of the
  Stochastic Utility Model of Choice,} \emph{Journal of Econometrics}, 3,
  205--228.

\bibitem[\protect\citeauthoryear{Manski}{Manski}{1985}]{manski1985semiparametric}
---\hspace{-.1pt}---\hspace{-.1pt}--- (1985): \enquote{Semiparametric Analysis
  of Discrete Response: Asymptotic Properties of the Maximum Score Estimator,}
  \emph{Journal of Econometrics}, 27, 313--333.

\bibitem[\protect\citeauthoryear{Manski}{Manski}{1987}]{manski1987semiparametric}
---\hspace{-.1pt}---\hspace{-.1pt}--- (1987): \enquote{Semiparametric Analysis
  of Random Effects Linear Models from Binary Panel Data,} \emph{Econometrica},
  55, 357--362.

\bibitem[\protect\citeauthoryear{Marshall}{Marshall}{2026}]{marshall2026utility}
\textsc{Marshall, J.} (2026): \enquote{The Econometrics of Utility
  Transferability in Dyadic Network Formation Models,} \emph{arXiv preprint
  arXiv:2603.25641}.

\bibitem[\protect\citeauthoryear{Massart}{Massart}{1990}]{massart1990tight}
\textsc{Massart, P.} (1990): \enquote{The tight constant in the
  {D}voretzky--{K}iefer--{W}olfowitz inequality,} \emph{Annals of Probability},
  18, 1269--1283.

\bibitem[\protect\citeauthoryear{Mastrandrea, Fournet, and Barrat}{Mastrandrea
  et~al.}{2015}]{mastrandrea2015contact}
\textsc{Mastrandrea, R., J.~Fournet, and A.~Barrat} (2015): \enquote{Contact
  Patterns in a High School: A Comparison between Data Collected Using Wearable
  Sensors, Contact Diaries and Friendship Surveys,} \emph{PLOS ONE}, 10,
  e0136497.

\bibitem[\protect\citeauthoryear{Mele}{Mele}{2017}]{mele2017dense}
\textsc{Mele, A.} (2017): \enquote{A Structural Model of Dense Network
  Formation,} \emph{Econometrica}, 85, 825--850.

\bibitem[\protect\citeauthoryear{Menzel}{Menzel}{2026}]{menzel2026strategic}
\textsc{Menzel, K.} (2026): \enquote{Strategic network formation with many
  agents,} \emph{Journal of Econometrics}, 253, 106174.

\bibitem[\protect\citeauthoryear{Miyauchi}{Miyauchi}{2016}]{miyauchi2016structural}
\textsc{Miyauchi, Y.} (2016): \enquote{Structural estimation of pairwise stable
  networks with nonnegative externality,} \emph{Journal of Econometrics}, 195,
  224--235.

\bibitem[\protect\citeauthoryear{Pelican and Graham}{Pelican and
  Graham}{2022}]{pelican2022optimal}
\textsc{Pelican, A. and B.~S. Graham} (2022): \enquote{An Optimal Test for
  Strategic Interaction in Social and Economic Network Formation between
  Heterogeneous Agents,} NBER Working Paper 27793; arXiv:2009.00212 (rev.\ May
  2022).

\bibitem[\protect\citeauthoryear{Puelz, Basse, Feller, and Toulis}{Puelz
  et~al.}{2022}]{puelz2022graph}
\textsc{Puelz, D., G.~Basse, A.~Feller, and P.~Toulis} (2022): \enquote{A
  graph-theoretic approach to randomization tests of causal effects under
  general interference,} \emph{Journal of the Royal Statistical Society: Series
  B (Statistical Methodology)}, 84, 174--204.

\bibitem[\protect\citeauthoryear{Ridder and Sheng}{Ridder and
  Sheng}{2025}]{ridder2025twostep}
\textsc{Ridder, G. and S.~Sheng} (2025): \enquote{Two-Step Estimation of a
  Strategic Network Formation Model with Clustering,} ArXiv preprint
  arXiv:2001.03838.

\bibitem[\protect\citeauthoryear{Rosen and Ura}{Rosen and
  Ura}{2025}]{rosen2025finite}
\textsc{Rosen, A.~M. and T.~Ura} (2025): \enquote{Finite sample inference for
  the maximum score estimand,} \emph{The Review of Economic Studies}, 92,
  4117--4151.

\bibitem[\protect\citeauthoryear{Rozemberczki, Allen, and Sarkar}{Rozemberczki
  et~al.}{2021}]{rozemberczki2021multiscale}
\textsc{Rozemberczki, B., C.~Allen, and R.~Sarkar} (2021): \enquote{Multi-Scale
  Attributed Node Embedding,} \emph{Journal of Complex Networks}, 9, cnab014.

\bibitem[\protect\citeauthoryear{Sheng}{Sheng}{2020}]{sheng2020structural}
\textsc{Sheng, S.} (2020): \enquote{A structural econometric analysis of
  network formation games through subnetworks,} \emph{Econometrica}, 88,
  1829--1858.

\bibitem[\protect\citeauthoryear{Tamer}{Tamer}{2003}]{tamer2003incomplete}
\textsc{Tamer, E.} (2003): \enquote{Incomplete simultaneous discrete response
  model with multiple equilibria,} \emph{The Review of Economic Studies}, 70,
  147--165.

\bibitem[\protect\citeauthoryear{Wu}{Wu}{2024}]{wu2024twostep}
\textsc{Wu, S.} (2024): \enquote{Two-step Estimation of Network Formation
  Models with Unobserved Heterogeneities and Strategic Interactions,} ArXiv
  preprint arXiv:2404.12581.

\bibitem[\protect\citeauthoryear{Yanchenko, Williams, and Martin}{Yanchenko
  et~al.}{2026}]{yanchenko2026universal}
\textsc{Yanchenko, E., J.~P. Williams, and R.~Martin} (2026):
  \enquote{Universal Inference for Model Selection on Networks,} ArXiv preprint
  arXiv:2606.30981.

\bibitem[\protect\citeauthoryear{Zeleneev}{Zeleneev}{2026}]{zeleneev2026identification}
\textsc{Zeleneev, A.} (2026): \enquote{Identification and Estimation of Network
  Models with Nonparametric Unobserved Heterogeneity,} ArXiv preprint
  arXiv:2602.06885.

\end{thebibliography}

\newpage{}

\appendix

\section{Proofs}

\label{app:proofs} 
\begin{proof}[\textbf{Proof of Lemma~\ref{lem:exo_middle}}]
Fix a dyad type $z_{D}$ and define, for $c\in\R$, 
\[
\hat{A}_{n,z_{D}}(c):=\frac{1}{N_{n}}\sum_{ij}\ind\{\varepsilon_{ij}\leq c\}\ind\{Z_{D,ij}=z_{D}\},\qquad\hat{A}^{-}_{n,z_{D}}(c):=\frac{1}{N_{n}}\sum_{ij}\ind\{\varepsilon_{ij}<c\}\ind\{Z_{D,ij}=z_{D}\},
\]
and 
\[
\hat{D}_{n,z_{D}}:=N^{-1}_{n}\sum_{ij}\ind\{Z_{D,ij}=z_{D}\}\equiv N^{-1}_{n}\hat{m}\left(z_{D}\right)
\]
so that $\hat{F}_{n}(c;z_{D})=\hat{A}_{n,z_{D}}(c)/\hat{D}_{n,z_{D}}$
whenever $\hat{D}_{n,z_{D}}>0$. Similarly, write $\hat{F}^{-}_{n}(c;z_{D}):=\hat{A}^{-}_{n,z_{D}}(c)/\hat{D}_{n,z_{D}}$.

We first establish pointwise convergence results at each $z_{D}$
and each $c$. Each of $\hat{A}_{n,z_{D}}(c)$, $\hat{A}^{-}_{n,z_{D}}(c)$,
and $\hat{D}_{n,z_{D}}$ is an average over the $N_{n}:=\binom{n}{2}$
dyads of a bounded kernel of the array $W_{ij}:=(Z_{i},Z_{j},\varepsilon_{ij})$.
Under Assumptions~\ref{ass:random_sampling}--\ref{ass:exogeneity}
this array is jointly exchangeable (i.e. the distribution is invariant
under relabeling of the agents) and dissociated (entries indexed by
disjoint agent sets are independent). The strong law of large numbers
for $U$-statistics of such arrays \citep[e.g. Theorem 3 of][]{eagleson1978limit}
therefore gives, almost surely, 
\[
\hat{A}_{n,z_{D}}(c)\ \to\ F_{\varepsilon}(c)\,q(z_{D}),\qquad\hat{A}^{-}_{n,z_{D}}(c)\ \to\ F_{\varepsilon}(c^{-})\,q(z_{D}),\qquad\hat{D}_{n,z_{D}}\ \to\ q(z_{D})>0,
\]
where $q(z_{D}):=\P(Z_{D,ij}=z_{D})$, $F_{\varepsilon}(c^{-}):=\lim_{u\uparrow c}F_{\varepsilon}(u)$
and the products on the right-hand side, e.g., 
\[
\E[\ind\{\varepsilon_{ij}\leq c\}\ind\{Z_{D,ij}=z_{D}\}]=F_{\varepsilon}(c)q(z_{D})
\]
follows from the independence between $\varepsilon$ and $Z$ (Assumption~\ref{ass:exogeneity}).
Hence, 
\begin{equation}
\hat{F}_{n}(c;z_{D})\to F_{\varepsilon}(c),\quad\hat{F}^{-}_{n}(c;z_{D})\to F_{\varepsilon}(c^{-})\text{ }\label{eq:conv_ptwise}
\end{equation}
for each fixed $c$, almost surely.

Next, we establish uniformity of the convergence of $\hat{F}_{n}(c;z_{D})$
over $c\in\R$ at each $z_{D}$. Fix an integer $M\geq2$ and let
$c_{j}:=\inf\{c:F_{\varepsilon}(c)\geq j/M\}$ for $j=1,\ldots,M-1$.
Each $c_{j}$ is finite and thus well-defined since $F_{\varepsilon}(c)\to0$
as $c\to-\infty$ and $F_{\varepsilon}(c)\to1$ as $c\to+\infty$.
Clearly, 
\[
F_{\varepsilon}(c_{j})\ \geq\ j/M\qquad\text{and}\qquad F_{\varepsilon}(c^{-}_{j})\ \leq\ j/M.
\]
Let $\Omega_{M,z_{D}}$ be the probability-one event that $\hat{D}_{n,z_{D}}>0$
eventually and that the pointwise convergence results \eqref{eq:conv_ptwise}
hold at every $c_{j}$, and set 
\[
\Delta_{n,z_{D}}:=\max_{1\leq j\leq M-1}\max\Big\{\big|\hat{F}_{n}(c_{j};z_{D})-F_{\varepsilon}(c_{j})\big|,\ \big|\hat{F}^{-}_{n}(c_{j};z_{D})-F_{\varepsilon}(c^{-}_{j})\big|\Big\},
\]
which converges to $0$ on $\Omega_{M,z_{D}}$. Now, fix any $c\in\R$
and let $j\in\{0,\ldots,M-1\}$ be such that $c_{j}\leq c<c_{j+1}$,
under the conventions $c_{0}:=-\infty$, $c_{M}:=+\infty$, $F_{\varepsilon}(c_{0})=\hat{F}_{n}(c_{0};z_{D}):=0$,
and $F_{\varepsilon}(c^{-}_{M})=\hat{F}^{-}_{n}(c_{M};z_{D}):=1$.
By the monotonicity of $c\mapsto\hat{F}_{n}(c;z_{D})$ and of $F_{\varepsilon}$,
we have 
\[
\hat{F}_{n}(c;z_{D})\ \leq\ \hat{F}^{-}_{n}(c_{j+1};z_{D})\ \leq\ F_{\varepsilon}(c^{-}_{j+1})+\Delta_{n,z_{D}}\ \leq\ \tfrac{j+1}{M}+\Delta_{n,z_{D}}\ \leq\ F_{\varepsilon}(c)+\tfrac{1}{M}+\Delta_{n,z_{D}},
\]
where the last inequality follows from $F_{\varepsilon}(c)\geq F_{\varepsilon}(c_{j})\geq j/M$.
Symmetrically 
\[
\hat{F}_{n}(c;z_{D})\ \geq\ \hat{F}_{n}(c_{j};z_{D})\ \geq\ F_{\varepsilon}(c_{j})-\Delta_{n,z_{D}}\ \geq\ \tfrac{j}{M}-\Delta_{n,z_{D}}\ \geq\ F_{\varepsilon}(c)-\tfrac{1}{M}-\Delta_{n,z_{D}},
\]
where the last inequality uses $F_{\varepsilon}(c)\leq F_{\varepsilon}(c^{-}_{j+1})\leq(j+1)/M$.
Hence, 
\[
\sup_{c\in\R}\big|\hat{F}_{n}(c;z_{D})-F_{\varepsilon}(c)\big|\ \leq\ \Delta_{n,z_{D}}+\tfrac{1}{M}\qquad\text{on }\Omega_{M,z_{D}}.
\]

Finally, we establish the uniformity over $z_{D}\in{\cal Z}_{D}$,
with ${\cal Z}_{D}$ being a finite set with $q(z_{D})>0$ at every
$z_{D}$ by Definition~\ref{def:disc}. For each fixed $M\geq2$,
let $\overline{\Omega}_{M}$ now denote the intersection of the probability-one
events $\Omega_{M,z_{D}}$ over the finitely many $z_{D}\in{\cal Z}_{D}$.
Then $\overline{\Omega}_{M}$ still occurs with probability one. Let
$\Delta_{n,M}:=\max_{z_{D}\in{\cal Z}_{D}}\Delta_{n,z_{D}}$. Then
still $\Delta_{n,M}\to0$ on $\overline{\Omega}_{M}$ as $n\to\infty$,
and furthermore we have 
\[
\max_{z_{D}\in\mathcal{Z}_{D}}\sup_{c\in\R}\big|\hat{F}_{n}(c;z_{D})-F_{\varepsilon}(c)\big|\ \leq\ \Delta_{n,M}+\tfrac{1}{M}\:\text{on }\overline{\Omega}_{M}.
\]
Since $M$ is an arbitrary integer, $\bigcap_{M\geq2}\overline{\Omega}_{M}$
is still a probability-one event, under which 
\[
\max_{z_{D}\in\mathcal{Z}_{D}}\sup_{c\in\R}\big|\hat{F}_{n}(c;z_{D})-F_{\varepsilon}(c)\big|\ \to0,\quad\text{as }n\to\infty.
\]
\end{proof}

\begin{proof}[\textbf{Proof of Theorem~\ref{thm:single_pop}}]
Taking $\limsup_{n}$ on the left of \eqref{eq:empirical_sandwich}
and applying Lemma~\ref{lem:exo_middle}, 
\[
p^{\infty}_{L}(z_{D},c;\theta_{0}):=\limsup_{n}\ \hat{p}^{\,(n)}_{L}(z_{D},c;\theta_{0})\ \leq\ \limsup_{n}\ \hat{F}_{n}(c;z_{D})\ =\ F_{\varepsilon}(c),
\]
and symmetrically 
\[
p^{\infty}_{U}(z_{D},c;\theta_{0})=\liminf_{n}\hat{p}^{\,(n)}_{U}(z_{D},c;\theta_{0})\geq\liminf_{n}\hat{F}_{n}(c;z_{D})=F_{\varepsilon}(c)
\]
which together imply that 
\[
p^{\infty}_{L}(z_{D},c;\theta_{0})\leq F_{\varepsilon}(c)\leq p^{\infty}_{U}(z_{D},c;\theta_{0})
\]
for all $c\in\R$ and for all $z_{D}$. Then 
\[
\sup_{z_{D}}p^{\infty}_{L}(z_{D},c;\theta_{0})\leq F_{\varepsilon}(c)\leq\inf_{z_{D}}p^{\infty}_{U}(z_{D},c;\theta_{0}),
\]
which yields part (i). Part (ii) then follows by inserting $F_{\varepsilon}(\cdot\,;\theta_{\varepsilon,0})=F_{\varepsilon}$
as the middle term. 
\end{proof}

\begin{proof}[\textbf{Proof of Lemma~\ref{lem:dom_semi}}]
Fix any $c\in\R$. The finite-network sandwich \eqref{eq:empirical_sandwich}
holds cell by cell at $\theta_{0}$: 
\[
\hat{p}^{\,(n)}_{L}(z_{D},c;\theta_{0})\ \leq\ \hat{F}_{n}(c;z_{D})\ \leq\ \hat{p}^{\,(n)}_{U}(z_{D},c;\theta_{0})\qquad\text{for every }z_{D}\in\hat{{\cal Z}}_{D}
\]
Maximizing the left inequality and minimizing the right one over $z_{D}\in\hat{{\cal Z}}_{D}$
yields 
\[
\max_{z_{D}}\hat{p}^{\,(n)}_{L}(z_{D},c;\theta_{0})\ \leq\ \max_{z_{D}}\hat{F}_{n}(c;z_{D}),\qquad\min_{z_{D}}\hat{p}^{\,(n)}_{U}(z_{D},c;\theta_{0})\ \geq\ \min_{z_{D}}\hat{F}_{n}(c;z_{D}),
\]
and hence 
\[
\max_{z_{D}}\hat{p}^{\,(n)}_{L}(z_{D},c;\theta_{0})-\min_{z_{D}}\hat{p}^{\,(n)}_{U}(z_{D},c;\theta_{0})\ \leq\ \max_{z_{D}}\hat{F}_{n}(c;z_{D})-\min_{z_{D}}\hat{F}_{n}(c;z_{D}).
\]
The right-hand side is at most $R_{n}$ by the definition \eqref{eq:Rn_range}
and is nonnegative. Hence, taking the supremum over $c$ and applying
the positive part function preserves the bound, implying $\widehat{Q}_{n}(\theta_{0})\leq R_{n}$. 
\end{proof}

\begin{proof}[\textbf{Proof of Theorem~\ref{thm:semi_CS}}]
We condition on a fixed realization of $Z=\left(Z_{i}\right)^{n}_{i=1}$,
and suppress the qualifier ``almost surely'' subsequently. Then,
$\hat{{\cal Z}}_{D}$ and the cell size $\hat{m}(z_{D})$ are all
fixed. Let $(V_{ij})_{ij}$ be i.i.d. $U\left(0,1\right)$ random
variables, independent of everything else, and define the \emph{distributional
transform} 
\[
U_{ij}:=F_{\varepsilon}(\varepsilon^{-}_{ij})+V_{ij}\big[F_{\varepsilon}(\varepsilon_{ij})-F_{\varepsilon}(\varepsilon^{-}_{ij})\big].
\]
When $F_{\varepsilon}$ is continuous, the above specializes to the
standard probability integral transform $U_{ij}:=F_{\varepsilon}(\varepsilon_{ij})$.
Regardless of whether $F_{\varepsilon}$ is continuous or not, it
is a standard result that $U_{ij}\sim\mathrm{Unif}[0,1]$ and $\varepsilon_{ij}=F^{-1}_{\varepsilon}(U_{ij})$,
where $F^{-1}_{\varepsilon}(u):=\inf\{x:F_{\varepsilon}(x)\geq u\}$
is the generalized inverse of $F_{\varepsilon}$. Using the fact that
$F^{-1}_{\varepsilon}(u)\leq c\iff u\leq F_{\varepsilon}(c)$, we
have $\ind\{\varepsilon_{ij}\leq c\}=\ind\{U_{ij}\leq F_{\varepsilon}(c)\}$
and hence 
\[
\hat{F}_{n}(c;z_{D})=G_{z_{D}}\big(F_{\varepsilon}(c)\big),
\]
recalling that $G_{z_{D}}$ is defined as the empirical distribution
function of $\hat{m}(z_{D})$ i.i.d. uniform random variables at cell
$z_{D}$. The independence follows from the fact that each cell contains
different dyads across which $\varepsilon_{ij}$ is assumed to be
independent (Assumption \ref{ass:iid_errors}). Hence, $G_{z_{D}}$
therefore has exactly the same conditional distribution as the simulated
$G^{(b)}_{z_{D}}$ . Substituting $\hat{F}_{n}(c;z_{D})=G_{z_{D}}\big(F_{\varepsilon}(c)\big)$
into \eqref{eq:Rn_range} and writing ${\cal R}:=F_{\varepsilon}(\R)\subseteq[0,1]$
for the range of $F_{\varepsilon}$, we have 
\[
R_{n}=\sup_{u\in{\cal R}}\Big[\max_{z_{D}\in\hat{{\cal Z}}_{D}}G_{z_{D}}(u)-\min_{z_{D}\in\hat{{\cal Z}}_{D}}G_{z_{D}}(u)\Big]\ \leq\ \sup_{u\in[0,1]}\Big[\max_{z_{D}\in\hat{{\cal Z}}_{D}}G_{z_{D}}(u)-\min_{z_{D}\in\hat{{\cal Z}}_{D}}G_{z_{D}}(u)\Big]\ =:\ R^{\ast}.
\]
The simulated $R^{(b)}$ then shares exactly the same distribution
as $R^{\ast}$. If $F_{\varepsilon}$ is continuous, then $R_{n}=R^{\ast}$.
When $F_{\varepsilon}$ has atoms, $R_{n}\neq R^{\ast}$ in general,
but $R_{n}\leq R^{\ast}$ is always true and thus $R^{\ast}$ remains
a valid device to control size as shown below.

Formally, since $R^{\ast}\sim R^{(b)}$, we have 
\[
\Prob\big(R^{\ast}>q^{\,R}_{n,1-\alpha}(Z)\,\big|\,Z\big)\ \leq\ \alpha,
\]
and hence, using $\widehat{Q}_{n}(\theta_{0})\leq R_{n}\leq R^{\ast}$
from Lemma~\ref{lem:dom_semi}, we have 
\[
\Prob\big(\widehat{Q}_{n}(\theta_{0})>q^{\,R}_{n,1-\alpha}(Z)\,\mid Z\big)\leq\Prob\big(R^{\ast}>q^{\,R}_{n,1-\alpha}(Z)\,\big|\,Z\big)\ \leq\alpha.
\]

In fact, as discussed in Footnote \ref{fn:finite-B}, we can deliver
exact finite-$B$ guarantee using the alternative critical value $\hat{q}^{\,R}_{n,1-\alpha}(Z)$
instead of $q^{\,R}_{n,1-\alpha}(Z)$. Specifically, $\hat{q}^{\,R}_{n,1-\alpha}(Z)$
is defined as the $\lceil(1-\alpha)(B+1)\rceil$-th smallest value
in the simulated sample $R^{(1)},\dots,R^{(B)}$. Since $(R^{\ast},R^{(1)},\dots,R^{(B)})$
is conditionally i.i.d. given $Z$, the rank of $R^{\ast}$ in $(R^{\ast},R^{(1)},\dots,R^{(B)})$
is uniform on the discrete grid $\{1,\dots,B+1\}$ up to ties. Hence,
\[
\Prob\big(R^{\ast}>\hat{q}^{\,R}_{n,1-\alpha}(Z)\,\big|\,Z\big)\ \leq\ \frac{B+1-\lceil(1-\alpha)(B+1)\rceil}{B+1}\ \leq\ \alpha,
\]
where any ties only reduce the left-hand side further. Combining with
$\widehat{Q}_{n}(\theta_{0})\leq R_{n}\leq R^{\ast}$ again delivers
the conditional coverage guarantee. 
\end{proof}

\begin{proof}[\textbf{Proof of Proposition~\ref{prop:cdkw}}]
Condition on $Z$ and take any two cells $z_{D},z^{\prime}_{D}\in\hat{{\cal Z}}_{D}$.
For any $c$, 
\[
\hat{F}_{n}(c;z_{D})-\hat{F}_{n}(c;z^{\prime}_{D})=\big[\hat{F}_{n}(c;z_{D})-F_{\varepsilon}(c)\big]-\big[\hat{F}_{n}(c;z^{\prime}_{D})-F_{\varepsilon}(c)\big]\ \leq\ 2\tilde{R}_{n},
\]
where 
\[
\tilde{R}_{n}:=\max_{z_{D}\in\hat{{\cal Z}}_{D}}\sup_{c\in\R}\big|\hat{F}_{n}(c;z_{D})-F_{\varepsilon}(c)\big|.
\]
Obviously, $R_{n}\leq2\tilde{R}_{n}$ and hence 
\[
\Prob(R_{n}>t\mid Z)\leq\Prob(\tilde{R}_{n}>t/2\mid Z).
\]

As before, $\hat{F}_{n}(\cdot\,;z_{D})$ is the empirical distribution
function of $\hat{m}(z_{D})$ i.i.d.\ draws from $F_{\varepsilon}$
in cell $z_{D}$. The Dvoretzky--Kiefer--Wolfowitz inequality \citep{dvoretzky1956asymptotic,massart1990tight},
which holds for every distribution function, gives 
\[
\Prob\big(\sup_{c}|\hat{F}_{n}(c;z_{D})-F_{\varepsilon}(c)|>s\mid Z\big)\leq2e^{-2\hat{m}(z_{D})s^{2}}
\]
for each cell $z_{D}$. Using the union bound over all cells $z_{D}\in\hat{{\cal Z}}_{D}$
yields 
\[
\Prob(\tilde{R}_{n}>s\mid Z)\leq\sum_{z_{D}\in\hat{{\cal Z}}_{D}}2e^{-2\hat{m}(z_{D})s^{2}}.
\]
Setting $s:=t/2$, we then have 
\[
\Prob\big(R_{n}>t\mid Z\big)\ \leq\ \Prob(\tilde{R}_{n}>t/2\mid Z)\ \leq\sum_{z_{D}\in\hat{{\cal Z}}_{D}}2e^{-2\hat{m}(z_{D})(t/2)^{2}}\ =\ \sum_{z_{D}\in\hat{{\cal Z}}_{D}}2e^{-\hat{m}(z_{D})t^{2}/2}.
\]
By the definition of $\hat{t}_{n}(\alpha;Z)$, 
\[
\Prob\big(R_{n}>\hat{t}_{n}(\alpha;Z)\mid Z\big)\ \leq\sum_{z_{D}\in\hat{{\cal Z}}_{D}}2e^{-\hat{m}(z_{D})\hat{t}^{2}_{n}(\alpha;Z)/2}=\alpha.
\]
Combining the above with $\widehat{Q}_{n}(\theta_{0})\leq R_{n}$
from Lemma~\ref{lem:dom_semi} again yields the coverage guarantee. 
\end{proof}

\begin{proof}[\textbf{Proof of Lemma~\ref{lem:dom_par}}]
Condition on $Z$. Fix a cell $z_{D}$ and a constant $c\in\R$ as
before. Write $F_{\varepsilon}(c)=F_{\varepsilon}(c;\theta_{\varepsilon,0})$
in short for the true CDF. By the sandwich inequality \eqref{eq:empirical_sandwich}
and the definition of $R^{par}_{n}\left(\theta_{\varepsilon,0}\right)$,
for every $z_{D}\in\hat{{\cal Z}}_{D}$, 
\begin{align*}
\hat{p}^{\,(n)}_{L}(z_{D},c;\theta_{0})-F_{\varepsilon}(c) & \ \leq\ \hat{F}_{n}(c;z_{D})-F_{\varepsilon}(c)\ \leq\ R^{par}_{n}\left(\theta_{\varepsilon,0}\right),\\
F_{\varepsilon}(c)-\hat{p}^{\,(n)}_{U}(z_{D},c;\theta_{0}) & \ \leq\ F_{\varepsilon}(c)-\hat{F}_{n}(c;z_{D})\ \leq\ R^{par}_{n}\left(\theta_{\varepsilon,0}\right).
\end{align*}
Taking maxima over $z_{D}$ and suprema over $c$ as before yields
the results. 
\end{proof}

\begin{proof}[\textbf{Proof of Theorem~\ref{thm:para_CS}}]
Condition on $Z$. As before, by Assumption~\ref{ass:exogeneity},
the real shock array $\{\varepsilon_{ij}\}_{ij}$ and the $B$ simulated
arrays $\{\varepsilon^{(b)}_{ij}\}_{ij}$ are i.i.d. draws from the
same distribution given $Z$, so $R^{par}_{n}\left(\theta_{\varepsilon,0}\right)$
and the simulated $R^{par,(b)}_{n}(\theta_{\varepsilon,0})$ share
the same distribution. Combined with Lemma \ref{lem:dom_par}, the
rest of the proof proceeds in the same way as in the proof of Theorem
\ref{thm:semi_CS}. 
\end{proof}

\begin{proof}[\textbf{Proof of Theorem~\ref{prop:gendom}}]
At the true (or true under the null) $\theta_{0}$, the sandwich
\eqref{eq:empirical_sandwich} gives $V^{\pm}(z_{D},c;\overline{\theta}_{0})\leq R^{\pm}(z_{D},c;\theta_{\varepsilon,0})$
for every coordinate and every realization, so monotonicity of $T_{Z,\overline{\theta}_{0}}$
in each argument yields $T_{Z,\overline{\theta}_{0}}\big(V(\overline{\theta}_{0})\big)\leq T_{Z,\overline{\theta}_{0}}\big(R(\theta_{\varepsilon,0})\big)$.
Conditional on $Z$, the realized $T_{Z,\overline{\theta}_{0}}(R)$
and its simulated counterparts $T_{Z,\overline{\theta}_{0}}(R^{(1)}),\dots,T_{Z,\overline{\theta}_{0}}(R^{(B)})$
are again i.i.d., each distributed as $T_{Z,\overline{\theta}_{0}}\big(R(\theta_{\varepsilon,0})\big)$.
The rest of the proof is exactly the same as that of Theorem \ref{thm:semi_CS}. 
\end{proof}

\section{Semiparametric Inference Based on Symmetry Restrictions}

\label{app:sym}

The semiparametric inference procedure developed in Section \ref{subsec:semi_cs}
can also be adapted to incorporate nonparametric shape restrictions
on $F_{\varepsilon}$. In this appendix section we focus on \emph{symmetry}
restrictions. For simplicity, we restrict attention here to the case
where $F_{\varepsilon}$ is continuous.

Suppose $F_{\varepsilon}$ is \emph{symmetric about $0$}, i.e., 
\[
F_{\varepsilon}\left(-c\right)=1-F_{\varepsilon}\left(c\right),\quad\forall c\in\R.
\]
Note that once symmetry is imposed, setting the center to be $0$
is without loss of generality as a location normalization.\footnote{That said, given the symmetry-about-0 assumption, a constant term
needs to be included in $Z_{ij}$ so that $\theta_{0}$ contains a
free intercept parameter.} Given this observation, any parametric assumption that restricts
$F_{\varepsilon}$ to a known symmetric family of distributions, including
many commonly used families such as normal, logistic, Laplace, and
uniform, becomes a strict special case of the nonparametric symmetry
assumption above.

Condition on $Z$. Write $\hat{a}(c):=\max_{z_{D}\in\hat{{\cal Z}}_{D}}\hat{p}^{\,(n)}_{L}(z_{D},c;\theta)$
and $\hat{b}(c):=\max_{z_{D}\in\hat{{\cal Z}}_{D}}\big[1-\hat{p}^{\,(n)}_{U}(z_{D},c;\theta)\big]$.
Define 
\begin{equation}
\widehat{Q}^{\mathrm{sym}}_{n}(\theta):=\sup_{c}\Big[\max\big\{\hat{a}(c)+\hat{b}(c),\ \hat{a}(c)+\hat{a}(-c),\ \hat{b}(c)+\hat{b}(-c)\big\}-1\Big]_{+}.\label{eq:Qsym_n}
\end{equation}
The first term, corresponding to $\hat{a}(c)+\hat{b}(c)\leq1$, encodes
the semiparametric restriction already used in Section \ref{subsec:semi_cs},
while the other two are the mirror restrictions $\hat{a}(c)+\hat{a}(-c)\leq1$
and $\hat{b}(c)+\hat{b}(-c)\leq1$ motivated by the symmetry restriction
$F_{\varepsilon}(-c)=1-F_{\varepsilon}(c)$.

Then at $\theta_{0}$ each of the three terms is dominated, realization
by realization, by the corresponding functional of the per-cell uniform
empirical CDFs $G_{z_{D}}$ evaluated at $u=F_{\varepsilon}(c)$ \emph{and
at its mirror} $1-u$. Writing $D_{z_{D}}(u):=G_{z_{D}}(u)-u$, we
have: 
\begin{align*}
\hat{a}(c)+\hat{b}(c)-1 & \leq\max_{z_{D}}G_{z_{D}}(u)-\min_{z_{D}}G_{z_{D}}(u)\\
\hat{a}(c)+\hat{a}(-c)-1 & \leq\max_{z_{D}}D_{z_{D}}(u)+\max_{z_{D}}D_{z_{D}}(1-u)\\
\hat{b}(c)+\hat{b}(-c)-1 & \leq-\min_{z_{D}}D_{z_{D}}(u)-\min_{z_{D}}D_{z_{D}}(1-u)
\end{align*}
All three follow from the sandwich \eqref{eq:empirical_sandwich}
together with the identity $\hat{F}_{n}(c;z_{D})=G_{z_{D}}\left(F_{\varepsilon}(c)\right)$
established in the proof of Theorem~\ref{thm:semi_CS}, which holds
exactly and therefore calls for no left limits; for the last two we
also use $F_{\varepsilon}(-c)=1-u$, after which the terms $u$ and
$1-u$ cancel against the $-1$ on the left. Note that the first right-hand
side is exactly the cross-cell range in \eqref{eq:Rn_range} evaluated
at $u$, so the symmetry rung nests the procedure of Section~\ref{subsec:semi_cs}
and simply adds the two mirror terms.

The right-hand sides, along with their suprema over $u$, again have
conditional distributions (given $Z$) that do not depend on $F_{\varepsilon}$
and can therefore be simulated exactly as in Section~\ref{subsec:semi_cs}:
draw $\hat{m}(z_{D})$ i.i.d.\ uniform RVs in each cell to form $G^{(b)}_{z_{D}}$,
and take maximum over $u\in[0,1]$. This again can be used to construct
level-$(1-\alpha)$ conditional confidence sets by test inversion.
\end{document}